\documentclass[preprint,11pt]{article}

\usepackage{amssymb,amsfonts,amsmath,amsthm,amscd,dsfont,mathrsfs,bm}
\usepackage{graphicx,float,psfrag,epsfig}
\usepackage{wrapfig}
\usepackage{relsize}
\usepackage{hyperref}
\usepackage{color}
\usepackage{pict2e}
\usepackage[tight]{subfigure}
\usepackage{multirow}
\usepackage{array}
\usepackage{caption}

\DeclareMathAlphabet{\mathpzc}{OT1}{pzc}{m}{it}

\footnotesep 14pt
\floatsep 27pt plus 2pt minus 4pt      
\textfloatsep 40pt plus 2pt minus 4pt
\intextsep 27pt plus 4pt minus 4pt

\topmargin -0.2in  \headsep 0.15in  \textheight 8.5in \oddsidemargin
-0.07in  \evensidemargin -0.07in  \textwidth 6.6in

\newtheorem{propo}{Proposition}[section]
\newtheorem{lemma}[propo]{Lemma}
\newtheorem{definition}[propo]{Definition}
\newtheorem{coro}[propo]{Corollary}
\newtheorem{thm}[propo]{Theorem}

\newtheorem{remark}[propo]{Remark}
\newtheorem{claim}[propo]{Claim}
\newtheorem{rclaim}[propo]{Replica Method Claim}

\newcommand{\mtx}[1]{\bm{#1}}


\def\ed{\stackrel{\rm d}{=}}
\def\Det{{\rm Det}}

\def\tx{\widetilde{x}}

\def\cZ{{\cal Z}}
\def\cF{{\cal F}}

\def\cR{{\cal R}}

\def\cS{{\cal S}}

\def\naturals{{\mathbb N}}

\def\reals{{\mathbb R}}

\def\ve{{\varepsilon}}
\def\eps{{\varepsilon}}
\def\prob{{\mathbb P}}
\def\E{{\mathbb E}}

\def\scale{{\sf d}}
\def\tscale{\tilde{\sf d}}
\def\UB{{\tiny \rm UB}}
\def\LB{{\tiny \rm LB}}

\def\normal{{\sf N}}

\def\Map{{\sf F}}

\def\Energy{{\sf E}}

\def\L0{{L_0}}

\def\de{{\rm d}}
\def\<{\langle}
\def\>{\rangle}

\def \div{{\rm div}\,}

\def\htheta{\widehat{\theta}}
\def\hSigma{\widehat{\Sigma}}

\def\supp{{\rm supp}}

\def\ind{{\mathbb I}}

\def \Tr{{\rm Trace}}

\def\normal{{\sf N}}

\def\P{{\mathbb{P}}}

\def\sT{{\sf T}}
\def\Free{{\mathfrak F}}
\def\tg{\widetilde{g}}
\def\hxi{\widehat{\xi}}
\def\tJ{\widetilde{J}}
\def\ttJ{\widetilde{\widetilde{J}}}
\def\Energy{{\mathfrak{E}}}
\def\Ena{{\sf E}_1}
\def\Enb{{\sf E}_2}
\def\D{{\rm D}}
\def\id{{\rm I}}
\def\rank{{\rm rank}}
\def \sdl{\ensuremath{\text{\textsc{SDL-test}}}}
\def \ldpe{\ensuremath{\text{\textsc{LDPE}}}}
\def \Subroutine {\ensuremath{\text{\textsc{Subroutine}}}}

\def\bbin{\beta^{\rm bin}}
\def\bopt{\beta^{\rm opt}}
\def\bora{\beta^{\rm oracle}}
\def\proj{{\rm{\bf P}}}
\def\projp{{\rm{\bf P}}^{\perp}}

\def\bu{{\mtx{u}}}
\def\bv{{\mtx{v}}}
\def\by{{\mtx{y}}}
\def\bx{{\mtx{x}}}
\def\bX{{\mtx{X}}}
\def\bR{{\mtx{R}}}
\def\bz{{\mtx{z}}}

\def\bSigma{\mtx{\Sigma}}
\def\bOmega{\mtx{K}}

\def\hbSigma{\mtx{\hSigma}}
\def\cN{{\cal N}}

\def\bh{{\mtx{h}}}

\definecolor{olivegreen}{rgb}{0,0.6,0.4}
\definecolor{darkblue}{rgb}{0,0.2,0.8}
\newcommand{\ajcomment}[1]{}


\title{Hypothesis Testing in High-Dimensional Regression
under the Gaussian Random Design Model: Asymptotic Theory}

\author{Adel Javanmard${}^{*}$ and Andrea~Montanari 
            \footnote{Department of Electrical Engineering, Stanford University}
            \footnote{Department of Statistics, Stanford University}}


\begin{document}

\maketitle

\begin{abstract}
We consider linear regression in the high-dimensional regime where
the number of observations $n$ is smaller than the number of parameters
$p$. A very successful approach in this setting uses $\ell_1$-penalized
least squares (a.k.a. the Lasso) to search for a subset of $s_0< n$  parameters
that best explain the data, while setting the other parameters to zero.
Considerable amount of work has been devoted to characterizing the
estimation and model selection problems within this approach. 

In this paper we consider instead the fundamental, but  far less
understood, question of \emph{statistical significance}. 
More precisely, we address the problem of computing p-values for single regression
coefficients.

On one hand, we develop  a general  upper bound on the minimax
power of tests with a given significance level. We show that
rigorous guarantees for earlier methods do not allow to achieve this
bound, except in special cases.
On the other, we prove that this upper bound is (nearly) achievable through a
practical procedure in the case of random design matrices with
independent entries. Our approach is based on a debiasing of the Lasso
estimator. The analysis builds on a rigorous
characterization of the asymptotic distribution of the Lasso estimator
and its debiased version. Our result holds for optimal sample size, i.e.,
when $n$ is at least on the order of $s_0 \log(p/s_0)$.

We generalize our approach to  random design matrices with
i.i.d. Gaussian rows  
$\mtx{x}_i\sim \normal(0,\mtx{\Sigma})$. In this case we prove that a
similar distributional characterization 
(termed `standard distributional limit') holds 
for $n$ much larger than $s_0(\log p)^2$. 
Our analysis assumes $\bSigma$ is known.
To cope with unknown $\bSigma$, we suggest a plug-in estimator for sparse covariances $\mtx{\Sigma}$
and validate the method through numerical simulations.

Finally, we show that for optimal sample size, $n$ being at least of order $s_0 \log(p/s_0)$,
the standard distributional limit for general Gaussian
designs can be derived from the replica heuristics  in statistical physics.
This derivation suggests a stronger conjecture than the result we
prove, and near-optimality of the statistical power for a large
class of Gaussian designs.
\end{abstract}


\section{Introduction}
\label{sec:Introduction}

The Gaussian random design model for linear regression is defined as
follows. We are given $n$ i.i.d. pairs $(y_1,\mtx{x}_1)$, $(y_2,\mtx{x}_2)$, $\cdots$,
$(y_n,\mtx{x}_n)$ with $y_i\in \reals$ and $\mtx{x}_i\in\reals^p$, $\mtx{x}_i\sim
\normal(0,\mtx{\Sigma})$ for some covariance matrix $\mtx{\Sigma}\succ
0$.  Further, $y_i$ is a linear function of  $\mtx{x}_i$, plus noise
\begin{eqnarray}\label{eqn:regression}
y_i \,=\, \<\mtx{\theta}_0,\mtx{x}_i\> + w_i\, ,\;\;\;\;\;\;\;\; w_i\sim
\normal(0,\sigma^2)\, .
\end{eqnarray}
Here $\mtx{\theta}_0\in\reals^p$ is a vector of parameters to be estimated and
$\<\,\cdot\,,\,\cdot\,\>$ is the standard scalar product. 
The special case $\mtx{\Sigma}= \id_{p\times p}$ is usually referred to as
`standard' Gaussian design model.

In matrix form,
letting  $\mtx{y} = (y_1,\dots,y_n)^\sT$ and denoting by $\bX$ the matrix with
rows $\mtx{x}_1^\sT$,$\cdots$, $\mtx{x}_n^\sT$ we have
\begin{eqnarray}\label{eq:NoisyModel}
\mtx{y}\, =\, \bX\,\mtx{\theta}_0+ \mtx{w}\, ,\;\;\;\;\;\;\;\; \mtx{w}\sim
\normal(0,\sigma^2 \id_{n\times n})\, .
\end{eqnarray}
We are interested in high-dimensional settings where the number of
parameters exceeds the sample size, i.e., $p > n$, but the number of
non-zero entries of $\mtx{\theta}_0$ (to be denoted by $s_0$) is smaller than $p$.
In this situation, a recurring problem is to select the non-zero entries
of $\mtx{\theta}_0$ that hence can provide a succinct
explanation of the data. The vast literature on this topic is
briefly overviewed in Section \ref{sec:Related}. 

The Gaussian design assumption arises naturally in
some important applications. Consider
for instance the problem of learning a high-dimensional Gaussian
graphical model from data. In this case we are given i.i.d. samples
$\mtx{z}_1,\mtx{z}_2,\dots,\mtx{z}_n\sim \normal(0,\mtx{K}^{-1})$, with
$\mtx{K}$ a sparse positive definite matrix whose non-zero entries
encode the underlying graph structure. As  first shown by Meinshausen
and B{\"u}hlmann \cite{MeinshausenBuhlmann}, the $i$-th row of
$\mtx{K}$ can be estimated by performing linear
regression of the $i$-th entry of the samples $\bz_{1},\bz_{2},\dots,\bz_{n}$ onto the other
entries \cite{ren2013asymptotic}. This reduces the problem to a
high-dimensional regression model under Gaussian designs.
Standard Gaussian designs were also shown to provide useful insights
for compressed sensing applications \cite{Do,DoTa05,DoTa08,DoTa10}.

\vspace{0.3cm}  

In  statistics and signal processing applications, it is unrealistic to assume that the set
of nonzero entries  of $\mtx{\theta}_0$ can be determined with absolute
certainty. 
The present paper focuses on the problem of quantifying the
\emph{uncertainty} associated  to the entries of $\mtx{\theta}_0$. 
More specifically, we are interested in testing null-hypotheses of the form:
\begin{eqnarray}
H_{0,i} :\;\;\; \theta_{0,i} = 0,\label{eq:Hypothesis}
\end{eqnarray}   
for $i\in [p]\equiv \{1,2,\dots,p\}$
and assigning p-values for these tests. Rejecting $H_{0,i}$ is
equivalent to stating that $\theta_{0,i}\neq 0$. 

Any hypothesis testing procedure faces two types of errors:
false positives or type I errors (incorrectly rejecting $H_{0,i}$, while
$\theta_{0,i}=0$), and false negatives or type II errors (failing to reject $H_{0,i}$, while
$\theta_{0,i}\neq 0$). The probabilities of these two
types of errors will be denoted, respectively, by  $\alpha$ and $\beta$
(see Section~\ref{subsec:guaranteed} for a more precise definition).  The quantity
$1-\beta$ is also referred to as the power of the test, and $\alpha$ as its
significance level. It is trivial to achieve $\alpha$ arbitrarily
small if we allow for $\beta=1$ (never reject $H_{0,i}$) or $\beta$
arbitrarily small if we allow for $\alpha=1$ (always reject
$H_{0,i}$). This paper aims at optimizing the trade-off between power $1-\beta$
and significance $\alpha$.

Without further assumptions on the problem structure, the trade-off is
trivial and no non-trivial lower bound on $1-\beta$ can be established. Indeed we can take $\theta_{0,i}\neq 0$
arbitrarily close to $0$, thus making $H_{0,i}$ in practice
indistinguishable from its complement. We will therefore assume that,
whenever $\theta_{0,i}\neq 0$, we have $|\theta_{0,i}|>\mu$ as well. 
The smallest value of $\mu$ such that the power and significance reach
some fixed non-trivial value (e.g., $\alpha = 0.05$ and $1-\beta\ge
0.9$) has a particularly compelling interpretation, and provides an
answer to the following question:
What is the minimum magnitude of $\theta_{0,i}$ to be able to
distinguish it from the noise level, with a given degree of
confidence?

More precisely, we are interested in establishing necessary and sufficient
conditions on $n$, $p$, $s_0$, $\sigma$ and $\mu$ such that a given
significance level $\alpha$, and power $1-\beta$ can be achieved in testing $H_{0,i}$ for
all coefficient vectors $\mtx{\theta}_0$ that are $s_0$-sparse and $|\theta_{0,i}| > \mu$.
Some intuition can be gained by considering special cases
(for the sake of comparison, we assume that the columns of $\mtx{X}$
are normalized to have $\ell_2$ norm of order $\sqrt{n}$):
\begin{itemize}
\item In the case of orthogonal designs we have $n=p$ and 
$\mtx{X}^{\sT}\mtx{X} = n\id_{n\times n}$. By an orthogonal
transformation,
we can limit ourselves to $\bX=\sqrt{n}\,\id_{n\times n}$,
i.e., $y_i = \sqrt{n}\,\theta_{0,i} + w_i$. Hence testing hypothesis
$H_{0,i}$ reduces to testing for the mean of a univariate Gaussian.

It is easy to see  that we can distinguish the $i$-th entry
from noise only if its size is at least of order $\sigma/\sqrt{n}$. More precisely, for any $\alpha\in (0,1)$, 
$\beta\in (0,\alpha)$, we can achieve significance $\alpha$ and power $1-\beta$ if 
and only if $|\theta_{0,i}|\ge c(\alpha,\beta)\,
\sigma/\sqrt{n}$ for some constant $c(\alpha,\beta)$ \cite[Section 3.9]{lehmann2005testing}.
\item To move away from the orthogonal case, consider standard
  Gaussian designs. Several papers studied the estimation problem in
  this setting
  \cite{CandesRechtSimple,CandesPlanRIPless,BayatiMontanariLASSO,DMM-NSPT-11}.
 The conclusion is that there exist computationally efficient
 estimators $\mtx{\htheta}$ that are consistent (in high-dimensional
 sense) for $n\ge c_1 s_0\log(p/s_0)$, with $c_1$ a numerical constant. By far the most popular such
 estimator is the Lasso or Basis Pursuit Denoiser \cite{BP95,Tibs96}. 

On the other hand, no practical estimator is known that is consistent
under a significantly smaller sample size (impossibility results
have been proven in this direction, see
e.g. \cite{IndykLower,CandesDavenport}).
We expect hypothesis testing to require at least as large sample size as
point estimation, i.e. $n\ge c_0 s_0\log(p/s_0)$ for some $c_0 = c_0(\alpha,\beta)$.
\end{itemize}
These simple remarks motivate the following seemingly simple question:
\begin{itemize}
\item[{\sf Q:}]\emph{Assume standard Gaussian design $\mtx{X}$, and fix
  $\alpha,\beta\in(0,1)$.
Are there constants $c = c(\alpha,\beta)$, $c_1=c_1(\alpha,\beta)$
and a hypothesis testing procedure achieving the desired significance
and power for all $\mu\ge c\sigma/\sqrt{n}$, $n\ge c_1
s_0\log(p/s_0)$?}
\end{itemize}

\vspace{0.3cm}

Despite the seemingly idealized setting, the answer to this question is highly non-trivial.
To document this point, we consider in Appendix \ref{app:Alternative}
two hypothesis testing methods that were recently proposed by Zhang
and Zhang \cite{ZhangZhangSignificance}, and by B\"uhlmann
\cite{BuhlmannSignificance}.
These approaches apply to a broader class of design matrices $\bX$
that satisfy the restricted eigenvalue property
\cite{BickelEtAl}. 
We show that, when specialized to the case of standard Gaussian designs
$\mtx{x}_i\sim\normal(0,\id_{p\times p})$, these
methods require $|\theta_{0,i}|\ge \mu = c\,\max\{\sigma s_0 \log p/\, n,\sigma/\sqrt{n}\}$ to reject
hypothesis $H_{0,i}$ with a given degree of confidence (with $c$ being a
constant independent of the problem dimensions). 
In other words,  these methods are guaranteed to succeed only if the
coefficient to be tested is larger than the ideal scale
$\sigma/\sqrt{n}$, by a diverging factor  of order $s_0 \log p/\sqrt{n}$.
In particular, the results of
\cite{ZhangZhangSignificance,BuhlmannSignificance} do not allow to answer the
above question.

In this paper, we answer positively to this question.
As in \cite{ZhangZhangSignificance,BuhlmannSignificance}, our approach
is based on the Lasso estimator \cite{BP95,Tibs96}
\begin{eqnarray}
\mtx{\htheta} (\mtx{y},\bX) = \arg\min_{\mtx{\theta}\in\reals^{p}}
\Big\{
\frac{1}{2n} \|\mtx{y} - \bX\mtx{\theta}\|^2 + \lambda\,\|\mtx{\theta}\|_1\Big\}\,. \label{eqn:Lasso_cost}
\end{eqnarray}
We use the solution to this problem to construct a debiased estimator
of the form
\begin{eqnarray}
\mtx{\htheta}^u =  \mtx{\htheta} + \frac{1}{n} \mtx{M} \bX^\sT(\mtx{y}-\bX\mtx{\htheta}),\label{eq:DebiasedDef}
\end{eqnarray}
with $\mtx{M}\in\reals^{p\times p}$ a properly constructed matrix.
We then use its $i$-th component $\htheta_i^u$ as a test statistics for hypothesis
$H_{0,i}$. (We refer to Sections \ref{sec:std} and \ref{sec:nstd} for
a detailed description of our procedure.)

A similar approach was developed independently in
\cite{ZhangZhangSignificance}
and (after a a preprint version of the present paper became available online) in
\cite{GBR-hypothesis}.
Apart from differences in the construction of $\mtx{M}$, the three
papers differ crucially in the assumptions and the regime analyzed, and establish results
that are not directly comparable. In the
present paper we assume a specific (random) model for the design matrix
$\mtx{X}$. In contrast \cite{ZhangZhangSignificance} and
\cite{GBR-hypothesis} assume deterministic designs, or random 
designs with general unknown covariance.

On the other hand, we are able to analyze a regime that is
significantly beyond reach of the mathematical  techniques of
\cite{ZhangZhangSignificance,GBR-hypothesis}, even for the very
special case of standard
Gaussian designs.
Namely, for standard designs, we consider $\mu$  of order $\sigma/\sqrt{n}$, and $n$ of order
$s_0\log(p/s_0)$. 

This regime is both challenging and interesting
because $\theta_{0,i}$ (when non-vanishing)
is of the same order as the noise level.
Indeed our  analysis requires an exact asymptotic distributional characterization 
of the problem (\ref{eqn:Lasso_cost}). 

The contributions of this paper are organized as follows:
\begin{description}
 \item[Section \ref{sec:minimax}: Upper bound on the minimax power.] 
   We state the problem formally, by taking a
   minimax point of view. Based on this formulation, we prove a general upper bound on the minimax power
 of tests with a given significance level $\alpha$. We then specialize
 this bound to the case of standard Gaussian design matrices, showing
 formally that no test can achieve non-trivial significance $\alpha$,
 and power $1-\beta$, unless
 $|\theta_{0,i}|\ge \mu_{\UB} = c\sigma/\sqrt{n}$, with $c$ a
 dimension-independent constant.
\item[Section \ref{sec:std}: Hypothesis testing for standard Gaussian
  designs.] We define a hypothesis testing procedure that is well-suited for the case of
  standard Gaussian  designs, $\mtx{\Sigma}=\id_{p\times p}$. We prove that
  this test achieves a `nearly-optimal' power-significance trade-off 
  in a properly defined asymptotic sense. Here `nearly optimal' means
  that the trade-off has the same form as the previous upper bound, except that $\mu_{\UB}$ is
  replaced by $\mu=C\mu_{\UB}$ with $C$ a universal constant. In
  particular, we provide a positive answer to the open question
  discussed above.

Our analysis builds on an exact asymptotic characterization of the
Lasso estimator, first developed in
\cite{BayatiMontanariLASSO}.
\item[Section \ref{sec:nstd}: Hypothesis testing for nonstandard
  Gaussian designs.] 
We introduce a generalization of the previous hypothesis testing
method to Gaussian designs with general covariance matrix
$\mtx{\Sigma}$.
In this case we cannot establish validity in the regime
$n\ge c_1s_0\log(p/s_0)$, since a rigorous generalization of the distributional
result of \cite{BayatiMontanariLASSO} is not available. 

However: $(1)$ We prove that such a generalized distributional limit 
holds under the stronger assumption that $n$ is much larger
than $s_0(\log p)^2$ (see Theorem \ref{thm:Replica-Rigor}). 
$(2)$ We show that this distributional limit can be derived
from the powerful replica heuristics in statistical physics for the regime
$n\ge c_1s_0\log(p/s_0)$. (See
Section \ref{sec:nstd} for further discussion of the validity of this heuristics.)

 Conditional on this \emph{standard distributional limit}  holding, we prove
that the proposed procedure is nearly optimal in this case as well. 
\item[Numerical validation.] We validate our approach on both synthetic and real
  data in Sections \ref{sec:experiment-std},~\ref{sec:experiment-nstd} and Section~\ref{sec:crime},
  comparing it with the methods of
 \cite{ZhangZhangSignificance,BuhlmannSignificance}. Simulations suggest that the latter are
  indeed overly conservative in the present setting, resulting in suboptimal
  statistical power. (As emphasized above, the methods of
  \cite{ZhangZhangSignificance,BuhlmannSignificance} apply to a
  broader class of design matrices $\bX$.)
\end{description}
Proofs are deferred to Section \ref{sec:Proofs}.

Let us stress that the present treatment has two important
limitations. First, it is asymptotic: it would be important to develop
non-asymptotic bounds.
Second, for the case of general designs, it requires to know or estimate the design covariance
$\mtx{\Sigma}$. In Section \ref{sec:gen-covariance} we discuss a
simple approach to this problem for sparse $\mtx{\Sigma}$. A full study
of this issue is however beyond the scope of the
present paper.  

\vspace{0.3cm}

After a a preprint version of the present paper became available online,
several papers appeared that partially address these limitations.
In particular \cite{GBR-hypothesis,confidenceJM} make use of debiased
estimators of the form (\ref{eq:DebiasedDef}), and have much weaker
assumptions on the design $\bX$. Note however that these papers
require a significantly larger sample size, namely $n\ge (s_0\log
p)^2$. Hence, even limiting ourselves to standard designs, the results presented
here are not comparable to the ones of
\cite{GBR-hypothesis,confidenceJM}, and instead complement them.
We refer to Section \ref{sec:discussion} for further discussion of the relation.



\subsection{Further related work}
\label{sec:Related}

High-dimensional regression and $\ell_1$-regularized least squares estimation,
a.k.a. the Lasso~\eqref{eqn:Lasso_cost}, were the object of much theoretical investigation over the last few years.
The focus  has been so far on establishing order optimal
guarantees on: $(1)$ The prediction error $\|\bX(\mtx{\htheta}-\mtx{\theta}_0)\|_2$, see e.g. \cite{GreenshteinRitov}; $(2)$ The estimation error, typically
quantified through $\|\mtx{\htheta}-\mtx{\theta}_0\|_q$, with $q\in [1,2]$, see e.g. \cite{Dantzig,BickelEtAl,WainwrightEllP}; $(3)$ The model selection
(or support recovery)
properties typically by bounding  $\prob\{\supp(\mtx{\htheta})\neq \supp(\mtx{\theta}_0)\}$, see e.g. \cite{MeinshausenBuhlmann,zhao,Wainwright2009LASSO}.
For estimation and support recovery guarantees, it is
necessary to make specific assumptions on the design matrix $\bX$,
such as the restricted eigenvalue property of \cite{BickelEtAl} or the
compatibility condition of \cite{BuhlmannVanDeGeer}. 
Both \cite{ZhangZhangSignificance} and \cite{BuhlmannSignificance}
assume conditions of this type for developing hypothesis testing
procedures. 

In contrast we work within the Gaussian random design
model, and focus on the asymptotics   $s_0,p,n\to\infty$ with
$s_0/p\to\eps\in(0,1)$ and $n/p\to \delta\in (0,1)$.
The study of this type of high-dimensional asymptotics  
was pioneered by Donoho and Tanner \cite{Do,DoTa05,DoTa08,DoTa10} ,
who assumed standard Gaussian designs and focused on exact recovery in absence of noise.
The estimation error in presence of noise was characterized in
\cite{DMM-NSPT-11,BayatiMontanariLASSO}. Further work in the same or
related setting includes
\cite{Wainwright2010Gaussian,CandesRechtSimple,CandesPlanRIPless}.

Wainwright \cite{Wainwright2009LASSO} also considered the Gaussian design
model and established upper and lower thresholds
$n_{\UB}(p,s_0;\mtx{\Sigma})$, $n_{\LB}(p,s_0;\mtx{\Sigma})$ 
for correct  recovery
of $\supp(\mtx{\theta}_0)$ in noise $\sigma>0$, under an additional
condition on $\mu\equiv \min_{i\in
  \supp(\mtx{\theta}_0)}|\theta_{0,i}|$. 
The thresholds $n_{\UB}(p,s_0;\mtx{\Sigma} )$,
$n_{\LB}(p,s_0;\mtx{\Sigma})$ are of order $s_0\log p$ for many
covariance structures $\mtx{\Sigma}$, provided $\mu\ge C\sqrt{(\log
  p)/n}$ for some constant $C>0$.
Correct support recovery depends,  in a crucial way, on the
irrepresentability condition of  \cite{zhao}.

Let us stress that the results on support recovery offer limited insight into optimal hypothesis 
testing procedures.
Under the conditions that guarantee exact support recovery,
both type I and type II error rates tend to $0$ 
rapidly as $n,p,s_0\to\infty$, thus making it difficult to study the
trade-off between statistical significance and power.
Here we are interested in triples $n,p,s_0$ for which $\alpha$ and
$\beta$ stay bounded.
As discussed in the previous section, the regime
of interest (for standard Gaussian designs) is $c_1s_0\log(p/s_0)\le
n\le c_2s_0\log(p)$. At the lower end the number of observations $n$ is so small
that essentially nothing can be inferred about $\supp(\mtx{\theta}_0)$ using optimally tuned Lasso estimator, 
and therefore a nontrivial
 power $1-\beta >\alpha$ cannot be achieved. At the upper end, the number of samples
is sufficient enough to recover $\supp(\mtx{\theta}_0)$ with high probability, leading to arbitrary small errors $\alpha, \beta$

Let us finally mention that resampling methods provide an alternative  path to assess statistical
significance. A general framework to implement  this idea is provided by the stability
selection method of \cite{MeinshausenBuhlmannStability}.
However, specializing the approach and analysis of
\cite{MeinshausenBuhlmannStability} to the present 
context does not provide guarantees superior to
\cite{ZhangZhangSignificance,BuhlmannSignificance},
that are more directly comparable to the present work.

\subsection{Notations}
We provide a brief summary of the notations used throughout the paper.
We denote by $[p] = \{1,\cdots,p\}$ the set of first $p$ integers.
For a subset $\mathcal{J}\subseteq [p]$, we let
$|\mathcal{J}|$ denote its cardinality.
Bold upper (resp. lower) case letters denote
matrices (resp. vectors), and the same letter in normal typeface represents its coefficients, e.g. $a_j$ denotes
the $j$th entry of $\mtx{a}$.
For an $n\times p$ matrix $\mtx{M}$ and set of indices $I\subseteq [n], J\subseteq [p]$, we let $\mtx{M}_J$ denote the
$n \times |J|$ submatrix containing just
the columns in $J$ and use $\mtx{M}_{I,J}$ to denote the $|I| \times |J|$ submatrix formed by rows in $I$ and columns in $J$.
Likewise, for a vector $\mtx{\theta}\in \reals^p$,
$\mtx{\theta}_S$ is the restriction of $\mtx{\theta}$ to indices in $S$.
We denote the rows of the design matrix $\bX$ by
$\mtx{x}_1,\cdots,\mtx{x}_n\in\reals^p$. We also denote its columns by
$\mtx{\tx}_1,\cdots,\mtx{\tx}_p\in\reals^n$.
 The support of a vector $\mtx{\theta}\in \reals^p$ is denoted by $\supp(\mtx{\theta})$, i.e., $\supp(\mtx{\theta}) = \{i\in[p], \theta_i \neq 0\}$.
 We use $\id$  to denote the identity matrix in any dimension, and 
$\id_{d\times d}$ whenever is useful to specify the dimension $d$. 

Throughout, $\phi(x) = e^{-x^2/2}/\sqrt{2 \pi}$ is the Gaussian density and $\Phi(x) \equiv\int_{-\infty}^x
\phi(u) \de u$ is the Gaussian distribution. For two functions $f(n)$ and $g(n)$,
with $g(n)\ge 0$, the notation $f(n) = \Omega(g(n))$
means that $f$ is bounded below by $g$ asymptotically, namely, there exists constant $C>0$ and integer $n_0>0$,
such that $f(n) \ge C g(n)$ for $n > n_0$. Further, $f(n) = O(g(n))$ means that $f$ is bounded
above  by $g$ asymptotically, namely, for some constants $C< \infty$ and integer $n_0>0$,
$ f(n) \le C|g(n)|$ for all $n > n_0$. Finally  $f(n) = \Theta(g(n))$
if both $f(n) = \Omega(g(n))$ and $f(n) = O(g(n))$.

%
%
\section{Minimax formulation}\label{sec:minimax}

In this section we define the hypothesis testing problem, and
introduce a minimax criterion for evaluating hypothesis testing
procedures.
In subsection \ref{subsec:UpperBound} we state our upper bound on the
minimax power and, in subsection \ref{subsec:ProofUpperBound}, we
outline the prof argument, that is based on a reduction to binary
hypothesis testing.

\subsection{Tests with guaranteed power}\label{subsec:guaranteed}

We consider the minimax criterion to measure the quality of a testing
procedure. In order to define it formally, we first need to establish some notations.
 
A testing procedure for the family of hypotheses $H_{0,i}$,
cf. Eq.~(\ref{eq:Hypothesis}), is given by a family of measurable functions 
\begin{eqnarray}
\begin{split}
T_{i}:&&\reals^n\times\reals^{n\times p}\to \{0,1\}\, .\\
&& (\by,\bX) \mapsto T_{i,\bX}(\by)\, .
\end{split}
\end{eqnarray}
Here $T_{i,\bX}(\by) =1$ has the interpretation  that hypothesis
$H_{0,i}$ is rejected when the observation is $\by\in\reals^n$ and the
design matrix is $\bX$. We will hereafter drop the subscript $\bX$
whenever clear from the context.

As mentioned above, we will measure the quality of a test $T$ in terms of
its significance level $\alpha$ (probability of type I errors) and power
$1- \beta$ ($\beta$ is the probability of type II errors). 
A type I error (false rejection of the
null) leads one to conclude that a relationship between the
response vector $\by$ and a column of the design matrix $\bX$ exists
when in reality  it does not. 
On the other hand, a type II error (the failure to reject a false null
hypothesis) leads one to miss an existing  relationship.

Adopting a minimax point of view, we require that these metrics are achieved uniformly over $s_0$-sparse
vectors.
Formally, for $\mu>0$, we let
\begin{eqnarray}
\alpha_{i}(T) &\equiv& \sup \Big\{\prob_{\mtx{\theta}}\big(T_{i,\bX}(\by) = 1\big)\,
:\;\;  \mtx{\theta}\in\reals^p,\; \|\mtx{\theta}\|_0\le s_0,\; \theta_i=0\Big\}\, ,\\
\beta_{i}(T;\mu) & \equiv & \sup\Big\{\prob_{\mtx{\theta}}\big(T_{i,\bX}(\by) = 0\big)\,
:\;\;  \mtx{\theta}\in\reals^p,\; \|\mtx{\theta}\|_0\le s_0, \; |\theta_i|\ge \mu\Big\}\, .
\end{eqnarray}
In words, for any $s_0$-sparse vector with $\theta_{i}=0$, the
probability of false alarm is upper bounded by $\alpha_i(T)$. On the other
hand, if $\mtx{\theta}$ is $s_0$-sparse with $|\theta_i|\ge \mu$, the
probability of misdetection is upper bounded by
$\beta_i(T;\mu)$. 
Note that $\prob_{\mtx{\theta}}(\cdot)$ is the induced probability
distribution on $(\by,\bX)$ for random  design $\bX$ and  noise
realization $w$, given the fixed parameter
vector $\mtx{\theta}$.
Throughout we will accept randomized testing procedures
as well\footnote{Formally, this corresponds to assuming $T_i(\by) = T_i(\by;U)$
with $U$ uniform in $[0,1]$ and independent of the other random variables.}.
\begin{definition}\label{def:minimaxpower}
The \emph{minimax power} for testing hypothesis $H_{0,i}$ against the
alternative $|\theta_i|\ge \mu$ is given by the function 
$1- \bopt_{i}(\,\cdot\,;\mu): [0,1]\to [0,1]$ where, for $\alpha\in
[0,1]$
\begin{eqnarray}
1- \bopt_{i}(\alpha;\mu) \equiv \sup_{T}\Big\{1- \beta_{i}(T;\mu) :\;\;
\alpha_{i}(T)\le \alpha\Big\}\, . \label{eq:MinimaxPower}
\end{eqnarray}
\end{definition}
Note that for standard Gaussian designs (and more generally for
designs with exchangeable columns), $\alpha_i(T)$, $\beta_i(T;\mu)$ do
not depend on the index $i\in [p]$. We shall therefore omit the subscript
$i$ in this case. 

The following are straightforward yet useful properties.
\begin{remark}
The optimal power $\alpha\mapsto 1-\bopt_{i}(\alpha;\mu)$
is non-decreasing. Further, by using a test such that $T_{i,\bX}(\by) =1$ with probability
$\alpha$ independently of $\by$, $\bX$, we conclude that $ 1- \bopt_{i}(\alpha;\mu)\ge \alpha$.
\end{remark}
\begin{proof}
To prove the first property, notice that, for any $\alpha\le
\alpha'$ we have $1-\beta_i(\alpha;\mu)\le 1-\beta_i(\alpha';\mu)$.
Indeed $1-\beta_i(\alpha';\mu)$ is obtained by taking the supremum
in Eq.~(\ref{eq:MinimaxPower}) over a family of tests that includes
those over which the supremum is taken for $1-\beta_i(\alpha;\mu)$.

Next, a completely randomized test outputs $T_{i,\bX}(\by) =1$
with probability $\alpha$ independently of $\bX,\by$.  We then have
$\prob_{\mtx{\theta}}\big(T_{i,\bX}(\by) = 0\big) = 1-\alpha$ for any
$\mtx{\theta}$, whence 
$\beta_{i}(T;\mu) = 1-\alpha$. Since
this test offers --by definition-- the prescribed control on type I
errors,  we have, by Eq.~(\ref{eq:MinimaxPower}), $1-
\bopt_{i}(\alpha;\mu) \ge 1-\beta_{i}(T;\mu)= \alpha$.
\end{proof}

\subsection{Upper bound on the minimax power}\label{subsec:UpperBound}

Our upper bound on the minimax power is stated in terms of the
function $G:[0,1]\times \reals_+\to [0,1]$, $(\alpha,u)\mapsto
G(\alpha,u)$, defined as follows.
\begin{align}\label{eqn:G}
G(\alpha,u)\equiv 2-\Phi\Big(\Phi^{-1}(1-\frac{\alpha}{2})+u\Big) -
\Phi\Big(\Phi^{-1}(1-\frac{\alpha}{2})-u\Big)\, .
\end{align}
%
 It is easy to check that, for any $\alpha>0$, $u\mapsto
G(\alpha,u)$ is  continuous and monotone increasing.
For $u$ fixed $\alpha\mapsto G(\alpha,u)$ is continuous and monotone
increasing. 
Finally $G(\alpha,0)
= \alpha$ and $\lim_{u\to\infty}G(\alpha,u)=1$.

%
We then have the following upper bound on the optimal power of random
Gaussian designs.  (We refer to Section
\ref{sec:ProofGeneralUpperBound} for the proof.)
\begin{thm}\label{thm:GeneralUpperBound}
For $i\in [p]$, let $1- \bopt_{i}(\alpha;\mu)$ be the minimax power of a
Gaussian random design $\bX$ with covariance matrix $\mtx{\Sigma}\in\reals^{p\times
  p}$, as per Definition
\ref{def:minimaxpower}. For $S\subseteq [p]\setminus \{i\}$, define 
${\Sigma}_{i|S}
\equiv{\Sigma}_{ii}-\mtx{\Sigma}_{i,S}\mtx{\Sigma}_{S,S}^{-1}\mtx{\Sigma}_{S,i}\in\reals$.  
Then, for any $\ell\in\reals$ and $|S| < s_0$,
\begin{align}
1 - \bopt_{i}(\alpha;\mu) & \le 
G\Big(\alpha,\frac{\mu}{\sigma_{\rm eff}(\ell)}\Big) 
+F_{n-s_0+1}(n-s_0+\ell)\,,\\
\sigma_{\rm eff}(\ell) & \equiv
\frac{\sigma}{\sqrt{{\Sigma}_{i|S}(n-s_0+\ell)}}\, ,
\end{align}
where $F_k(x) = \prob(Z_k \ge x)$, and $Z_k$ is a chi-squared
random variable with $k$ degrees of freedom.
\end{thm}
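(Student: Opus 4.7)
\medskip

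\noindent\textbf{Proof plan for Theorem~\ref{thm:GeneralUpperBound}.}
The plan is to chain together Lemma~\ref{lemma:Binary} and Lemma~\ref{lemma:PerXUpperBound}, and then absorb the randomness of $\bX$ by a chi-squared tail computation. Throughout I will fix $S\subseteq [p]\setminus\{i\}$ with $|S|=s_0-1$ (which satisfies $|S|<s_0$ as required by Lemma~\ref{lemma:PerXUpperBound}), and let $\ell\in\reals$ be the free parameter in the theorem.

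\smallskip

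\noindent\emph{Step 1: reduction and per-design bound.} By Lemma~\ref{lemma:PerXUpperBound}, for every $\xi>0$ there exist $Q_0,Q_1$ (not depending on $\bX$) with $\bbin_{i,\bX}(\alpha;Q)\ge G\bigl(\alpha,\mu\|\projp_S\tx_i\|_2/\sigma\bigr)-\xi$. Combining with Lemma~\ref{lemma:Binary}, and rewriting as an upper bound on $1-\bopt_i$, I obtain
\begin{equation*}
1-\bopt_i(\alpha;\mu)\;\le\;\sup_{\alpha_\bX:\,\E\alpha_\bX\le\alpha}\E\Bigl[G\bigl(\alpha_\bX,\mu\|\projp_S\tx_i\|_2/\sigma\bigr)\Bigr]+\xi\, ,
\end{equation*}
where the supremum is over measurable $\bX\mapsto\alpha_\bX\in[0,1]$. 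Since $\xi$ is arbitrary, the $\xi$ term may be dropped at the end.

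\smallskip

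\noindent\emph{Step 2: distribution of $\|\projp_S\tx_i\|_2^2$.} Since the rows of $\bX$ are i.i.d.\ $\normal(0,\Sigma)$, the column $\tx_i$ conditioned on $\bX_S$ admits the regression decomposition $\tx_i=\bX_S\Sigma_{S,S}^{-1}\Sigma_{S,i}+\eta$, with $\eta\sim\normal(0,\Sigma_{i|S}\,\id_{n\times n})$ independent of $\bX_S$. The first term lies in the column span of $\bX_S$, hence $\projp_S\tx_i=\projp_S\eta$, and (on the full-measure event $\rank(\bX_S)=|S|$) the random variable $Z\equiv\|\projp_S\tx_i\|_2^2/\Sigma_{i|S}$ is $\chi^2_{n-|S|}=\chi^2_{n-s_0+1}$ and is independent of $\bX_S$.

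\smallskip

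\noindent\emph{Step 3: split on a chi-squared tail event and apply Jensen.} With the choice $|S|=s_0-1$, note that
\begin{equation*}
\frac{\mu\|\projp_S\tx_i\|_2}{\sigma}\;=\;\frac{\mu\sqrt{Z\,\Sigma_{i|S}}}{\sigma}\;\le\;\frac{\mu}{\sigma_{\rm eff}(\ell)}\qquad\text{whenever}\;Z\le n-s_0+\ell.
\end{equation*}
Since $G(\alpha,\cdot)$ is non-decreasing and bounded by $1$, splitting the expectation in Step~1 at $Z=n-s_0+\ell$ yields
\begin{equation*}
\E\bigl[G(\alpha_\bX,\mu\|\projp_S\tx_i\|_2/\sigma)\bigr]\;\le\;\E\bigl[G(\alpha_\bX,\mu/\sigma_{\rm eff}(\ell))\bigr]+\prob(Z>n-s_0+\ell).
\end{equation*}
The tail probability equals $F_{n-s_0+1}(n-s_0+\ell)$ by definition. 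For the first term, I use that $\alpha\mapsto G(\alpha,u)$ is concave and non-decreasing; concavity follows from the standard randomization argument (a convex combination of level-$\alpha_j$ tests is a level-$\sum p_j\alpha_j$ test with power $\sum p_j\,\text{(power}_j)$). Hence Jensen and monotonicity give $\E[G(\alpha_\bX,\mu/\sigma_{\rm eff}(\ell))]\le G(\E\alpha_\bX,\mu/\sigma_{\rm eff}(\ell))\le G(\alpha,\mu/\sigma_{\rm eff}(\ell))$. Combining the three displays and sending $\xi\to 0$ yields the theorem.

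\smallskip

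\noindent\emph{Expected difficulty.} Steps~1 and~2 are essentially bookkeeping, using the cited lemmas and the Gaussian conditioning identity $\Cov(\tx_i\mid \bX_S)=\Sigma_{i|S}\id_{n\times n}$ together with $\projp_S\bX_S=0$. The only genuinely delicate ingredient is the concavity of $\alpha\mapsto G(\alpha,u)$, which I would either derive from the Neyman–Pearson/randomization argument above or verify directly from the explicit formula \eqref{eqn:G} (checking that the derivative in $\alpha$ is monotone decreasing). The split at $n-s_0+\ell$ is a standard device for converting an in-expectation bound on a monotone functional of a chi-squared variable into a deterministic bound plus a tail correction, and the parameter $\ell$ is exactly the knob that balances the two terms.
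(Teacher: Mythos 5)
Your proposal is correct and follows essentially the same route as the paper's own proof: chain Lemma~\ref{lemma:Binary} with Lemma~\ref{lemma:PerXUpperBound}, use the Gaussian regression decomposition of $\tx_i$ given $\bX_S$ to identify $\|\projp_S\tx_i\|_2^2/\Sigma_{i|S}$ as chi-squared with $n-|S|$ degrees of freedom, and split the expectation at the threshold $n-s_0+\ell$ using monotonicity and boundedness of $G$ in $u$ together with concavity in $\alpha$ and Jensen. The only difference is that you fix $|S|=s_0-1$, which is precisely the case where the $\chi^2_{n-s_0+1}$ law is exact (the paper's proof implicitly does the same), and the statement for general $|S|<s_0$ then follows since enlarging $S$ only decreases $\Sigma_{i|S}$ while $G$ is non-decreasing in its second argument.
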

In other words, the statistical power is upper bounded by the one of
testing the mean of a scalar Gaussian random variable, with
effective noise variance $\sigma_{\rm eff}^2\approx
\sigma^2/[{\Sigma}_{i|S}(n-s_0)]$. (Note indeed that by concentration of
a chi-squared random variable around their mean, $\ell$ can be taken
small as compared to $n-s_0$.)  

The next corollary specializes the above result to the case of
standard Gaussian designs.  (The proof is immediate and hence we omit it.)
\begin{coro}\label{coro:UBStandard}
For $i\in [p]$, let $1- \bopt_{i}(\alpha;\mu)$ be the minimax power of
a standard Gaussian design $\bX$ with covariance matrix $\mtx{\Sigma} =
\id_{p\times p}$, cf. Definition
\ref{def:minimaxpower}. 
Then, for any $\xi\in[ 0,(3/2)\sqrt{n-s_{0}+1}]$
we have
\begin{align}
1 - \bopt_{i}(\alpha;\mu)  \le 
G\Big(\alpha,\frac{\mu(\sqrt{n-s_{0}+1}+\xi)}{\sigma}\Big) + e^{-\xi^2/8}\,.
\end{align}
\end{coro}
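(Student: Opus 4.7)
\textbf{Proof proposal for Corollary \ref{coro:UBStandard}.}

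My plan is to simply instantiate Theorem \ref{thm:GeneralUpperBound} in the case $\Sigma = \id_{p\times p}$ and then carefully choose the free parameters $S$ and $\ell$ so that the two terms on the right-hand side match the claimed expression.

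First, for the standard Gaussian design we have $\Sigma_{i|S} = \Sigma_{ii} - \Sigma_{i,S}\Sigma_{S,S}^{-1}\Sigma_{S,i} = 1$ for every $S \subseteq [p]\setminus\{i\}$, independently of $S$. Next I would take $|S| = s_0 - 1$ (any such $S$ works since $\Sigma_{i|S}$ does not depend on it). Substituting these into Theorem \ref{thm:GeneralUpperBound} gives
\begin{align*}
1 - \bopt_{i}(\alpha;\mu) \;\le\; G\!\left(\alpha, \frac{\mu\sqrt{n-s_0+\ell}}{\sigma}\right) + F_{n-s_0+1}\bigl(n-s_0+\ell\bigr), \quad \text{for any } \ell \in \reals.
\end{align*}

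The natural choice of $\ell$ is the one that makes $\sqrt{n-s_0+\ell}$ equal to $\sqrt{n-s_0+1} + \xi$, namely $\ell = 1 + 2\xi\sqrt{n-s_0+1} + \xi^2$, so that $n - s_0 + \ell = (\sqrt{n-s_0+1} + \xi)^2$. With this choice the first term is exactly $G\bigl(\alpha, \mu(\sqrt{n-s_0+1}+\xi)/\sigma\bigr)$ as required.

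It remains to bound the second term. Writing $k = n-s_0+1$, we need to show that
\[
F_{k}\bigl((\sqrt{k}+\xi)^2\bigr) = \prob\!\bigl(Z_k \ge (\sqrt{k}+\xi)^2\bigr) \;\le\; e^{-\xi^2/8},
\]
valid for $\xi \in [0, (3/2)\sqrt{k}]$. This is a routine chi-squared tail inequality; e.g., writing $\sqrt{Z_k} = \|g\|_2$ with $g \sim \normal(0,\id_{k\times k})$ and using that $\|g\|_2$ is a $1$-Lipschitz function of $g$ with $\E\|g\|_2 \le \sqrt{k}$, Gaussian concentration yields $\prob(\|g\|_2 \ge \sqrt{k} + \xi) \le e^{-\xi^2/2} \le e^{-\xi^2/8}$. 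Alternatively, the Laurent--Massart bound $\prob(Z_k \ge k + 2\sqrt{kt}+2t) \le e^{-t}$ with $t = \xi^2/8$ also suffices, and it is in this route that the constraint $\xi \le (3/2)\sqrt{k}$ becomes convenient to absorb the $\xi^2$ term.

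I do not see any real obstacle: the proof is essentially bookkeeping, since Theorem \ref{thm:GeneralUpperBound} has already done the main work (the reduction to a scalar mean-testing problem with an explicit effective noise). The only minor care is in choosing $\ell$ to realize the stated factor $\sqrt{n-s_0+1}+\xi$ and in invoking the appropriate $\chi^2$ tail bound that matches the constant $1/8$ under the stated range of $\xi$.
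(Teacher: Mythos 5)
Your proposal is correct and is exactly the instantiation the paper has in mind when it omits the proof as "immediate": take $\Sigma_{i|S}=1$, choose $\ell$ so that $n-s_0+\ell=(\sqrt{n-s_0+1}+\xi)^2$, and bound $F_{n-s_0+1}$ by a chi-squared tail estimate. Your tail bounds indeed give $\prob\bigl(Z_k\ge(\sqrt{k}+\xi)^2\bigr)\le e^{-\xi^2/8}$ (in fact even $e^{-\xi^2/2}$ via the Lipschitz-concentration route, so the restriction $\xi\le(3/2)\sqrt{n-s_0+1}$ is not even needed for your argument), so the proof goes through.
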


It is instructive to look at the last result from a slightly different
point of view. Given $\alpha\in(0,1)$ and $1-\beta\in (\alpha,1)$, how
big does the entry $\mu$ need to be so that $1 - \bopt_{i}(\alpha;\mu)
\ge 1-\beta$? 
It follows 
from Corollary \ref{coro:UBStandard} that to achieve a pair $(\alpha,\beta)$ as above
we require $\mu \ge \mu_{\UB} = c\sigma/\sqrt{n}$ for some
$c=c(\alpha,\beta)$. 

Previous work \cite{ZhangZhangSignificance,BuhlmannSignificance} 
requires $\mu \ge c\,\max\{\sigma s_0 \log p/\, n, \sigma/\sqrt{n}\}$ to achieve the same goal
although for deterministic designs $\bX$ (see
Appendix~\ref{app:Alternative}).  This motivates the central question
of the present paper (already stated in the introduction): Can
hypothesis testing be
performed in the ideal regime $\mu\ge c\sigma/\sqrt{n}$?

As further clarified in the next section and in Section
\ref{sec:ProofBinary},  Theorem~\ref{thm:GeneralUpperBound} by an
oracle-based argument. Namely,  we upper bound the power of any
hypothesis testing method, by the power of an oracle that  knows, for
each coordinates 
$j \in [p]\setminus j$, whether $\theta_{0,j} \in
\supp(\mtx{\theta}_0)$ or not. In other words the 
procedure has access to $\supp(\mtx{\theta}_0) \backslash\{i\}$.
At first sight, this oracle appears exceedingly powerful,  and hence
the bound might be loose.
Surprisingly, the bound turns out to be tight, at least in an
asymptotic sense, as demonstrated in  Section \ref{sec:std}.

Let us finally mention that a bound similar to the present one was
announced independently --and from a different viewpoint-- in \cite{Report-Semi}.

\subsection{Proof outline}\label{subsec:ProofUpperBound}

The proof of Theorem \ref{thm:GeneralUpperBound} is based on a simple
reduction to the  binary hypothesis testing
problem. We first introduce the binary testing problem,
in which the vector of coefficients $\mtx{\theta}$ is chosen randomly
according to one of two distributions.
\begin{definition}\label{definition:RandomizedTheta}
Let $Q_0$ be a probability distribution on $\reals^p$ supported on
$\cR_0\equiv\{\mtx{\theta}\in\reals^p:\; \|\mtx{\theta}\|_0\le s_0, \;
\theta_i=0\}$, and $Q_1$ a probability distribution supported on $\cR_1\equiv\{\mtx{\theta}\in\reals^p:\; \|\mtx{\theta}\|_0\le s_0, \;
|\theta_i|\ge \mu\}$. For fixed design matrix $\bX\in\reals^{n\times p}$, and $z\in\{0,1\}$, let $\prob_{Q,z,\bX}$
denote the  law of $y$ as per model (\ref{eq:NoisyModel}) when
$\mtx{\theta}_0$ is chosen randomly with  $\mtx{\theta}_0\sim Q_z$. 

We denote by $1-\bbin_{i,\bX}(\,\cdot\,;Q)$ the optimal
power  for the binary hypothesis testing problem $\mtx{\theta}_0\sim Q_0$
versus $\mtx{\theta}_0\sim Q_1$, namely:
\begin{align}
\bbin_{i,\bX}(\alpha_{\bX};Q) \equiv
\inf_{T}\Big\{\,\prob_{Q,1,\bX}(T_{i,\bX}(\by) =0):\; \;
\;\;\prob_{Q,0,\bX}(T_{i,\bX}(\by) =1)\le \alpha_{\bX}\,\Big\}\, .
\end{align}
\end{definition}

The reduction is stated in the next lemma.
\begin{lemma}\label{lemma:Binary}
Let $Q_0$, $Q_1$ be any two probability measures supported,
respectively, on $\cR_0$ and $\cR_1$ as per Definition \ref{definition:RandomizedTheta}.
Then, the minimax power for testing hypothesis $H_{0,i}$ under the
random design model, cf. Definition \ref{def:minimaxpower}, is bounded
as
\begin{align}
\bopt_{i}(\alpha;\mu)\ge
\inf\Big\{\E\bbin_{i,\bX}(\alpha_{\bX};Q):\;\;\;\; \E(\alpha_{\bX})\le
\alpha \;\Big\}\, .
\end{align}
Here expectation is taken with respect to the law of $\bX$ and the $\inf$
is over all measurable functions $\bX\mapsto \alpha_{\bX}$.
\end{lemma}
For the proof we refer to Section \ref{sec:ProofBinary}.

The binary hypothesis testing problem is characterized  in the next lemma by
reducing it to a simple  regression problem.
For $S\subseteq [p]$, we denote by $\proj_S$ the orthogonal projector
on the linear space spanned by the columns $\{\mtx{\tx}_i\}_{i\in S}$. We also
let $\projp_S=\id_{n\times n}-\proj_S$ be the projector on the orthogonal subspace. 
\begin{lemma}\label{lemma:PerXUpperBound}
Let $\bX\in\reals^{n\times p}$ and $i\in [p]$. For $S\subset
[p]\setminus \{i\}$,
$\alpha\in[0,1]$, define
\begin{eqnarray}
1 - \bora_{i,\bX}(\alpha;S,\mu) & = &  G\Big(\alpha,\frac{\mu\|\projp_{S}\mtx{\tx}_i\|_2}{\sigma}\Big)\,.\quad\quad  \label{eq:OrableUB2}
\end{eqnarray}
If $|S|< s_0$ then for any $\xi>0$ there exists distributions $Q_0$, $Q_1$ as per
Definition \ref{definition:RandomizedTheta}, depending on $i$, $S$,
$\mu$ but not on $\bX$, such that  $\bbin_{i,\bX}(\alpha;Q)\ge \bora_{i,\bX}(\alpha;S,\mu)-\xi$.
\end{lemma}
The proof of this Lemma is presented in Section \ref{sec:ProofPerXUpperBound}.

The proof of Theorem \ref{thm:GeneralUpperBound} follows from 
Lemmas \ref{lemma:Binary} and \ref{lemma:PerXUpperBound}, cf. Section \ref{sec:ProofGeneralUpperBound}.
\section{Hypothesis testing for standard Gaussian designs }
\label{sec:std}

In this section we describe our hypothesis testing procedure (that we
refer to as \sdl) in the
case of standard Gaussian designs, see subsection  \ref{subsec:std-proc}.
In subsection \ref{subsec:std-analysis}, we develop asymptotic bounds on
the probability of type I and type II errors. The test is shown to
nearly achieve  the ideal tradeoff between significance level $\alpha$
and  power $1-\beta$, using the upper bound stated in the previous section. 

Our results are based on a characterization of the high-dimensional behavior of the Lasso estimator, 
developed in~\cite{BayatiMontanariLASSO}. For the reader's
convenience, and to provide further context, we recall this result in
subsection \ref{subsec:std-Gaussian}. Finally, subsection
\ref{sec:experiment-std}
discusses some numerical experiments.

\subsection{Hypothesis testing procedure}\label{subsec:std-proc}

Our $\sdl$ procedure for standard Gaussian designs is described in Table~\ref{tbl:procedure1}.
     \begin{table}[!t]
     \caption{\sdl\, for testing $H_{0,i}$ under standard Gaussian design model.}\label{tbl:procedure1}
     \noindent\rule{\textwidth}{1pt}\\
     \sdl\,: Testing hypothesis $H_{0,i}$ under standard Gaussian design model.\\
     \noindent\rule{\textwidth}{1pt}\\
     \textbf{Input:} regularization parameter $\lambda$, significance level $\alpha$\\
     \textbf{Output:} p-values $P_i$, test statistics $T_{i,\bX}(\by)$\\
     1:\quad Let
     \vspace{-0.5cm}
     \begin{align*}
     \mtx{\htheta}(\lambda) = \text{argmin}_{\mtx{\theta}\in\reals^p} \,\,\Big\{ \frac{1}{2n} \|\by-\bX\mtx{\theta}\|^2 + \lambda \|\mtx{\theta}\|_1\Big\}.
     \end{align*}
      2:\quad Let 
      \vspace{-0.3cm}
      \begin{align}\label{eq:Step2-std} 
      \scale = \left(1-\frac{1}{n} \|\mtx{\htheta}(\lambda)\|_0\right)^{-1},\;\;\;\;\;
      \tau = \frac{1}{\Phi^{-1}(0.75)} \frac{\scale}{\sqrt{n}} |(\mtx{y} - \bX\mtx{\htheta}(\lambda))|_{(n/2)},
      \end{align}
      where for $\mtx{v} \in \reals^K$, $|\mtx{v}|_{\ell}$ is the $\ell$-th largest entry in the vector $(|v_1|,\cdots,|v_n|)$. \\
       3:\quad Let
       \vspace{-0cm}
       \begin{align*}
        \mtx{\htheta}^u =\mtx{\htheta}(\lambda) + \frac{\scale}{n} \bX^\sT(\mtx{y} - \bX \mtx{\htheta}(\lambda)).
        \end{align*}
       4:\quad Assign the p-values $P_i$ for the test $H_{0,i}$ as follows.
       \vspace{-0.2cm}
       \begin{align*}
        P_i = 2\Big(1- \Phi\big(\big|\frac{\htheta^u_i}{\tau}\big|\big)\Big).
        \end{align*}
        5:\quad The decision rule is then based on the p-values:
	\begin{align*}
	T_{i,\bX}(\by)=
	\begin{cases}
	1&\text{if  $P_i \le \alpha$} \quad \quad \; (\text{reject the null hypothesis }H_{0,i}),\\
	0& \text{otherwise}  \quad \quad  \text{(accept the null hypothesis)}.
	\end{cases}
	\end{align*}
      \noindent\rule{\textwidth}{1pt}
      \end{table}

The key is the construction of the \emph{unbiased estimator}
$\mtx{\htheta}^u$ in step 3. The asymptotic analysis developed in 
\cite{BayatiMontanariLASSO} and in the next section establishes that 
\emph{$\mtx{\htheta}^u$ is an asymptotically unbiased estimator of $\mtx{\theta}_0$}, and the empirical distribution of 
$\{\htheta^u_i - \theta_{0,i}\}_{i=1}^p$ is asymptotically normal with
variance $\tau^2$.
Further, the variance $\tau^2$ can be consistently estimated using the
residual vector $\mtx{r}$. 
These results establish that (in a sense that will be made precise next)
the regression model (\ref{eq:NoisyModel}) is asymptotically equivalent to a
simpler sequence model
\begin{align}\label{eq:noise-unbiased}
\mtx{\htheta}^u = \mtx{\theta}_0 + \,{\sf noise}
\end{align}
with {\sf noise} having zero mean. 
In particular, under the null hypothesis $H_{0,i}$, $\htheta^u_i$ is
asymptotically gaussian with mean $0$ and variance $\tau^2$. This
motivates rejecting the null if $|\htheta^u_i|\ge \tau \Phi^{-1}(1-\alpha/2)$.

The construction of
$\mtx{\htheta}^u$ has an appealing geometric interpretation. Notice that $\mtx{\htheta}$ is
necessarily biased towards small $\ell_1$ norm. 
 The minimizer in Eq.~(\ref{eqn:Lasso_cost}) must
satisfy $(1/n)\,\bX^\sT(\mtx{y}-\bX\mtx{\htheta}) = \lambda \mtx{g} $, with $\mtx{g}$ a subgradient of $\ell_1$ norm at $\mtx{\htheta}$.
 Hence, we can rewrite
$\mtx{\htheta}^u = \mtx{\htheta} + \scale \lambda \mtx{g}$. The bias is
eliminated by modifying the estimator in the direction of increasing
$\ell_1$ norm. See Fig.~\ref{fig:geometry-2} for an illustration.

\begin{figure}[t]
\centering
\includegraphics*[width =4in]{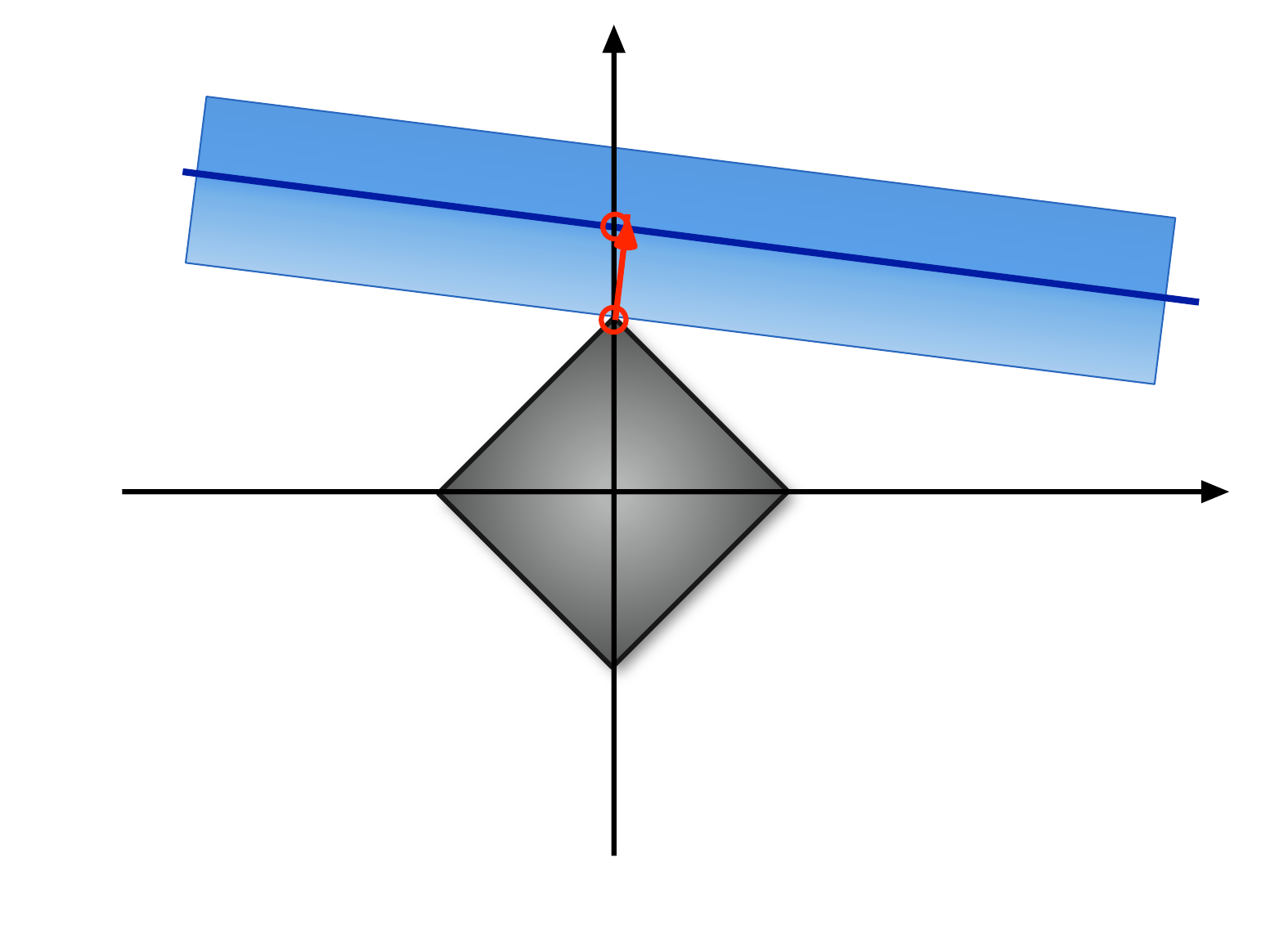}
\put(-310,180){$\frac{1}{2n}\|\by-\bX\mtx{\theta}\|^2$}
\put(-260,120){Ball of $\ell_1$ norm}
\put(-145,140){$\mtx{\htheta}$}
\put(-140,160){$\frac{1}{n}\,\bX^\sT(\by-\bX\mtx{\theta})$}
\caption{Geometric interpretation for construction of $\mtx{\htheta}^u$. The bias in $\mtx{\htheta}$ is eliminated by modifying the estimator in the direction of increasing its $\ell_1$ norm}\label{fig:geometry-2}
\end{figure}

\subsection{Asymptotic analysis}\label{subsec:std-analysis}

For given dimension $p$, an \emph{instance} of the standard Gaussian design model
is defined by the tuple
$ (\mtx{\theta}_0, n,\sigma)$, where $\mtx{\theta}_0 \in \reals^p$, $n \in \naturals$,
$\sigma\in\reals_+$. We consider sequences of instances indexed by the problem dimension 
$\{ (\mtx{\theta}_0(p), n(p),\sigma(p))\}_{p \in \naturals}$. 
\begin{definition}\label{def:converging}
The sequence of instances $\{ (\mtx{\theta}_0(p), n(p),\sigma(p))\}_{p \in \naturals}$ indexed by $p$
is said to be a \emph{converging sequence} if $ n(p) /p \to \delta \in
(0,\infty)$, $\sigma(p)^2/n\to \sigma_0^2$, and the empirical 
distribution of the entries $\mtx{\theta}_0(p)$ converges weakly to a probability measure $p_{\Theta_0}$
on $\reals$ with bounded second moment.
Further $p^{-1} \sum_{i\in [p]} \theta_{0,i}(p)^2 \to \E_{p_{\Theta_0}}\{\Theta_0^2\}$.
\end{definition}
Note that this definition assumes the coefficients $\theta_{0,i}$
are of order one, while the noise is scaled as $\sigma(p)^2 =
\Theta(n)$. Equivalently, we could have assumed $\theta_{0,i} =
\Theta(1/\sqrt{n})$ and $\sigma^2(p)=\Theta(1)$:  the two
settings only differ by a scaling of $\by$. We favor the first scaling
as it simplifies somewhat the notation in the following.

As before, we will measure the quality of the proposed test in terms of
its {significance level} ({size}) $\alpha$ and {power}
$1- \beta$. Recall that $\alpha$ and $\beta$ respectively indicate the type I error (false positive) and type II error (false negative) rates.
The following theorem establishes that the $P_i$'s are indeed valid
p-values, i.e., allow to control type I errors. Throughout
$S_0(p) = \{i \in [p] : \theta_{0,i}(p) \neq 0\}$ is the support of $\mtx{\theta}_0(p)$.
\begin{thm}\label{thm:type_I}
Let $\{(\mtx{\theta}_0(p), n(p),\sigma(p))\}_{p\in\naturals}$ be a converging sequence of
instances of the standard Gaussian design model. Assume $\lim_{p\to \infty} |S_0(p)|/p =\P(\Theta_0 \neq 0)$. 
Then, for $i\in S_0^c(p)$, we have
\begin{eqnarray}
\lim_{p\to \infty} \prob_{\mtx{\theta}_0(p)}(T_{i,\bX}(\by)=1) = \alpha\,.
\end{eqnarray}
\end{thm}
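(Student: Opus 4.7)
The test $T_{i,\bX}(y)$ rejects $H_{0,i}$ iff $|\htheta^u_i|/\tau \ge c_\alpha := \Phi^{-1}(1-\alpha/2)$, so I need to show $\prob_{\theta_0(p)}(|\htheta^u_i|/\tau \ge c_\alpha) \to \alpha$ when $\theta_{0,i}=0$. The plan is to reduce this per-coordinate statement to a coordinate-averaged one via exchangeability, then apply the empirical-distribution convergence of Proposition~\ref{pro:lasso-limit}. Since the entries of $\bX$ are i.i.d.\ $\normal(0,1)$, the law of $\bX$ is invariant under any column permutation $\pi$; if $\pi$ fixes $S_0(p)$ setwise, then $(y,\bX^\pi) \ed (y,\bX)$ under $\prob_{\theta_0(p)}$. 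The Lasso solution and the prescriptions for $\scale$, $\tau$, and $\htheta^u$ in Table~\ref{tbl:procedure1} are all column-equivariant, which gives $P_j(y,\bX^\pi) = P_{\pi(j)}(y,\bX)$; hence $\{P_j\}_{j \in S_0^c(p)}$ are identically distributed, and
\begin{align*}
\prob_{\theta_0(p)}(T_{i,\bX}(y)=1) \;=\; \E\Bigl[\frac{1}{|S_0^c(p)|} \sum_{j \in S_0^c(p)} \mathbf{1}\{P_j \le \alpha\}\Bigr].
\end{align*}

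Proposition~\ref{pro:lasso-limit} gives, almost surely, weak convergence of the joint empirical distribution of $(\theta_{0,j}, \htheta^u_j)_{j=1}^p$ to the law of $(\Theta_0, \Theta_0 + \tau Z)$, and of the empirical distribution of the residuals $r_j$ to $\normal(0, \tau^2)$. The latter, together with continuity of the median functional at absolutely continuous laws, implies that the data-driven $\tau$ in Table~\ref{tbl:procedure1} converges a.s.\ to the asymptotic $\tau$, since the median of $|\normal(0,\tau^2)|$ equals $\tau\Phi^{-1}(3/4)$. The hypothesis $|S_0(p)|/p \to \prob(\Theta_0 \neq 0)$ pins the empirical mass of $\theta_0$ at $\{0\}$ to the correct limiting atom weight. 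Approximating $\mathbf{1}\{x=0\}$ from above by Lipschitz cutoffs $\phi_\epsilon$ supported on $[-\epsilon,\epsilon]$, applying the joint weak convergence to $(x,y) \mapsto \phi_\epsilon(x)\,f(y)$ for bounded continuous $f$, and then taking $\epsilon \downarrow 0$ (the pinned empirical mass at $0$ rules out leakage from small nonzero $\theta_{0,j}$) yields that the conditional empirical measure $|S_0^c(p)|^{-1} \sum_{j \in S_0^c} \delta_{\htheta^u_j}$ converges weakly a.s.\ to $\normal(0, \tau^2)$.

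The set $\{x : |x| \ge c_\alpha \tau\}$ has $\normal(0,\tau^2)$-null boundary, so the Portmanteau theorem, combined with the a.s.\ convergence of the estimated $\tau$ to the true $\tau$ (absorbed by a Slutsky-type perturbation at the boundary), gives
\begin{align*}
\frac{1}{|S_0^c(p)|} \sum_{j \in S_0^c(p)} \mathbf{1}\{P_j \le \alpha\} \;\xrightarrow{\text{a.s.}}\; 2\bigl(1-\Phi(c_\alpha)\bigr) \;=\; \alpha.
\end{align*}
Since the left-hand side is bounded by $1$, bounded convergence transfers this a.s.\ limit to the expectation appearing in the first display, completing the proof. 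The main technical obstacle is the conditional weak convergence just described: the event $\{\theta_{0,j}=0\}$ corresponds to an atom of the limit marginal of $\Theta_0$ and is not a continuity set, so Portmanteau cannot be invoked on the joint empirical measure directly. The extra hypothesis $|S_0(p)|/p \to \prob(\Theta_0 \neq 0)$ is precisely what controls the empirical atomic mass and makes the Lipschitz-cutoff argument go through.
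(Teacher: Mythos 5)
Your proposal is correct and takes essentially the same route as the paper's proof: the paper likewise establishes the almost-sure convergence of $|S_0^c(p)|^{-1}\sum_{i\in S_0^c(p)}T_{i,\bX}(y)$ to $\alpha$ via the distributional limit of Proposition~\ref{pro:lasso-limit}, then passes to the per-coordinate statement by bounded convergence and exchangeability of the columns of $\bX$ (you simply invoke exchangeability at the start rather than at the end). If anything, you are more explicit than the paper at the two delicate points—the atom of $\Theta_0$ at zero, which you handle with Lipschitz cutoffs using the hypothesis $|S_0(p)|/p\to\P(\Theta_0\neq 0)$, and the a.s.\ consistency of the MAD estimate of $\tau$—both of which the paper's proof of Theorem~\ref{thm:type_I_nstd} treats only implicitly.
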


A more general form of Theorem~\ref{thm:type_I} (cf. Theorem~\ref{thm:type_I_nstd})
is proved in Section~\ref{sec:Proofs}. We indeed prove the stronger claim that the following holds 
true almost surely
\begin{eqnarray}\label{eqn:strong-iid-alpha}
\lim_{p\to \infty} \frac{1}{|S^c_0(p)|} \sum_{i \in S^c_0(p)} T_{i,\bX}(\by) =\alpha\,.
\end{eqnarray}
The result of Theorem~\ref{thm:type_I} follows then by taking the expectation of both sides of Eq.~\eqref{eqn:strong-iid-alpha}
and using bounded
convergence theorem and exchangeability of the columns of $\bX$.

Our next theorem proves a lower bound for the power of the proposed test.
In order to obtain a non-trivial result, we need to make suitable assumption on
the parameter vectors $\mtx{\theta}_0 = \mtx{\theta}_0(p)$. In particular, we need
to assume that the non-zero entries of $\mtx{\theta}_0$ are lower bounded in
magnitude. If this were not the case, it would
be impossible to distinguish arbitrarily small parameters
$\theta_{0,i}$ from $\theta_{0,i}= 0$. 
(In Appendix~\ref{app:lambda}, we also provide an explicit formula for 
the regularization parameter $\lambda=\lambda(p_{\Theta_0}, \sigma, \ve, \delta)$ that
achieves this power.)
\begin{thm}\label{thm:power}
There exists a (deterministic) choice of
$\lambda=\lambda(\sigma,\eps)$ such that the following happens.

Let $\{(\mtx{\theta}_0(p), n(p),\sigma(p))\}_{p\in \naturals}$ be a converging sequence of instances
under the standard Gaussian design model. Assume that
$|S_0(p)|\le \ve p$, and for all $i \in S_0(p)$, $|\theta_{0,i}(p)| \ge \mu$ with $\mu= \mu_0 \sigma(p)/\sqrt{n(p)}$. 
for $i\in S_0(p)$, we have
\begin{eqnarray}
\lim_{p\to \infty} \prob_{\mtx{\theta}_0(p)}(T_{i,\bX}(\by) = 1) \ge G\Big(\alpha,\frac{\mu_0}{\tau_*}\Big)\,,
\end{eqnarray}
where $\tau_* = \tau_*(\sigma_0,\eps,\delta)$ is defined as follows
\begin{align}
\tau_*^2 &= \begin{cases}
 \dfrac{1}{1- M(\ve)/\delta} , & \text{if } \delta > M(\ve),\\
 \infty, & \text{if } \delta \le M(\ve).  
 \end{cases}\label{eqn:tau_*}
\end{align}
%
 Here, $M(\ve)$ is given by the following parametric expression in terms of the parameter $\kappa \in (0,\infty)$:
 \begin{align}\label{eqn:M_ve}
 \ve = \frac{2(\phi(\kappa) - \kappa \Phi(-\kappa))}{\kappa + 2(\phi(\kappa) - \kappa \Phi(-\kappa))},\;\;\;\;\;\;\;\;\;
 M(\ve) = \frac{2\phi(\kappa)}{\kappa + 2(\phi(\kappa) - \kappa \Phi(-\kappa))}.  
 \end{align}
\end{thm}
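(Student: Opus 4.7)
The strategy is to reduce the test to a scalar Gaussian hypothesis-testing problem via Proposition~\ref{pro:lasso-limit}, identify the effective noise level in that reduction with $\tau_*$ through the soft-thresholding state-evolution fixed point, and finally convert the resulting empirical-distribution statement into a per-coordinate probability bound using exchangeability of the design.

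With $\lambda$ chosen appropriately, Proposition~\ref{pro:lasso-limit} gives weak convergence of the empirical distribution of $\{(\theta_{0,i}, \htheta^u_i)\}_{i=1}^p$ to the law of $(\Theta_0, \Theta_0 + \tau Z)$, and of $\{r_i\}$ to $\normal(0,\tau^2)$. Since the sample median is weakly continuous at any atom-free law and the median of $|\normal(0,\tau^2)|$ equals $\tau\,\Phi^{-1}(3/4)$, the data-driven $\tau$ produced by step~2 of the procedure in Table~\ref{tbl:procedure1} converges almost surely to the $\tau$ of the proposition. Consequently, the event $\{T_{i,\bX}(y)=1\}=\{|\htheta^u_i/\tau|\ge \Phi^{-1}(1-\alpha/2)\}$ has asymptotic probability $G(\alpha,|\theta_{0,i}|/\tau)$ directly from definition~\eqref{eqn:G}.

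Next I would identify $\tau$ with $\tau_*$. The limiting $\tau$ satisfies a one-dimensional AMP state-evolution fixed point of the form $\tau^2 = \sigma_0^2 + \delta^{-1}\,\mathrm{risk}(p_{\Theta_0};\lambda,\tau)$, where the risk is that of soft-thresholding at level $\lambda\tau$ under additive noise of variance $\tau^2$. Optimizing over $\lambda$ and taking the supremum over $\eps$-sparse priors $p_{\Theta_0}$, the definition of $M(\eps)$ as the minimax soft-thresholding risk yields $\tau^2 \le \sigma_0^2/(1-M(\eps)/\delta)$ when $\delta>M(\eps)$, and a diverging fixed point when $\delta\le M(\eps)$, matching $\tau_*=\infty$ in~\eqref{eqn:tau_*}. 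In both cases $\tau \le \sigma_0\tau_*$. Combining with monotonicity of $G(\alpha,\cdot)$, the hypothesis $|\theta_{0,i}|\ge \mu_0 \sigma(p)/\sqrt{n(p)}$, and $\sigma(p)/\sqrt{n(p)}\to\sigma_0$ gives $|\theta_{0,i}|/\tau \ge \mu_0 \sigma_0/\tau \ge \mu_0/\tau_*$ in the limit, so that $G(\alpha,|\theta_{0,i}|/\tau)\ge G(\alpha,\mu_0/\tau_*)$.

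The main obstacle is converting the empirical-distribution limit of Proposition~\ref{pro:lasso-limit} into the per-coordinate probability bound claimed in the theorem. I would handle this via exchangeability: under the standard Gaussian design, the joint law of $(y,\bX)$ is invariant under any joint permutation of the columns of $\bX$ and the entries of $\theta_0$, so $\prob_{\theta_0}(T_{i,\bX}(y)=1)$ depends on $(i,\theta_0)$ only through $\theta_{0,i}$ and the multiset of remaining entries. Averaging $\ind\{T_{i,\bX}(y)=1\}$ over $i\in S_0(p)$ and applying bounded convergence to the empirical-distribution limit yields a bound on the average rejection probability over $S_0(p)$; the uniform lower bound $|\theta_{0,i}|\ge\mu$ together with monotonicity of $G$ then lifts this average bound to the per-coordinate bound for any fixed $i\in S_0(p)$. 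A secondary technical point is to verify that the minimax-optimal $\lambda$ falls within the range to which Proposition~\ref{pro:lasso-limit} applies, and that the worst-case prior used to certify $M(\eps)$ is compatible with the converging-sequence assumption.
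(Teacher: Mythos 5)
Your proposal follows essentially the same route as the paper's own proof: invoke the distributional limit of Proposition~\ref{pro:lasso-limit}, bound the limiting noise level by the worst case over $\ve$-sparse priors so that $\tau\le\sigma_0\tau_*$ via the minimax soft-thresholding risk of \cite{DMM09}, lower bound the average of the test indicators over $S_0(p)$ by $G(\alpha,\mu_0/\tau_*)$ using weak convergence and monotonicity of $G$, and then pass to the per-coordinate statement by taking expectations and exploiting exchangeability of the columns of $\bX$. The only cosmetic differences (making the consistency of the MAD estimate of $\tau$ explicit, and phrasing the final lift through monotonicity of $G$) do not change the argument, so this is correct and matches the paper.
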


Theorem \ref{thm:power} is proved in Section \ref{sec:Proofs}. We indeed prove the stronger claim that the following holds true almost surely:
\begin{eqnarray}\label{eqn:strong-iid-beta}
\lim_{p\to \infty} \frac{1}{|S_0(p)|} \sum_{i\in S_0(p)} T_{i,\bX}(\by) \ge G\Big(\alpha,\frac{\mu_0}{\tau_*}\Big)\,.
\end{eqnarray}
The result of Theorem~\ref{thm:power} follows then by taking the expectation of both sides of Eq.~\eqref{eqn:strong-iid-beta}
and using exchangeability of the columns of $\bX$.

Again, it is convenient to rephrase Theorem~\ref{thm:power} in terms
of the minimum value of $\mu$ for which we can achieve statistical
power $1-\beta\in (\alpha,1)$ at significance level $\alpha$. 
It is known that $M(\eps) = 2\eps\log(1/\eps)\,(1+O(\eps))$ \cite{DMM-NSPT-11}.
Hence, for $n\ge 2\, s_0\log(p/s_0)\, (1+O(s_0/p))$, we have $\tau_*^2= O(1)$. Since
$\lim_{u\to\infty}G(\alpha,u) = 1$, any pre-assigned statistical power
can be achieved by taking $\mu\ge C(\eps,\delta)\sigma/\sqrt{n}$ which
matches the fundamental limit established  in the previous section. 

Let us finally comment on the choice of the regularization parameter $\lambda$. 
Theorem \ref{thm:type_I} holds irrespective of $\lambda$, as long as it
is kept fixed in the asymptotic limit. In other words, control of type
I errors is fairly insensitive to the regularization parameters. On
the other hand, to achieve optimal minimax power, it is necessary to
tune $\lambda$ to the correct value.  
The tuned value of $\lambda = \lambda(p_{\Theta_0}, \sigma, \ve, \delta)$ for the standard Gaussian sequence model
is provided in Appendix~\ref{app:tau}.
Further, the factor $\sigma$ (and hence the need to
estimate the noise level) can be omitted if --instead of the Lasso--
we use the scaled Lasso \cite{SZ-scaledLasso}.
In subsection \ref{sec:experiment-std}, we discuss another way of
choosing $\lambda$ that also avoid estimating the noise level.

\subsection{Gaussian limit}\label{subsec:std-Gaussian}

Theorems \ref{thm:type_I} and \ref{thm:power} are based on an
asymptotic distributional characterization of the Lasso estimator
developed in~\cite{BayatiMontanariLASSO}.  We restate it here for the
reader's convenience.
\begin{thm}[\cite{BayatiMontanariLASSO}]\label{pro:lasso-limit} 
Let $\{(\mtx{\theta}_0(p), n(p),\sigma(p))\}_{p \in \naturals}$ be a converging sequence of instances of
the standard Gaussian design model.
Denote by $\mtx{\htheta} = \mtx{\htheta}(\by,\bX,\lambda)$ the Lasso estimator given as per
Eq.~(\ref{eqn:Lasso_cost}) and define $\mtx{\htheta}^u\in\reals^p$,
$\mtx{r}\in\reals^n$ by letting
\begin{align}
\mtx{\htheta}^u\equiv \mtx{\htheta} + \frac{\scale}{n} \, \bX^\sT (\mtx{y}-\bX \mtx{\htheta})\,,\;\;\;\;\;\;
\mtx{r} \equiv \frac{\scale}{\sqrt{n}} (\mtx{y}-\bX \mtx{\htheta})\,, 
\end{align}
with $\scale = (1- \|\mtx{\htheta}\|_0/n)^{-1}$. 

Then, with probability one, the empirical distribution of $\{(\theta_{0,i},\htheta^u_i)\}_{i=1}^p$
converges weakly to the probability distribution of
$(\Theta_0, \Theta_0 + \tau_0 Z)$, for some $\tau_0 \in \reals$, where $Z \sim
\normal(0,1)$, and $\Theta_0 \sim p_{\Theta_0}$ is
independent of $Z$. Furthermore,
with probability one, the empirical distribution of $\{r_i\}_{i=1}^p$ converges weakly to $\normal(0,\tau^2_0)$.

Finally $\tau_0\in \reals$ is defined by the unique solution of
Eqs.~(\ref{eq:Tau1}) and (\ref{eq:Tau2}) in Appendix \ref{app:tau}.
\end{thm}
In particular, this result implies that the empirical distribution of
$\{\htheta^u_i - \theta_{0,i}\}_{i=1}^p$ is asymptotically normal with
variance $\tau_0^2$. This naturally motivates  the use of $|\htheta^u_i|/\tau_0$
as a test statistics for hypothesis $H_{0,i}:\, \theta_{0,i}=0$.

The definitions of $\scale$ and $\tau$ in step 2 are also motivated by 
Theorem~\ref{pro:lasso-limit}.    In particular, $\scale (\mtx{y}-\bX \mtx{\htheta})/\sqrt{n}$
is asymptotically normal with variance $\tau_0^2$. This is used in
step 2, where $\tau$ is just the robust median absolute deviation
(MAD) estimator
(we choose this estimator since it is more resilient to outliers than the sample variance~\cite{HuberBook}).

%
 %
 \subsection{Numerical experiments}\label{sec:experiment-std}


As an illustration, we generated synthetic data from the linear model~\eqref{eqn:regression} with $\mtx{w} \sim \normal(0, \id_{p\times p})$ and the following configurations.

\emph{Design matrix:} For pairs of values $(n,p) =
\{(300,1000), (600,1000), (600, 2000)\}$, the design matrix is
generated from a realization of $n$ i.i.d. rows 
$\mtx{x}_i\sim \normal(0,\id_{p \times p})$. 

\emph{Regression parameters:}  We consider active sets $S_0$ with $|S_0| = s_0 \in \{10, 20, 25, 50, 100\}$, chosen uniformly at random from the index set $\{1, \cdots, p\}$. We also consider two different strengths of active parameters $\theta_{0,i} = \mu$, for $i \in S_0$, with $\mu \in \{0.1, 0.15\}$.

We examine the performance of \sdl\, (cf. Table~\ref{tbl:procedure1}) 
at significance levels $\alpha = 0.025, 0.05$. The experiments are done using {\sf glmnet}-package in
R that fits the entire Lasso path for linear regression
models. Let $\ve = s_0/p$ and $\delta = n/p$. We do not
assume $\ve$ is known, but rather estimate it as  $\bar{\ve} = 0.25\,
\delta/\log(2/\delta)$. The value of $\bar{\ve}$ 
is half the maximum sparsity level $\ve$ for the given $\delta$ such
that the Lasso estimator can correctly recover the parameter vector if
the measurements were
noiseless~\cite{DMM09,BayatiMontanariLASSO}. Provided it makes sense
to use Lasso at all, $\bar{\ve}$ is thus a reasonable ballpark estimate.

\begin{figure}[!h]
\centering
\includegraphics*[width =
3.6in]{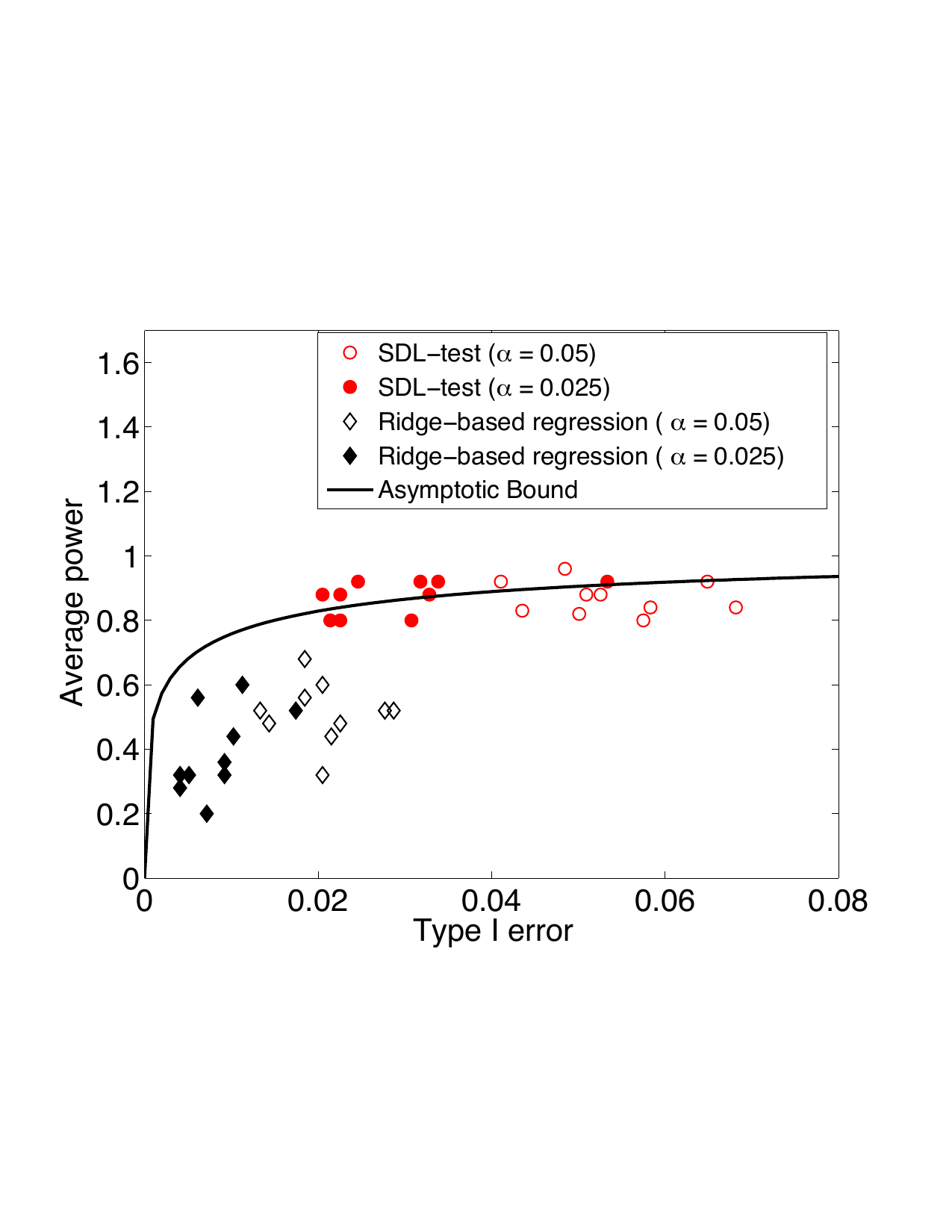}
\caption{Comparison between $\sdl$ (Table~\ref{tbl:procedure1}), ridge-based regression~\cite{BuhlmannSignificance} and the asymptotic bound for \sdl\, (established in Theorem~\ref{thm:power}). 
Here, $p = 1000, n = 600, s_0 = 25, \mu = 0.15$. }
\label{fig:plot_iid}
\end{figure}

\begin{table*}[!h]
\begin{center}
{\small
\begin{tabular}{c|cccc|}

{\bf Method} & Type I err & Type I err & Avg. power & Avg. power \\ 
&(mean) & (std.) & (mean) & (std)
\\
\cline{1-5} \cline{2-5}
\multicolumn{1}{c|}{\bf $\sdl$  $(1000, 600, 100, 0.1)$} &   0.05422 & 0.01069 & 0.44900 & 0.06951
    \\ 
 \multicolumn{1}{c|}{\bf Ridge-based regression $(1000, 600, 100, 0.1)$} & \color{red}0.01089 & \color{red} 0.00358& \color{red} 0.13600& \color{red} 0.02951
 \\
  \multicolumn{1}{c|}{\bf $\ldpe\,$ $(1000, 600, 100, 0.1)$} & \color{darkblue}0.02012 & \color{darkblue} 0.00417& \color{darkblue} 0.29503& \color{darkblue} 0.03248
 \\
 \multicolumn{1}{c|}{\bf Asymptotic Bound $(1000, 600, 100, 0.1)$} & \color{olivegreen}0.05 & \color{olivegreen}NA & \color{olivegreen}0.37692 & \color{olivegreen}NA
 \\
 \hline
\multicolumn{1}{c|}{\bf $\sdl$ $(1000, 600, 50, 0.1)$} &  0.04832 & 0.00681& 0.52000 & 0.06928
 \\ 
\multicolumn{1}{c|}{\bf Ridge-based regression $(1000, 600, 50, 0.1)$} &  \color{red} 0.01989& \color{red} 0.00533& \color{red} 0.17400  & \color{red} 0.06670
\\ 
\multicolumn{1}{c|}{\bf $\ldpe\,$ $(1000, 600, 50, 0.1)$} &  \color{darkblue} 0.02211& \color{darkblue} 0.01031& \color{darkblue} 0.20300  & \color{darkblue} 0.08630
\\ 
 \multicolumn{1}{c|}{\bf Asymptotic Bound $(1000, 600, 50, 0.1)$} & \color{olivegreen}0.05 & \color{olivegreen}NA & \color{olivegreen}0.51177 & \color{olivegreen}NA
\\
\hline 
\multicolumn{1}{c|}{\bf $\sdl$ $(1000, 600, 25, 0.1)$} &  0.05662 & 0.01502 & 0.56400 & 0.11384
 \\ 
\multicolumn{1}{c|}{\bf Ridge-based regression $(1000, 600, 25, 0.1)$} &  \color{red}  0.02431& \color{red}  0.00536& \color{red} 0.25600& \color{red}0.06586
\\
\multicolumn{1}{c|}{\bf $\ldpe\,$ $(1000, 600, 25, 0.1)$} &  \color{darkblue}  0.02305& \color{darkblue}  0.00862& \color{darkblue} 0.27900& \color{darkblue}0.07230
\\
\multicolumn{1}{c|}{\bf Asymptotic Bound $(1000, 600, 25, 0.1)$} & \color{olivegreen}0.05 & \color{olivegreen}NA & \color{olivegreen}0.58822 & \color{olivegreen}NA
\\ 
\end{tabular}
}
\end{center}
\caption{Comparison between $\sdl$ (Table~\ref{tbl:procedure1}), ridge-based regression~\cite{BuhlmannSignificance}, $\ldpe$~\cite{ZhangZhangSignificance} and the asymptotic bound for \sdl\, (established in Theorem~\ref{thm:power}) on the setup described in Section~\ref{sec:experiment-std}. The significance level is $\alpha = 0.05$. The means and the standard deviations are obtained by testing over $10$ realizations of the corresponding configuration. Here a quadruple such as $(1000, 600, 50, 0.1)$ denotes the values of $p = 1000$, $n = 600$, $s_0 = 50$, $\mu = 0.1$. \vspace{-0.2cm}}\label{tbl:iid_small_alpha05}
\end{table*}

The regularization parameter $\lambda$ is chosen as to satisfy 
\begin{align}
\lambda \scale = \kappa_* \tau
\end{align}
where $\tau$ and $\scale$ are determined in step 2 of the
procedure. Here $\kappa_* = \kappa_*(\bar{\ve})$ is the  minimax threshold value for
estimation using soft thresholding in the Gaussian sequence model, see
\cite{DMM-NSPT-11} and Remark~\ref{rem:lambda}.
Note that $\tau$ and $\scale$ in the equation above depend implicitly
upon $\lambda$. Since {\tt glmnet} returns the entire Lasso path, the
value of $\lambda$  solving the above equation can be computed by  the
bisection method. 

As mentioned above,  the control of type I error is fairly robust for a wide range of values of
$\lambda$. However, the above is an educated guess based on the
analysis of~\cite{DMM09,BayatiMontanariLASSO}. We also tried the values of
$\lambda$ proposed  for instance in
\cite{BuhlmannVanDeGeer,BuhlmannSignificance} on the basis of oracle
inequalities. 

Figure \ref{fig:plot_iid} shows the results of $\sdl$ and
the method of \cite{BuhlmannSignificance} for parameter values $ p =
1000, n = 600, s_0 = 25, \mu = 0.15$, and significance levels $\alpha
\in \{0.025, 0.05\}$. Each point in the plot corresponds to one
realization of this configuration (there are a total of $10$
realizations). We also depict the theoretical curve
$(\alpha,G(\alpha,\mu_0/\tau_*))$, predicted by
Theorem~\ref{thm:power}. The empirical results are in good agreement
with the asymptotic prediction.

\begin{figure}[!h]
\centering
\includegraphics*[width =3.3in]{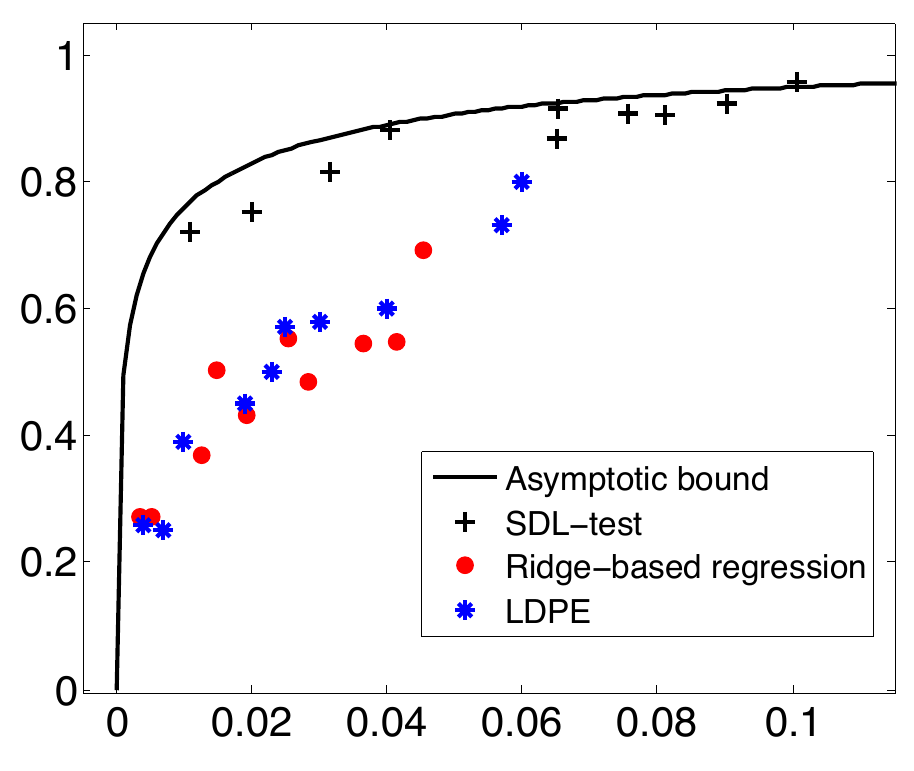}
\caption{Comparison between $\sdl\,$, ridge-based regression~\cite{BuhlmannSignificance}, and $\ldpe$~\cite{ZhangZhangSignificance}. The curve corresponds to the asymptotic bound for $\sdl\,$ as established in Theorem~\ref{thm:power}. For the same values of type I error achieved by methods, $\sdl\,$ results in a higher statistical power. Here, $p = 1000, n = 600, s_0 = 25, \mu = 0.15$. }
\label{fig:different_alpha}
\end{figure}
We compare $\sdl\,$ with the ridge-based regression method~\cite{BuhlmannSignificance}
and the low dimensional projection estimator ($\ldpe\,$) 
~\cite{ZhangZhangSignificance}. Table~\ref{tbl:iid_small_alpha05}
summarizes the results for a few
configurations $(p, n, s_0, \mu)$, and $\alpha = 0.05$. Simulation
results for a larger number of configurations and $\alpha = 0.05,
0.025$ are reported in Tables~\ref{tbl:iid_alpha05}
and~\ref{tbl:iid_alpha025} in Appendix~\ref{app:Simulation
  Results}. 

As demonstrated by these results, $\ldpe\,$\cite{ZhangZhangSignificance} and the ridge-based regression
\cite{BuhlmannSignificance} are both overly conservative. Namely, they achieve
smaller type I error than the prescribed level $\alpha$ and this comes
at the cost of a smaller statistical power than our testing
procedure. This is to be expected since the approach of
\cite{BuhlmannSignificance} and \cite{ZhangZhangSignificance} cover a broader class of  design matrices
$\bX$, and are not tailored to random designs.

Note that being overly conservative is a drawback, when this comes at
the expense of statistical power. The data analysts should be able to
decide the level of statistical significance $\alpha$, and obtain
optimal statistical power at that level.

The reader might wonder whether the loss in statistical power of methods in
\cite{BuhlmannSignificance} and \cite{ZhangZhangSignificance} is entirely due to the fact that 
these methods achieve a smaller number of false positives than requested.
In Fig.~\ref{fig:different_alpha}, we run \sdl\,, ridge-based regression~\cite{BuhlmannSignificance}, and $\ldpe\,$ for $\alpha \in \{0.01,0.02, \cdots, 0.1\}$ and for $10$ realizations of the
problem per each value of $\alpha$. We plot the average type I error and the average power of each method
versus $\alpha$. As we see 
\emph{even for the same empirical fraction of type I errors}, $\sdl\,$ results in a higher statistical power.

%
%
\section{Hypothesis testing for nonstandard Gaussian
  designs}\label{sec:nstd}

In this section, we generalize our testing procedure to nonstandard Gaussian
design models where the rows of the design matrix $\bX$ are drawn independently
from distribution $\normal(0, \mtx{\Sigma})$. 

We first describe the generalized \sdl\, procedure in subsection \ref{sec:gen-proc}
under the assumption that $\mtx{\Sigma}$ is known. 
In subsection \ref{sec:gen-analysis}, we show that this generalization
can be justified from a certain generalization of the Gaussian limit theorem
\ref{pro:lasso-limit} to nonstandard Gaussian designs. 

Establishing such a generalization of Theorem \ref{pro:lasso-limit}
appears extremely challenging. We nevertheless show that such a limit theorem
 follows from the replica method of statistical physics
in section \ref{sec:gen-gauss}. We also show that a version of this
limit theorem is relatively straightforward in the regime $s_0=o(n/(\log
p)^2)$. 

Finally, in Section \ref{sec:gen-covariance} we discuss a procedure for
estimating  the covariance $\mtx{\Sigma}$ (cf. \Subroutine\, in
Table~\ref{tbl:procedure2}).
Appendix \ref{app:CovarianceFree} proposes an alternative implementation that
does not estimate $\mtx{\Sigma}$ but instead bounds the effect of unknown $\mtx{\Sigma}$.

\subsection{Hypothesis testing procedure}\label{sec:gen-proc}

     \begin{table}[!t]
     \caption{\sdl\, for testing hypothesis $H_{0,i}$ under nonstandard Gaussian design model}\label{tbl:SDL}
     \noindent\rule{\textwidth}{1pt}\\
     \sdl: Testing hypothesis $H_{0,i}$ under nonstandard Gaussian design model.\\
     \noindent\rule{\textwidth}{1pt}\\
     \textbf{Input:} regularization parameter $\lambda$, significance level $\alpha$, covariance matrix $\mtx{\Sigma}$ \\
     \textbf{Output:} p-values $P_i$, test statistics $T_{i,\bX}(\by)$\\
     1:\quad Let
     \vspace{-0.5cm}
     \begin{align*}
     \mtx{\htheta}(\lambda) = \text{argmin}_{\mtx{\theta}\in\reals^p} \,\,\Big\{ \frac{1}{2n} \|\by-\bX\mtx{\theta}\|^2 + \lambda \|\mtx{\theta}\|_1\Big\}.
     \end{align*}
      2:\quad Let 
      \vspace{-0.3cm}
      \begin{align} \label{eq:Step2}
       \scale = \left(1-\frac{1}{n} \|\mtx{\htheta}(\lambda)\|_0\right)^{-1},\;\;\;\;\;
      \tau = \frac{1}{\Phi^{-1}(0.75)} \frac{\scale}{\sqrt{n}} |(\mtx{y} - \bX\mtx{\htheta}(\lambda))|_{(n/2)},
      \end{align}
      where for $\mtx{v} \in \reals^K$, $|\mtx{v}|_{\ell}$ is the $\ell$-th largest entry in the vector $(|v_1|,\cdots,|v_n|)$. \\
       3:\quad Let
       \vspace{-0cm}
       \begin{align*}
        \mtx{\htheta}^u = \mtx{\htheta}(\lambda) + \frac{\scale}{n} \mtx{\Sigma}^{-1} \bX^\sT(\mtx{y} - \bX \mtx{\htheta}(\lambda)).
        \end{align*}
       4:\quad Assign the p-values $P_i$ for the test $H_{0,i}$ as follows.
       \vspace{-0.2cm}
       \begin{align*}
        P_i = 2\bigg(1- \Phi \Big(\Big|\frac{\htheta^u_i}{\tau [(\mtx{\Sigma}^{-1})_{ii}]^{1/2}}\Big|\Big)\bigg).
        \end{align*}
        5:\quad The decision rule is then based on the p-values:
	\begin{align*}
	T_{i,\bX}(\by)=
	\begin{cases}
	1&\text{if  $P_i \le \alpha$} \quad \quad \; (\text{reject the null hypothesis }H_{0,i}),\\
	0& \text{otherwise}  \quad \quad  \text{(accept the null hypothesis)}.
	\end{cases}
	\end{align*}
      \noindent\rule{\textwidth}{1pt}
      \end{table}

The hypothesis testing procedure \sdl for general Gaussian designs is
defined in Table~\ref{tbl:SDL}. 

The basic intuition of this generalization is that  $(\htheta^u_i-\htheta_{0,i})/(\tau
[(\mtx{\Sigma}^{-1})_{ii}]^{1/2})$ is expected to be asymptotically $\normal(0,1)$, whence 
the definition of (two-sided) p-values $P_i$ follows as in step 4.
Parameters $\scale$ and $\tau$ in step 2 are defined in the same manner to 
the standard Gaussian designs.

\subsection{Asymptotic analysis}\label{sec:gen-analysis}

For given dimension $p$, an \emph{instance} of the nonstandard Gaussian design model
is defined by the tuple $(\mtx{\Sigma}, \mtx{\theta}_0, n,\sigma)$, where $\mtx{\Sigma} \in \reals^{p \times p}$,
$\mtx{\Sigma} \succ 0$, $\mtx{\theta}_0 \in \reals^p$, $n \in \naturals$,
$\sigma\in\reals_+$. We are interested in the asymptotic
properties of sequences of instances indexed by the problem dimension 
$\{(\mtx{\Sigma}(p), \mtx{\theta}_0(p), n(p),\sigma(p))\}_{p \in \naturals}$. Motivated by Proposition~\ref{pro:lasso-limit},
we define a property of a sequence of instances that we refer to as \emph{standard distributional limit}.
\begin{definition}\label{def:SDL}
A sequence of instances $\{(\mtx{\Sigma}(p), \mtx{\theta}_0(p), n(p),\sigma(p))\}_{p \in \naturals}$
 indexed by $p$ is said to have an (almost sure) \emph{standard distributional limit}
if there exist $\tau,\scale\in\reals$ (with $\scale$ potentially
 random, and both $\tau$, $\scale$ potentially depending on $p$), such that the following holds. 
Denote by $\mtx{\htheta} = \mtx{\htheta}(y,\bX,\lambda)$ the Lasso estimator given as per
Eq.~(\ref{eqn:Lasso_cost}) and define $\mtx{\htheta}^u\in\reals^p$,
$\mtx{r}\in\reals^n$ by letting
\begin{align}
\mtx{\htheta}^u\equiv \mtx{\htheta} + \frac{\scale}{n} \, \mtx{\Sigma}^{-1} \bX^\sT (\mtx{y}-\bX \mtx{\htheta})\,,\;\;\;\;\;
\mtx{r} \equiv \frac{\scale}{\sqrt{n}} (\mtx{y}-\bX \mtx{\htheta}). \label{eq:ThetauDef}
\end{align}
Let $v_i = (\theta_{0,i}, (\htheta^u_i-\theta_{0,i})/\tau, (\mtx{\Sigma}^{-1})_{ii})$, for $1\le i \le p$, and $\nu^{(p)}$ be the empirical distribution of $\{v_i\}_{i=1}^{p}$ defined as
\begin{eqnarray}
\nu^{(p)} = \frac{1}{p} \sum_{i=1}^p \delta_{v_i}\,,  \label{eq:EmpDef}
\end{eqnarray}
where $\delta_{v_i}$ denotes the Dirac delta function centered at $v_i$. Then, with probability one, the empirical distribution $\nu^{(p)}$
converges weakly to a probability measure $\nu$ on $\reals^3$ as $p
\to \infty$. Here, $\nu$ is the probability distribution of
$(\Theta_0,  \Upsilon^{1/2} Z, \Upsilon)$, where $Z \sim
\normal(0,1)$, and $\Theta_0$ and $\Upsilon$ are random variables
independent of $Z$. Furthermore,
with probability one, the empirical distribution of $\{r_i/\tau\}_{i=1}^n$ converges weakly to $\normal(0,1)$.
\end{definition}
\begin{remark}
This definition is non-empty
by Theorem \ref{pro:lasso-limit}. Indeed, if 
$\{ (\mtx{\theta}_0(p), n(p),\sigma(p))\}_{p \in \naturals}$ is
converging as per Definition \ref{def:converging}, and $a>0$ is a
constant, then Theorem \ref{pro:lasso-limit} states that 
$\{(\mtx{\Sigma}(p)= a\,\id_{p\times p}, \mtx{\theta}_0(p),
n(p),\sigma(p))\}_{p \in \naturals}$ has a standard distributional limit.
\end{remark}

Proving the standard distributional limit for
general sequences
$\{(\mtx{\Sigma}(p),\mtx{\theta}_0(p),$ $n(p),\sigma(p))\}_{p\in\naturals}$ is an
outstanding mathematical challenge. In sections \ref{sec:gen-gauss}
and \ref{sec:discussion} we discuss
both rigorous and non-rigorous evidence towards its validity. The
numerical simulations in Sections \ref{sec:experiment-nstd} and
\ref{sec:discussion} further support the usefulness of this notion.

We will next show that the \sdl\, procedure is appropriate for any 
random design model for which the standard distributional limit holds.
Our first theorem is a generalization of Theorem~\ref{thm:type_I} to
this setting.
\begin{thm}\label{thm:type_I_nstd}
Let $\{(\mtx{\Sigma}(p),\mtx{\theta}_0(p), n(p),\sigma(p))\}_{p\in\naturals}$ be a sequence of
instances for which a standard distributional limit holds. 
Further assume $\lim_{p\to \infty} |S_0(p)|/p =\P(\Theta_0 \neq 0)$. 
Then, 
\begin{eqnarray}
\lim_{p\to \infty} \frac{1}{|S^c_0(p)|} \sum_{i \in S^c_0(p)} \prob_{\mtx{\theta}_0(p)}(T_{i,\bX}(\by)=1) = \alpha\,.
\end{eqnarray}
\end{thm}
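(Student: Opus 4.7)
The plan is to reduce the statement to a conditional weak-convergence result for the empirical distribution of the test quantities restricted to $i\in S_0^c(p)$, and then to apply the Portmanteau theorem to the indicator set defining $T_{i,\bX}(y)=1$. As a preliminary observation, the empirical scale $\hat\tau$ from step~2 of \sdl\ converges almost surely to the SDL parameter $\tau$: by Definition~\ref{def:SDL} the empirical distribution of the residuals $\{r_i\}_{i=1}^n$ converges weakly to $\normal(0,\tau^2)$, so its sample median converges to $\tau\,\Phi^{-1}(3/4)$, which gives $\hat\tau\to\tau$.

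Next I would extract the conditional distributional limit from the joint SDL. Writing $\xi_i=(\hat\theta^u_i,(\Sigma^{-1})_{i,i})$ and $\nu^{(p)}=p^{-1}\sum_i\delta_{(\theta_{0,i},\xi_i)}$, the SDL gives $\nu^{(p)}\Rightarrow\nu$, where $\nu$ is the law of $(\Theta_0,\Theta_0+\tau\Upsilon^{1/2}Z,\Upsilon)$. The claim is that $\mu^{(p)}\equiv|S_0^c(p)|^{-1}\sum_{i\in S_0^c(p)}\delta_{\xi_i}$ converges weakly to the conditional law $\mu$ of $(\tau\Upsilon^{1/2}Z,\Upsilon)$ given $\Theta_0=0$. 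For any bounded continuous $g\ge 0$, take the continuous bump $\phi_\eps(a)=\max(1-|a|/\eps,0)$ and note
\begin{align*}
\frac{1}{p}\sum_i\phi_\eps(\theta_{0,i})g(\xi_i)-\frac{\|g\|_\infty}{p}\bigl|\{i:0<|\theta_{0,i}|<\eps\}\bigr|\;\le\;\frac{1}{p}\sum_{i\in S_0^c(p)}g(\xi_i)\;\le\;\frac{1}{p}\sum_i\phi_\eps(\theta_{0,i})g(\xi_i).
\end{align*}
By the SDL, for each fixed $\eps$ the outer term tends to $\E[\phi_\eps(\Theta_0)g(\Theta_0+\tau\Upsilon^{1/2}Z,\Upsilon)]$, which in turn tends to $\P(\Theta_0=0)\,\E[g(\tau\Upsilon^{1/2}Z,\Upsilon)\mid\Theta_0=0]$ as $\eps\to 0$ by dominated convergence. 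Along the (co-countably many) continuity points of the law of $|\Theta_0|$, the set $\{|a|\le\eps\}$ is a $\nu$-continuity set, so $p^{-1}|\{i:|\theta_{0,i}|\le\eps\}|\to\P(|\Theta_0|\le\eps)$; combined with the hypothesis $|S_0^c(p)|/p\to\P(\Theta_0=0)$ this forces $p^{-1}|\{i:0<|\theta_{0,i}|\le\eps\}|\to\P(0<|\Theta_0|\le\eps)\to 0$ as $\eps\to 0$, rendering the defect term negligible. Dividing by $|S_0^c(p)|/p\to\P(\Theta_0=0)$ yields $\int g\,d\mu^{(p)}\to\E[g(\tau\Upsilon^{1/2}Z,\Upsilon)\mid\Theta_0=0]$, i.e., the asserted conditional weak convergence.

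Finally I would apply this limit to the rejection indicator $\ind_{A_{\hat\tau}}(\xi_i)$, where $A_t\equiv\{(b,c):|b|\ge t\,c^{1/2}\Phi^{-1}(1-\alpha/2)\}$. The convergence $\hat\tau\to\tau$ allows a sandwich by $\ind_{A_{\tau\pm\delta}}$ followed by $\delta\to 0$. Since $Z$ is absolutely continuous and $\Upsilon>0$ almost surely (as $\Sigma\succ 0$), the boundary of $A_\tau$ carries zero $\mu$-mass, so the conditional weak convergence and the independence of $Z$ from $\Theta_0$ built into Definition~\ref{def:SDL} give
\begin{align*}
\frac{1}{|S_0^c(p)|}\sum_{i\in S_0^c(p)}T_{i,\bX}(y)\;\longrightarrow\;\P\bigl(|Z|\ge\Phi^{-1}(1-\alpha/2)\bigr)=\alpha.
\end{align*}
Taking expectations over $\bX$ and the noise and using bounded convergence (everything is at most $1$) proves the theorem.

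The main technical obstacle is the conditional weak-convergence step in the second paragraph: the natural joint test function $\ind\{\Theta_0=0\}\,g(\xi)$ is discontinuous on the $\nu$-set $\{\Theta_0=0\}$, which carries positive mass, so Portmanteau does not apply directly. The hypothesis $|S_0(p)|/p\to\P(\Theta_0\ne 0)$ is precisely what pins down the empirical mass at the atom $\{0\}$ and lets the continuous sandwich above close; it is essential, not cosmetic.
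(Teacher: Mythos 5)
Your proposal is correct and follows essentially the same route as the paper's proof: restrict the SDL empirical convergence to the null coordinates, identify the conditional limit of $\htheta^u_i/(\tau[(\Sigma^{-1})_{ii}]^{1/2})$ given $\Theta_0=0$ as standard normal, conclude the almost-sure limit $\alpha$ for the averaged test statistics, and pass to expectations by bounded convergence. Your bump-function sandwich around the atom $\{\Theta_0=0\}$ (where the hypothesis $|S_0(p)|/p\to\P(\Theta_0\neq 0)$ enters) and the explicit consistency argument for the estimated $\tau$ simply make rigorous two steps that the paper's chain of equalities treats implicitly.
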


 The proof of Theorem~\ref{thm:type_I_nstd} is deferred to Section~\ref{sec:Proofs}. In the proof, we show 
 the stronger result that the following holds true almost surely  
 \begin{eqnarray}\label{eqn:strong-nstd-alpha}
 \lim_{p\to \infty} \frac{1}{|S^c_0(p)|} \sum_{i \in S^c_0(p)} T_{i,\bX}(\by) = \alpha\,.
 \end{eqnarray}
 The result of Theorem~\ref{thm:type_I_nstd} follows then by taking the expectation of both sides
 of Eq.~\eqref{eqn:strong-nstd-alpha} and using bounded
 convergence theorem.
 
The following theorem characterizes the power of \sdl\, for
 general $\mtx{\Sigma}$, and under the
assumption that a standard distributional limit holds .
\begin{thm}\label{thm:power2}
 Let $\{(\mtx{\Sigma}(p), \mtx{\theta}_0(p), n(p),\sigma(p))\}_{p\in
  \naturals}$ be a sequence of instances with standard distributional
limit. Assume (without loss of generality) $\sigma(p) = \sqrt{n(p)}$,
and further
$|\theta_{0,i}(p)|/[(\mtx{\Sigma}^{-1})_{ii}]^{1/2} \ge \mu_0$ for all
$i \in S_0(p)$,
and $\lim_{o\to \infty}|S_0(p)|/p = \prob(\Theta_0\neq 0)\in (0,1)$.
Then,
\begin{eqnarray}
\lim_{p\to \infty} \frac{1}{|S_0(p)|} \sum_{i\in S_0(p)}
\prob_{\mtx{\theta}_0(p)}(T_{i,\bX}(\by) =1) \ge G\Big(\alpha,\frac{\mu_0}{\tau}\Big)\,.
\end{eqnarray}
\end{thm}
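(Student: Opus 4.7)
The plan is to follow the blueprint of Theorem~\ref{thm:type_I_nstd}: first establish the almost sure lower bound
$$\liminf_{p\to\infty} \frac{1}{|S_0(p)|} \sum_{i \in S_0(p)} T_{i,\bX}(y) \;\ge\; G\Big(\alpha,\frac{\mu_0}{\tau}\Big)$$
and then pass to expectations via Fatou's lemma, using that the summands are $\{0,1\}$-valued and hence uniformly bounded. By the definition of the test in Table~\ref{tbl:SDL}, $T_{i,\bX}(y)=1$ if and only if $|\htheta^u_i| \ge \tau \Phi^{-1}(1-\alpha/2) \sqrt{(\Sigma^{-1})_{i,i}}$, so $T_{i,\bX}(y)\,\ind_{i \in S_0(p)}$ equals $f(v_i)$ where $v_i = (\theta_{0,i},\htheta^u_i,(\Sigma^{-1})_{i,i})$ and
$$f(\theta,\hat\theta,s) \;=\; \ind_{\theta \neq 0}\,\ind_{|\hat\theta| \ge \tau \Phi^{-1}(1-\alpha/2)\sqrt{s}}\,.$$
By the standard distributional limit, the empirical measure $\mu_p = p^{-1}\sum_i \delta_{v_i}$ converges weakly almost surely to the law $\nu$ of $(\Theta_0,\Theta_0 + \tau\sqrt{\Upsilon}Z,\Upsilon)$.

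Since $f$ is discontinuous, the next step is to sandwich it from below by the indicator of the open set $A = \{(\theta,\hat\theta,s):\theta \neq 0,\; |\hat\theta| > \tau \Phi^{-1}(1-\alpha/2)\sqrt{s}\}$. Because $\Upsilon > 0$ almost surely, the Portmanteau theorem gives
$$\liminf_{p \to \infty} \frac{1}{p} \sum_{i=1}^p T_{i,\bX}(y)\, \ind_{i\in S_0(p)} \;\ge\; \nu(A)$$
almost surely. Computing $\nu(A)$ by conditioning on $(\Theta_0,\Upsilon)$ and recognising the two-sided Gaussian tail yields
$$\nu(A) \;=\; \prob(\Theta_0 \neq 0)\,\E\!\left[G\!\left(\alpha,\frac{\Theta_0}{\tau\sqrt{\Upsilon}}\right)\,\Big|\,\Theta_0 \neq 0\right]\!.$$

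To convert this into a uniform bound, the finite-$p$ magnitude hypothesis must be pushed through to the limit. Applying Portmanteau to the open set $\{(\theta,\hat\theta,s):\theta \neq 0,\;|\theta| < \mu_0\sqrt{s}\}$, whose $\mu_p$-mass is zero for every $p$ by hypothesis, yields $|\Theta_0|/\sqrt{\Upsilon} \ge \mu_0$ almost surely on $\{\Theta_0 \neq 0\}$. Since $G(\alpha,\cdot)$ is even and monotone increasing in $|u|$ (a direct check from its definition), the conditional expectation is bounded below by $G(\alpha,\mu_0/\tau)$. Finally, to cancel the factor $\prob(\Theta_0 \neq 0)$, I would show that $|S_0(p)|/p \to \prob(\Theta_0 \neq 0)$ almost surely: $|S_0(p)|/p$ coincides with both $\mu_p(\{\theta\neq 0\})$ and $\mu_p(\{|\theta| \ge \mu_0 \sqrt{s}\})$ (using the magnitude hypothesis), so open-set Portmanteau on the former gives $\liminf \ge \prob(\Theta_0 \neq 0)$ while closed-set Portmanteau on the latter gives $\limsup \le \prob(|\Theta_0| \ge \mu_0 \sqrt{\Upsilon}) = \prob(\Theta_0 \neq 0)$ after the magnitude transfer.

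The principal obstacle is that the SDL supplies only weak convergence of $\mu_p$ to $\nu$, whereas the testing indicator, the magnitude cutoff, and the support indicator are all discontinuous. The remedy is to approximate each from the correct side by an open or closed set whose $\nu$-boundary has zero mass, so that Portmanteau applies. The open-set sandwich handles the liminf direction for the tail event and the magnitude transfer simultaneously; the matching closed-set bound on $|S_0(p)|/p$, which uses in an essential way that the $\mu_p$-mass of $\{\theta\neq 0,\,|\theta|<\mu_0\sqrt{s}\}$ is identically zero for every $p$, closes the argument.
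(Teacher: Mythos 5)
Your proposal is correct and follows essentially the same route as the paper's own proof: use the standard distributional limit to pass the empirical average of the rejection indicators over $S_0(p)$ to the law of $(\Theta_0,\Theta_0+\tau\Upsilon^{1/2}Z,\Upsilon)$, transfer the finite-$p$ magnitude hypothesis to $|\Theta_0|/\Upsilon^{1/2}\ge\mu_0$, invoke monotonicity of $u\mapsto G(\alpha,u)$, normalize by $|S_0(p)|/p\to\P(\Theta_0\neq 0)$, and take expectations of the almost-sure bound. The only difference is that you make the Portmanteau open/closed-set approximations and the Fatou step explicit, which tightens details the paper treats informally but does not change the argument.
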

Theorem \ref{thm:power2} is proved in Section \ref{sec:Proofs}. We indeed prove the stronger result that
the following holds true almost surely
\begin{eqnarray}
\lim_{p\to \infty} \frac{1}{|S_0(p)|} \sum_{i \in S_0(p)}T_{i,\bX}(\by) \ge G\Big(\alpha,\frac{\mu_0}{\tau}\Big)\,.
\end{eqnarray}

We also notice that in contrast to Theorem \ref{thm:power}, where $\tau_*$ has an explicit formula
that leads to an analytical lower bound for the power (for a suitable choice of $\lambda$),
in Theorem~\ref{thm:power2}, $\tau$ depends upon $\lambda$ implicitly and can be estimated from the data as in step 3
of \sdl\, procedure. The result of Theorem~\ref{thm:power2} holds for any value of $\lambda$.

\subsection{Gaussian limit for $n\gg s_0(\log p)^2$}

In the following theorem we show that if sample size $n$
asymptotically dominates $s_0 (\log p)^2$, then 
 the standard distributional limit can be established rigorously.  
\begin{thm}\label{thm:Replica-Rigor}
Assume the sequence of instances $\{\mtx{\Sigma}(p), \mtx{\theta}_0(p), n(p), \sigma(p)\}_{p\in \naturals}$
such that, as $p\to \infty$ (letting $s_0 = \|\mtx{\theta}_0(p)\|_0$): 
\begin{itemize}
\item[$(i)$] $n(p)\le p$, and $s_0 (\log p)^2/n(p) \to 0$;
\item[$(ii)$] $\sigma(p)^2/n(p) \to \sigma_0^2 >0$; 
\item[$(iii)$] There exist constants $c_{\rm min},c_{\rm max}>0$ such that 
the eigenvalues of $\mtx{\Sigma}$ lie in the interval $[c_{\rm
  min},c_{\rm max}]$: $c_{\rm min}\le\lambda_{\rm
  min}(\mtx{\Sigma})\le \lambda_{\rm max}(\mtx{\Sigma})\le c_{\rm
  max}$;
\item[$(iv)$] The empirical distribution of
  $\{(\mtx{\Sigma}^{-1})_{ii})\}_{1\le i\le p}$
  converges weakly to the probability distribution of the random
  variable $\Upsilon$;
\item[$(v)$] The regularization parameter is $\lambda = C_*\sigma
  \sqrt{(\log p)/n}$ for $C_*= C_*(c_{\rm min},c_{\rm max})$ a sufficiently large constant.
\end{itemize}
Then the sequence has a standard distributional limit with $\scale =
(1-\|\mtx{\htheta}(\lambda)\|_0/n)^{-1}$
and $\tau=\sigma_0$. Alternatively, $\tau$ can be taken to be a
solution of Eq.~(\ref{eq:ClaimedTau})
below.
\end{thm}

Theorem~\ref{thm:Replica-Rigor} is proved in
Section~\ref{proof:Replica-Rigor}. The proof uses techniques from our
conference paper \cite{JavanMon-OptSample}. 

Notice that this result does allow to control type I errors using
Theorem \ref{thm:type_I_nstd}, but does not allow to lower bound the
power, using Theorem \ref{thm:power2}, since $|S_0(p)|/p\to 0$. A lower bound on the power
under the same assumptions presented in this section can be found in \cite{JavanMon-OptSample}. 
In the present paper we focus instead on the case $|S_0(p)|/p$ bounded
away from $0$.

\subsection{Gaussian limit via the replica heuristics for smaller
  sample size $n$}\label{sec:gen-gauss}

As mentioned above, the standard distributional limit follows from Theorem
\ref{pro:lasso-limit} for $\Sigma = \id_{p\times p}$. Even in this
simple case, the proof is rather challenging
\cite{BayatiMontanariLASSO}. Partial generalization to non-gaussian
designs and other convex problems appeared recently in \cite{BM-Universality} and \cite{oymak2013squared},
each requiring over 50 pages of proofs.

On the other hand, these and similar asymptotic  results can be derived
heuristically using the `replica method' from statistical physics. 
In Appendix~\ref{app:replica}, we use this approach to derive the
following claim\footnote{In Appendix \ref{app:replica} we derive
  indeed a more general result, where the $\ell_1$ regularization is
  replaced by an arbitrary separable penalty.}.
\begin{rclaim}\label{claim:Replica}
Assume the sequence of instances
$ \{(\mtx{\Sigma}(p),\mtx{\theta}_0(p),n(p),\sigma(p))\}_{p\in\naturals}$ to be
such that, as $p\to\infty$: $(i)$ $n(p)/p\to\delta>0$; $(ii)$
$\sigma(p)^2/n(p)\to \sigma_0^2>0$; $(iii)$ The sequence of functions
\vspace{-0.2cm}
\begin{eqnarray}
\Energy^{(p)}(a,b)\equiv \frac{1}{p}\,\E\min_{\mtx{\theta}\in\reals^p}
\Big\{\frac{b}{2}\|\mtx{\theta}-\mtx{\theta}_0-\sqrt{a}\mtx{\Sigma}^{-1/2}\mtx{z}\|_{\mtx{\Sigma}}^2 +
\lambda\|\mtx{\theta}\|_1\Big\}\, ,\label{eq:ReplicaAssumption}
\end{eqnarray}  
 with $\|\mtx{v}\|_{\mtx{\Sigma}}^2\equiv\<\mtx{v},\mtx{\Sigma} \mtx{v}\>$ and $\mtx{z}\sim \normal(0,\id_{p\times p})$
admits a differentiable limit $\Energy(a,b)$ on $\reals_+\times\reals_+$, with
$\nabla \Energy^{(p)}(a,b)\to \nabla \Energy(a,b) $. Then
the sequence has a standard distributional limit. 
Further let
\begin{align}
\eta_b(\by) &\equiv \underset{\mtx{\theta}\in\reals^p}{\arg\min}\Big\{\frac{b}{2}\|\mtx{\theta}-\by\|_{\mtx{\Sigma}}^2 +
\lambda\|\mtx{\theta}\|_1\Big\}\, ,\label{eq:ProximalOperator}\\
\Ena(a,b) 
&\equiv\lim_{p\to\infty}\frac{1}{p}\E\big\{\big\|\eta_{b}(\mtx{\theta}_0+\sqrt{a}\mtx{\Sigma}^{-1/2}\mtx{z})-\mtx{\theta}_0\big\|_{\mtx{\Sigma}}^2\big\}\,
,
\end{align}
where the the limit exists by the above assumptions on the convergence
of $\Energy^{(p)}(a,b)$. Then, the parameters 
$\tau$ and $\scale$ of the standard distributional limit are obtained
by setting $\scale = (1-\mtx{\htheta}/n)^{-1}$ and solving the
following with respect to $\tau^2$:
\begin{align}
\tau^2 = \sigma_0^2 +\frac{1}{\delta}\, \Ena(\tau^2,1/\scale)\,. \label{eq:ClaimedTau}
\end{align}
\end{rclaim}
In other words, the replica method indicates that the standard
distributional limit holds for a large class of non-diagonal
covariance
structures $\mtx{\Sigma}$. It is worth stressing that convergence assumption for the sequence
$\Energy^{(p)}(a,b)$ is quite mild, and is satisfied by a large family
of covariance matrices.  For instance,  it can be proved that it holds
for block-diagonal matrices $\mtx{\Sigma}$ as long as the blocks have
bounded length and the blocks empirical
distribution converges.

The replica method is  a non-rigorous but highly sophisticated 
calculation procedure that has proved successful in a number of very difficult problems
in probability theory and probabilistic combinatorics.
Attempts to make the replica method rigorous have been pursued over the last 30 years
by some world-leading mathematicians 
\cite{TalagrandVolI,panchenko2013sherrington,guerra2003broken,aizenman2003extended}.
This effort achieved spectacular successes, but so far does not provide
tools to prove the above replica claim. In particular, the rigorous work mainly
focuses on `i.i.d. randomness',
corresponding to the case covered by Theorem  \ref{pro:lasso-limit}.

Over the last ten years, the replica method has been used to derive a number
of fascinating results in information theory and communications theory, see e.g. 
\cite{TanakaCDMA,GuoVerdu,tanaka2005approximate,campo2011large,wu2012optimal}. 
More recently, several groups used it successfully in the analysis of
high-dimensional sparse regression under
standard Gaussian designs
\cite{RanganFletcherGoyal,KabashimaTanaka,BaronGuoShamai,wu2012optimal,TakedaKabashima,tulino2011support,KabashimaOrthogonal}.
The rigorous analysis of ours and other groups
\cite{MoT06,BayatiMontanariLASSO,BM-Universality,oymak2013squared}
 subsequently confirmed these heuristic calculations in several cases.

\vspace{0.3cm}

There is a fundamental reason that makes establishing the standard
distributional limit a challenging task. This requires in fact to
characterize the distribution of the estimator (\ref{eqn:Lasso_cost}) in a regime where
the standard deviation of $\htheta_i$ is of the same
order as its mean. Further, $\htheta_i$ does not converge to the true
value $\theta_{0,i}$, hence making perturbative arguments ineffective.

The analysis becomes easier for a larger number of samples. In Theorem
\ref{thm:Replica-Rigor} below we will show that (a suitable version
of) the standard distributional holds for $n$ asymptotically larger
than $s_0(\log p)^2$. This uses methods from our companion paper \cite{confidenceJM}.

\subsection{Covariance estimation}\label{sec:gen-covariance}

So far we assumed that the design covariance  $\mtx{\Sigma}$ is
known. 
This  setting is relevant for 
semi-supervised learning applications, where the data analyst has access 
to a large number $N\gg p$ of `unlabeled examples'. These
are i.i.d. feature vectors  $\mtx{u}_1$, $\mtx{u}_2$,\dots $\mtx{u}_{N}$
with $\mtx{u}_1\sim\normal(0,\mtx{\Sigma})$ distributed as
$\mtx{x}_1$, for which the response variable $y_i$ is not
available. In this case $\mtx{\Sigma}$   can be estimated accurately
by $N^{-1}\sum_{i=1}^n \mtx{u}_i\mtx{u}_i^{\sT}$.
We refer to \cite{chapelle2006semi} for further background on such applications.

In other applications, $\mtx{\Sigma}$ is unknown and no additional
data is  available. In this case we proceed as follows:
\begin{enumerate}
\item We estimate $\Sigma$ from the design matrix $\mtx{X}$ (equivalently, from the
feature vectors $\mtx{x}_1$, $\mtx{x}_2$, \dots $\mtx{x}_n$). We let
$\mtx{\hSigma}$ denote the resulting estimate.
\item We use $\mtx{\hSigma}$ instead of $\mtx{\Sigma}$ in step 3 of
  our hypothesis testing procedure.
\end{enumerate} 
The problem of estimating covariance matrices in high-dimensional setting 
has attracted considerable attention in the past.
Several estimation methods provide a consistent estimate $\mtx{\hSigma}$, under suitable
structural assumptions on $\mtx{\Sigma}$. 
For instance if $\mtx{\Sigma}^{-1}$ is sparse, one can apply the graphical model method of \cite{MeinshausenBuhlmann},
the regression approach of~\cite{Zhao-Cov}, or CLIME
estimator~\cite{CLIME}, to name a few. 

Since the covariance estimation problem is not the focus of our paper,
we will test the above approach using a very simple
covariance estimation method.
Namely, we assume that $\mtx{\Sigma}$ is sparse and estimate it
by thresholding the empirical covariance. A detailed description of
this estimator is given in Table~\ref{tbl:procedure2}. We refer to
\cite{bickel2008regularized}
for a theoretical analysis of this type of methods. 
Note that the Lasso is unlikely to perform well if the columns of
$\bX$ are highly correlated and hence the assumption of sparse
$\mtx{\Sigma}$ is very natural. On the other hand, we would like to
emphasize that this covariance thresholding estimation is only one
among many possible approaches.

As an additional contribution, in Appendix
\ref{app:CovarianceFree} we describe an alternative covariance-free
procedure that only uses bounds on $\mtx{\Sigma}$ where the bounds are estimated from
the data.

In our numerical experiments,
we use the estimated covariance returned by \Subroutine.
As shown in the next section, computed p-values appear to be fairly robust with respect to 
errors in the estimation of $\mtx{\Sigma}$.
It would be interesting to develop a rigorous analysis of
\sdl\, that
accounts for the covariance estimation error. 

\begin{center}
     \begin{table}[!t]
     \caption{\Subroutine\, for estimating covariance $\mtx{\Sigma}$}\label{tbl:procedure2}
     \noindent\rule{\textwidth}{1pt}\\
     \Subroutine: Estimating covariance matrix $\mtx{\Sigma}$\\
     \noindent\rule{\textwidth}{1pt}\\
     \textbf{Input:} Design matrix $\bX$\\
     \textbf{Output:} Estimate $\mtx{\hSigma}$\\
     1:\quad Let $\mtx{C} = (1/n) \bX^\sT \bX \in \reals^{p\times p}$.\\
     2:\quad Let $\sigma_1$ be the empirical variance of the entries in $S$ and let $\mathcal{A} = \{C_{ij}: |C_{ij}| \le 3 \sigma_1\}$.\\
     3:\quad Let $\sigma_2$ be the variance of entries in $\mathcal{A}$.\\
     4:\quad Construct $\mtx{\widehat{C}}$ as follows:
      \begin{eqnarray}
      \widehat{C}_{ij} = C_{ij}\, \ind (|C_{ij}| \ge 3\sigma_2).
      \end{eqnarray}      
      5:\quad Denote by $\zeta_1$ and $\zeta_2$ the smallest and the smallest 
      positive eigenvalues of $\mtx{\widehat{C}}$ respectively.\\
      6:\quad Set
      \begin{eqnarray}
      \mtx{\hSigma} = \mtx{\widehat{C}} + (\zeta_2-\zeta_1) \id\,.
      \end{eqnarray}
    \noindent\rule{\textwidth}{1pt}
      \end{table}
\end{center}  
%
%
\subsection{Numerical experiments}\label{sec:experiment-nstd}


In carrying out our numerical experiments for correlated Gaussian
designs,
we  consider the same setup as the one in
Section~\ref{sec:experiment-std}. The only difference is that 
the rows of the design matrix are independently $\mtx{x}_i\sim \normal(0,\mtx{\Sigma})$.
We choose $\mtx{\Sigma} \in \reals^{p \times p}$  to be a the
symmetric matrix with entries $\Sigma_{jk}$ are defined as follows for
$j\le k$
%
\begin{equation}\label{eq:Sigma}
\Sigma_{jk} = \begin{cases}
1 & \text{if } k=j\,,\\
0.1 & \text{if } k \in \{j+1,\cdots,j+5\} \\
& \text{or } k\in\{j+p-5,\dots, j+p-1\},\\
0 & \text{for all other $j\le k$.}
\end{cases}
\end{equation}
Elements below the diagonal are given by the symmetry condition
$\Sigma_{kj} = \Sigma_{jk}$.
(Notice that this is a circulant matrix.)

In Fig.~\ref{fig:plot}, we compare \sdl\,with the ridge-based regression method proposed
in~\cite{BuhlmannSignificance}. While the type I errors 
of \sdl\, are in good match with the chosen significance level
$\alpha$, the method of \cite{BuhlmannSignificance}  is
conservative. 
As in the case of standard Gaussian designs, this results in
significantly smaller type I errors than $\alpha$ and smaller average
power in return.  
Also, in Fig.~\ref{fig:different_alpha_cov}, we run \sdl\,, ridge-based regression~\cite{BuhlmannSignificance}, and $\ldpe\,$~\cite{ZhangZhangSignificance} for $\alpha \in \{0.01,0.02, \cdots, 0.1\}$ and for $10$ realizations of the
problem per each value of $\alpha$. We plot the average type I error and the average power of each method
versus $\alpha$. As we see, similar to the case of standard  Gaussian designs, 
\emph{even for the same empirical fraction of type I errors}, $\sdl\,$ results in a higher statistical power.

Table~\ref{tbl:small_alpha05} summarizes the
performances of the these methods for a few configurations $(p, n, s_0,
\mu)$, and $\alpha = 0.05$. Simulation results for a larger number of
configurations and $\alpha = 0.05, 0.025$ are reported in 
Tables~\ref{tbl:illus_alpha05} and~\ref{tbl:illus_alpha025} in Appendix~\ref{app:Simulation Results}.

Let $\mtx{z} = (z_i)_{i=1}^p$ denote the vector with entries $z_i\equiv
\htheta^u_i/(\tau[(\mtx{\Sigma}^{-1})_{ii}]^{1/2})$. In Fig.~\ref{fig:gaussian2} we
plot the normalized histograms of $\mtx{z}_{S_0}$ (in red) and $\mtx{z}_{S_0^c}$
(in white), where $\mtx{z}_{S_0}$ and $\mtx{z}_{S_0^c}$ respectively denote the
restrictions of $\mtx{z}$ to the active set $S_0$ and the inactive set $S_0^c$.  The plot
clearly exhibits the fact that $\mtx{z}_{S^c_0}$ has (asymptotically)
standard normal distribution and the histogram of $\mtx{z}_{S_0}$ appears
as a distinguishable bump. This is the core intuition in defining \sdl. 
%
\begin{figure}[!t]
\centering
\subfigure[\scriptsize{Comparison between \sdl\, and ridge-based regression \cite{BuhlmannSignificance}}.] {
\includegraphics*[width =
3.5in]{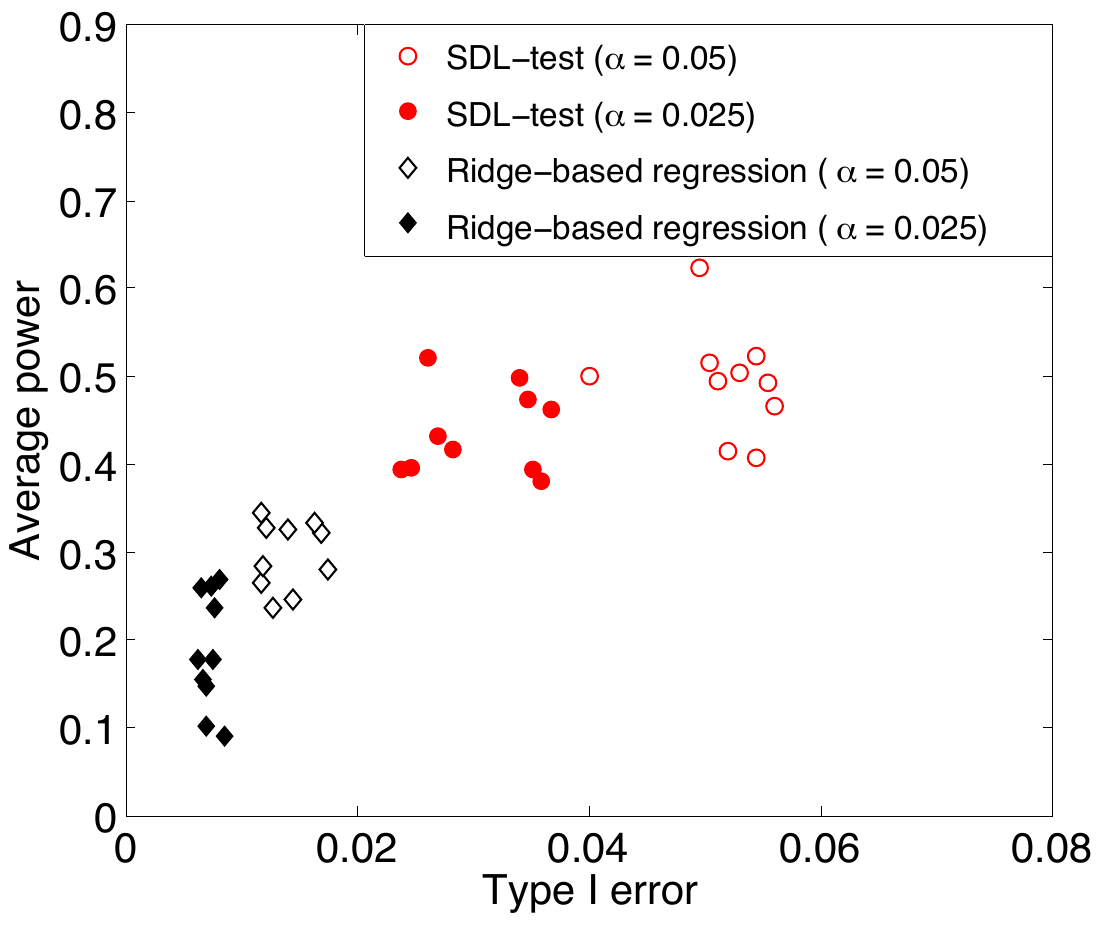}
\label{fig:plot}
}\hspace{0.1in}
\subfigure[ \scriptsize{Normalized histograms of $\mtx{z}_{S_0}$ (in red) and $\mtx{z}_{S_0^c}$ (in white) for one realization.}] {
\includegraphics*[width =
3.5in]{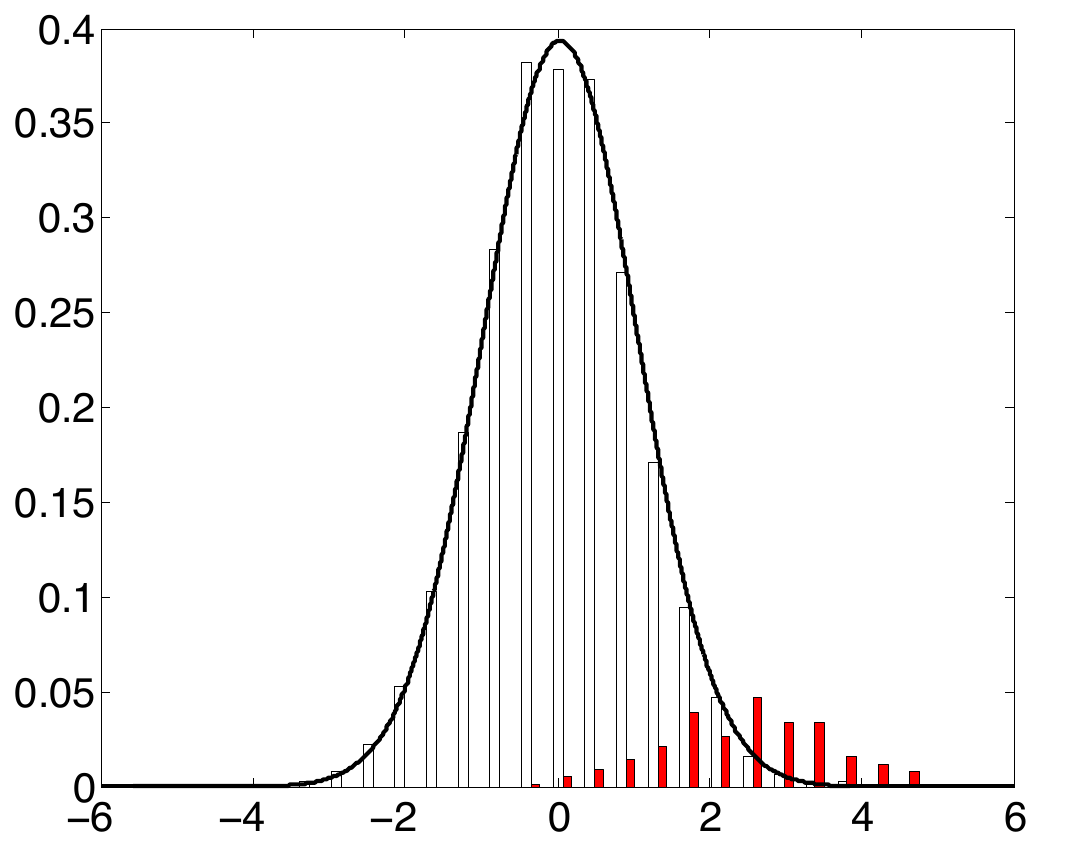}
\label{fig:gaussian2}
}
\caption{Numerical results for setting of Section~\ref{sec:experiment-nstd} and $p = 2000$, $n = 600$, $s_0 = 50$, $\mu = 0.1$.}
\end{figure}

\begin{figure}[!h]
\centering
\includegraphics*[width =3.3in]{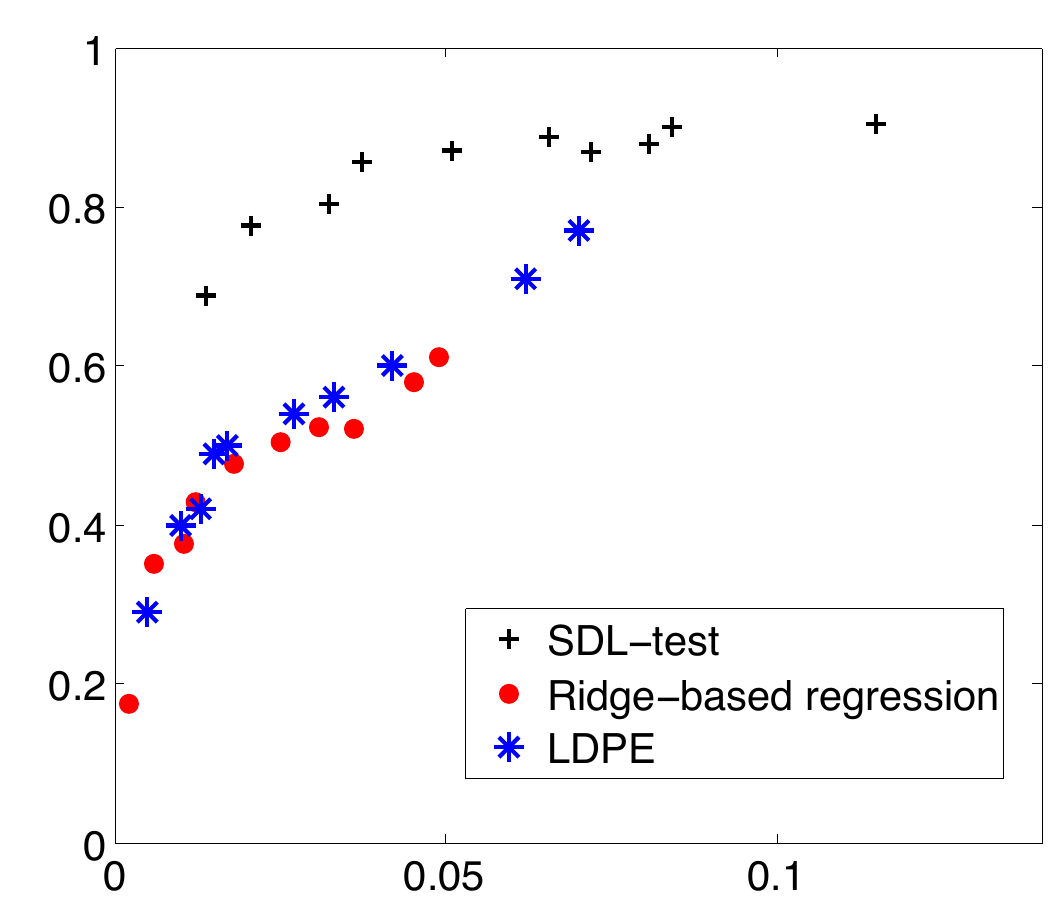}
\caption{Comparison between $\sdl\,$, ridge-based regression~\cite{BuhlmannSignificance}, and $\ldpe$~\cite{ZhangZhangSignificance} in the setting of nonstandard Gaussian designs. For the same values of type I error achieved by methods, $\sdl\,$ results in a higher statistical power. Here, $p = 1000, n = 600, s_0 = 25, \mu = 0.15$. }
\label{fig:different_alpha_cov}
\end{figure}
%

%
\begin{table*}[!h]
\begin{center}
{\small
\begin{tabular}{c|cccc|}

{\bf Method} & Type I err & Type I err & Avg. power & Avg. power \\ 
&(mean) & (std.) & (mean) & (std)
\\
\cline{1-5} \cline{2-5}
\multicolumn{1}{c|}{\bf \sdl\, $(1000, 600, 100, 0.1)$} &  0.06733 & 0.01720 & 0.48300 & 0.03433 
    \\ 
 \multicolumn{1}{c|}{\bf Ridge-based regression $(1000, 600, 100, 0.1)$} & \color{red}0.00856 & \color{red}0.00416 & \color{red}0.17000 & \color{red}0.03828
 \\
\multicolumn{1}{c|}{\bf $\ldpe\,$ $(1000, 600, 100, 0.1)$} & \color{darkblue}0.01011 & \color{darkblue} 0.00219& \color{darkblue} 0.29503& \color{darkblue} 0.03248
 \\
 \multicolumn{1}{c|}{\bf Lower bound $(1000, 600, 100, 0.1)$} &\color{olivegreen}0.05 & \color{olivegreen} NA& \color{olivegreen}0.45685& \color{olivegreen}0.04540
     \\ \hline
\multicolumn{1}{c|}{\bf \sdl\, $(1000, 600, 50, 0.1)$} &  0.04968 & 0.00997 & 0.50800 & 0.05827 
    \\ 
\multicolumn{1}{c|}{\bf Ridge-based regression $(1000, 600, 50, 0.1)$} &  \color{red}0.01642 & \color{red}0.00439 & \color{red}0.21000 & \color{red} 0.04738
\\
  \multicolumn{1}{c|}{\bf $\ldpe\,$ $(1000, 600, 50, 0.1)$} & \color{darkblue}0.02037 & \color{darkblue} 0.00751& \color{darkblue} 0.32117& \color{darkblue} 0.06481
 \\
 \multicolumn{1}{c|}{\bf Lower bound $(1000, 600, 50, 0.1)$} &\color{olivegreen}0.05 & \color{olivegreen}NA& \color{olivegreen}0.50793& \color{olivegreen}0.03545
\\ 
\hline 
\multicolumn{1}{c|}{\bf \sdl\, $(1000, 600, 25, 0.1)$} &  0.05979 & 0.01435 & 0.55200 & 0.08390 
    \\ 
\multicolumn{1}{c|}{\bf Ridge-based regression $(1000, 600, 25, 0.1)$} &  \color{red}0.02421 & \color{red}0.00804 & \color{red}0.22400& \color{red}0.10013
\\
 \multicolumn{1}{c|}{\bf $\ldpe\,$ $(1000, 600, 25, 0.1)$} & \color{darkblue}0.02604 & \color{darkblue} 0.00540& \color{darkblue} 0.31008& \color{darkblue} 0.06903
 \\
 \multicolumn{1}{c|}{\bf Lower bound $(1000, 600, 25, 0.1)$} &\color{olivegreen}0.05 & \color{olivegreen}NA& \color{olivegreen}0.54936& \color{olivegreen}0.06176
\\ 
\end{tabular}
}
\end{center}
\caption{ Comparison between \sdl, ridge-based regression~\cite{BuhlmannSignificance}, $\ldpe$~\cite{ZhangZhangSignificance} and the lower bound for \sdl\, power (cf. Theorem~\ref{thm:power2}) on the setup described in Section~\ref{sec:experiment-nstd}. The significance level is $\alpha = 0.05$. The means and the standard deviations are obtained by testing over $10$ realizations of the corresponding configuration. Here a quadruple such as $(1000, 600, 50, 0.1)$ denotes the values of $p = 1000$, $n = 600$, $s_0 = 50$, $\mu = 0.1$. \vspace{-0.2cm}}\label{tbl:small_alpha05}
\end{table*}
%

\section{Discussion}\label{sec:discussion}

In this section we compare our contribution with related work in order 
to put it in proper perspective. We first compare it with other recent 
debiasing methods
\cite{ZhangZhangSignificance,GBR-hypothesis,confidenceJM} in subsection
\ref{sec:Comparison}. 
In subsection \ref{sec:RoleScale}
we then discuss the role of of the factor $\scale$ in our definition
of $\mtx{\htheta}^u$: this is an important difference with respect to
the methods of
\cite{ZhangZhangSignificance,GBR-hypothesis,confidenceJM}.
We finally contrast the Gaussian limit in
Theorem \ref{pro:lasso-limit}  and Le Cam's local asymptotic normality
theory,  that plays a pivotal role in classical statistics.

\subsection{Comparison with other debiasing methods}
\label{sec:Comparison}

As explained several times in the previous sections, the key step in
our procedure is to correct the Lasso estimator through a debiasing
procedure. For the reader's convenience, we copy here the definition
of the latter:
\begin{align}\label{eq:ThetaUDisc}
\mtx{\htheta}^u = \mtx{\htheta}(\lambda) + \frac{\scale}{n}
\mtx{\Sigma}^{-1} \bX^\sT(\mtx{y} - \bX \mtx{\htheta}(\lambda)).
\end{align}
The approach of~\cite{ZhangZhangSignificance} is similar in that it is based on debiased estimator of the form
\begin{align}\label{eq:de-biasing}
\mtx{\htheta^*} = \mtx{\htheta} + \frac{1}{n} \mtx{M} \bX^\sT(\mtx{y}-\bX\mtx{\htheta})\,,
\end{align}
where $\mtx{M}$ is computed from the design matrix $\bX$.  The authors
of \cite{ZhangZhangSignificance} propose to compute $\mtx{M}$ by
doing sparse regression of each column of $\bX$ onto the
others.

After a first version of the present paper became available as an
online preprint, de Geer, B\"uhlmann and Ritov~\cite{GBR-hypothesis}
studied an approach similar to \cite{ZhangZhangSignificance} (and to
ours) in a random design setting. They provide guarantees under the
assumptions that $\mtx{\Sigma}^{-1}$ is sparse and that 
the sample size $n$ asymptotically dominates $(s_0 \log p)^2$.
The authors also establish asymptotic optimality of their method in terms of semiparametric efficiency.
The semiparametric  setting is also at the center
of~\cite{ZhangZhangSignificance,Report-Semi}.

A further development over the approaches of
\cite{ZhangZhangSignificance,GBR-hypothesis}
was proposed by the present authors in \cite{confidenceJM}.
This paper constructs the  matrix $\mtx{M}$ by solving an optimization problem that controls the bias of $\mtx{\htheta^*}$
and minimize its variance meanwhile. This method does not require any sparsity assumption on $\mtx{\Sigma}$ or
$\mtx{\Sigma}^{-1}$, but still requires sample size $n$ to
asymptotically dominate $(s_0 \log p)^2$.

It is interesting to compare and contrast the results of
\cite{ZhangZhangSignificance,GBR-hypothesis,confidenceJM},
with the contribution of the present paper.
(Let us emphasize that \cite{GBR-hypothesis} appeared
after submission of the present work.)
\begin{description}
\item[Assumptions on the design matrix.] The approach of
  \cite{ZhangZhangSignificance,GBR-hypothesis,confidenceJM}
guarantees control of type I error, and optimality for  non-Gaussian
designs. (Both of  \cite{ZhangZhangSignificance,GBR-hypothesis}
require however sparsity of $\mtx{\Sigma}^{-1}$.)

In contrast, our results are fully rigorous only in the special case $\mtx{\Sigma}=\id$.
\item [Covariance estimation.] Neither of the papers
  \cite{ZhangZhangSignificance,GBR-hypothesis,confidenceJM}
 requires knowledge of covariance $\mtx{\Sigma}$. 
The method in~\cite{GBR-hypothesis} estimates $\mtx{\Sigma}^{-1}$
assuming that it is sparse, however the method~\cite{confidenceJM} does not require such estimation.

In contrast, our generalization to arbitrary Gaussian designs
postulates knowledge of $\mtx{\Sigma}$. (Further this generalization
relies
on the standard distributional limit assumption.)
\item [Sample size assumption.] The work of
  \cite{GBR-hypothesis,confidenceJM} focuses on random designs, but
  requires $n$ much larger than $(s_0\log p)^2$. This is roughly the
  square of the number of samples needed for consistent estimation.

 In contrast, we achieve similar power, and confidence intervals with
optimal sample size
$n = O(s_0\log(p/s_0))$.
 \end{description}
In summary, the present  work is complementary to the one in
\cite{ZhangZhangSignificance,GBR-hypothesis,confidenceJM} in that it
provides a sharper characterization, within a more restrictive
setting.
Together, these papers provide support for the use of debiasing
methods of the form (\ref{eq:de-biasing}).

\subsection{Role of the factor $\scale$} 
\label{sec:RoleScale}

It is worth stressing one subtle, yet interesting, difference between
the methods of
of~\cite{ZhangZhangSignificance,GBR-hypothesis} and the one of the
present paper. In both cases, a debiased estimator is constructed
using Eq.~(\ref{eq:de-biasing}).
However:
\begin{itemize}
\item The approach of \cite{ZhangZhangSignificance,GBR-hypothesis}
  sets $\mtx{M}$ to be an estimate of $(\mtx{\Sigma}^{-1})$.  In
  the idealized situation where $\mtx{\Sigma}$ is known, this
  construction reduces to setting $\mtx{M} = \mtx{\Sigma}^{-1}$.
\item In contrast, our prescription (\ref{eq:ThetaUDisc}) amounts to
  setting $\mtx{M} = \scale\,\mtx{\Sigma}^{-1}$, with $\scale =
  (1-\|\mtx{\htheta}\|_0/n)^{-1}$.
In other words, we choose $\mtx{M}$ as a \emph{scaled version of the
  inverse covariance}.
\end{itemize}
The mathematical reason for the specific scaling factor is
elucidated by the proof of Theorem \ref{pro:lasso-limit}  in
\cite{BayatiMontanariLASSO}.
Here we limit ourselves to illustrating through numerical
simulations that this factor is indeed crucial to ensure the
normality of $(\htheta_i^u-\theta_{0,i})$
in the regime $n= \Theta(s_0\log(p/s_0))$.

We consider the same setup as in Section~\ref{sec:experiment-nstd} where the rows of the design matrix are generated 
independently from $\normal(0,\mtx{\Sigma})$ with
$\mtx{\Sigma}_{jk}$ given by~\eqref{eq:Sigma} for $j\le k$.
We fix undersampling ratio $\delta = n/p$ and sparsity level $\eps = s_0/p$ and consider values $p \in \{250,500,750,\cdots, 3500\}$.
We also take active sets $S_0$ with $|S_0|= s_0$ chosen uniformly at random from the index set $\{1,\cdots,p\}$
and set $\theta_{0,i} = 0.15$ for $i\in S$.

The goal is to illustrate the effect of the  scaling factor $\scale$
on the empirical distribution of $(\htheta^u_i-\theta_{0,i})$, for
large $n, p, s_0$. As we will see, the effect becomes more pronounced
as the ratio $n/s_0 = \delta/\ve$ (i.e. the number of samples per
non-zero coefficient) becomes smaller. As above, we use 
$\mtx{\htheta}^u$ for the unbiased estimator developed in this paper
(which amounts to Eq.~(\ref{eq:de-biasing}) with $\mtx{M} =
\scale\mtx{\Sigma}^{-1}$). We will use $\mtx{\htheta}^{\scale=1}$ for
the `ideal' unbiased estimator corresponding to the proposal of \cite{ZhangZhangSignificance,GBR-hypothesis}
(which amounts to Eq.~(\ref{eq:de-biasing}) with $\mtx{M} =
\mtx{\Sigma}^{-1}$).
\begin{description} 
\item[$\bullet$ {$\bf n=3\, s_0$}] ($\eps = 0.2, \delta =0.6$).
Let $\mtx{v} = (v_i)_{i=1}^p$ with $v_i \equiv
(\htheta_i-\theta_{0,i}) /(\tau[(\mtx{\Sigma}^{-1})_{ii}]^{1/2})$. In
Fig~\ref{fig:Kurtosis-eps02}, the empirical kurtosis\footnote{Recall
  that the empirical of sample kurtosis is defined as $\kappa \equiv
  (m_4/m_2^2)-3$ with $m_{\ell}\equiv
  p^{-1}\sum_{i=1}^p(v_i-\overline{v})^{\ell}$ and $\overline{v}
  \equiv p^{-1}\sum_{i=1}^pv_i$.} of
$\{v_i\}_{i=1}^p$ is plotted for the two cases $\htheta_i =
\htheta^u_i$, and $\htheta_i = \htheta^{\scale=1}_i$. 
When using $\htheta^u$, the kurtosis is very small and data are
consistent with the kurtosis vanishing as $p \to \infty$. This
is suggestive of the fact that $(\htheta^u_i-\theta_{0,i})
/(\tau[(\mtx{\Sigma}^{-1})_{ii}]^{1/2})$
is asymptotically Gaussian, and hence satisfies a standard
distributional limit.
 However, if we use $\mtx{\htheta}^{\scale=1}$, the empirical kurtosis of
 $\mtx{v}$ does not converge to zero.

In Fig.~\ref{fig:hist-eps02}, we plot the histogram of $\mtx{v}$ for
$p = 3000$ and using both $\mtx{\htheta}^{u}$  and $\mtx{\htheta}^{\scale=1}$. Again, the plots clearly demonstrate
importance of $\scale$ in obtaining a Gaussian behavior.   

\begin{figure}[]
\centering
\subfigure[$\eps = 0.2$] {
\includegraphics*[width=3in]{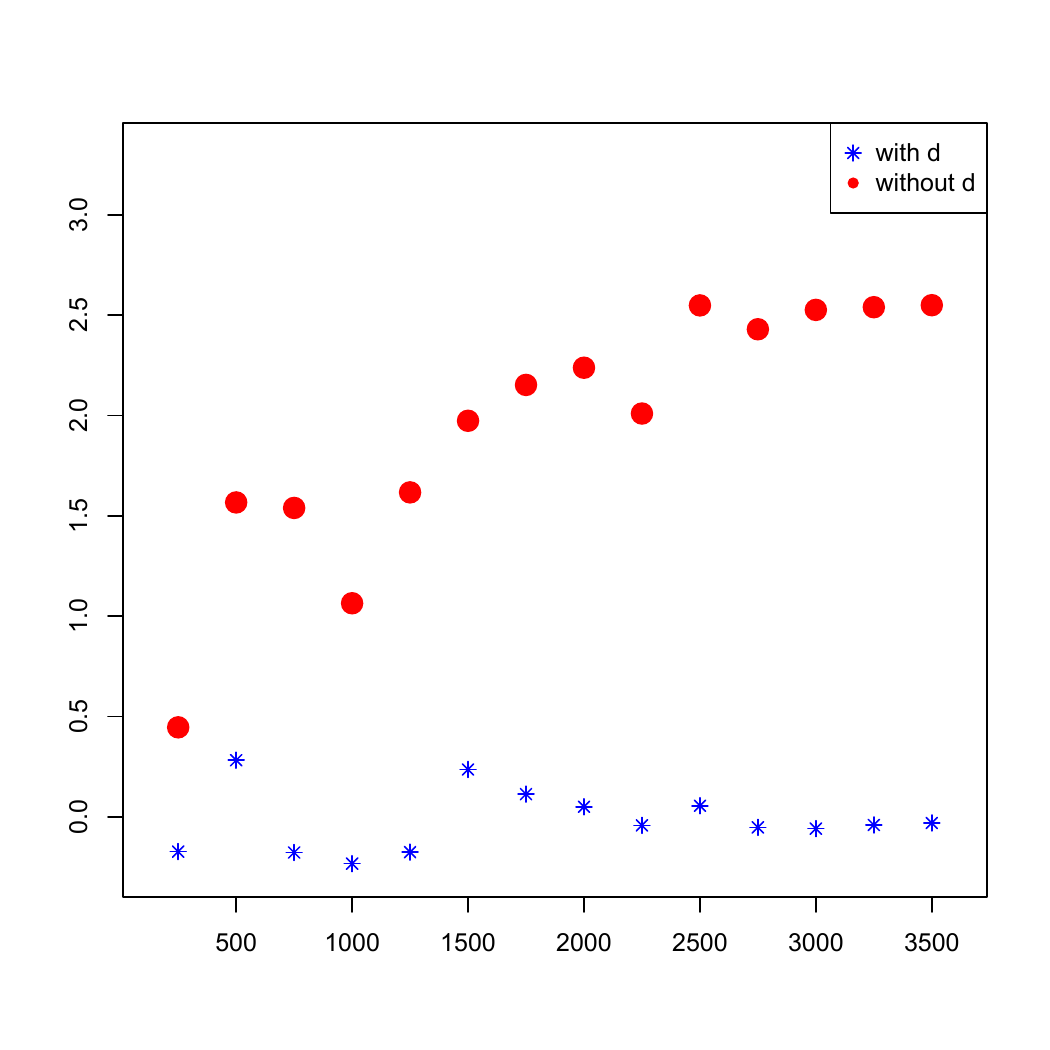}
\label{fig:Kurtosis-eps02}
\put(-105,5){$p$}
\put(-230,60){\rotatebox{90}{Empirical kurtosis of $\mtx{v}$}}
}\hspace{0.8cm}
\subfigure[$\eps = 0.02$] {
\includegraphics*[width =3in]{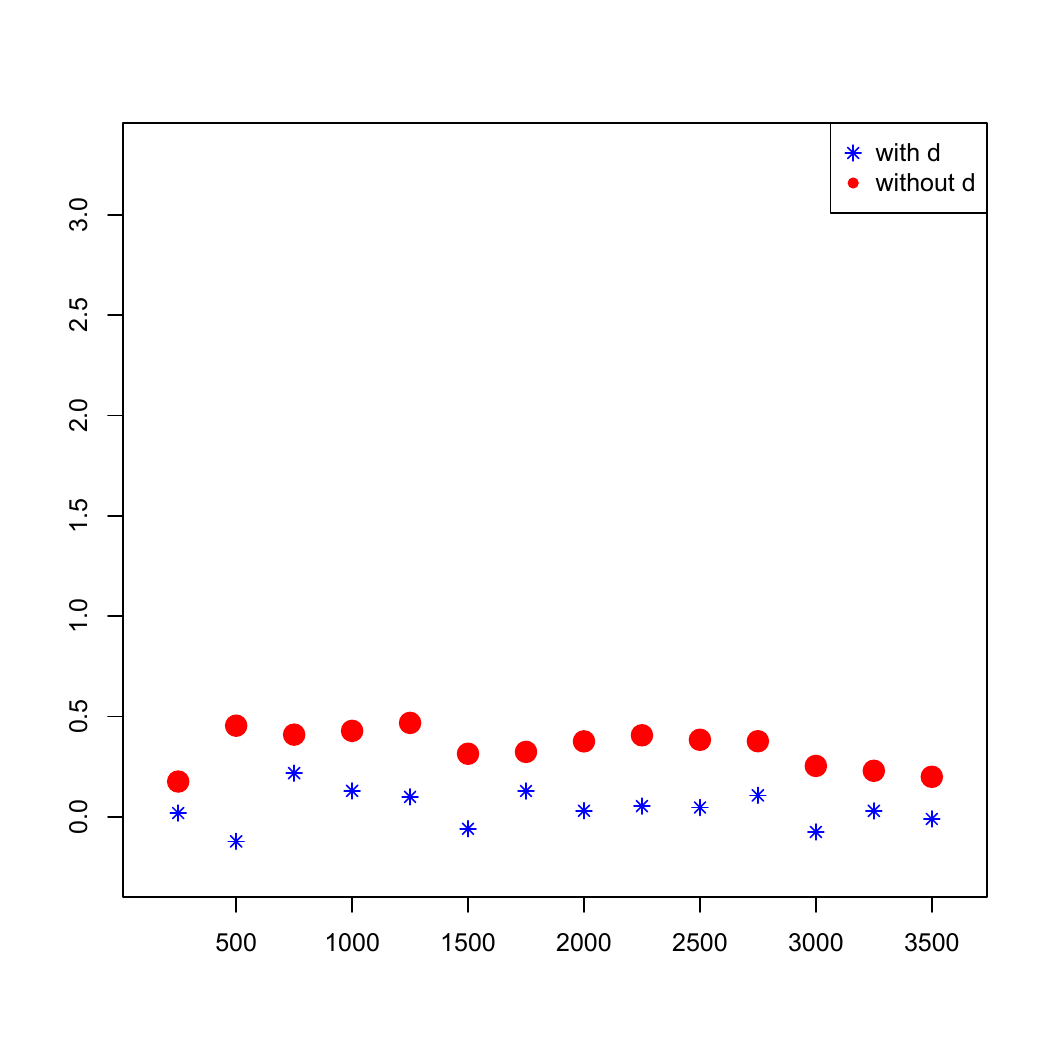}
\label{fig:Kurtosis-eps002}
\put(-105,5){$p$}
\put(-230,60){\rotatebox{90}{Empirical kurtosis of $\mtx{v}$}}
}
\caption{Empirical kurtosis of vector $\mtx{v}$ with and without
  normalization factor $\scale$. In left panel $n=3\, s_0$
(with $\eps=0.2$, $\delta=0.6$) and in the right panel $n = 30\, s_0$ (with $\eps=0.02$, $\delta=0.6$).}
\end{figure}
\begin{figure}[]
\centering
\subfigure[with factor $\scale$] {
\includegraphics*[width=3in]{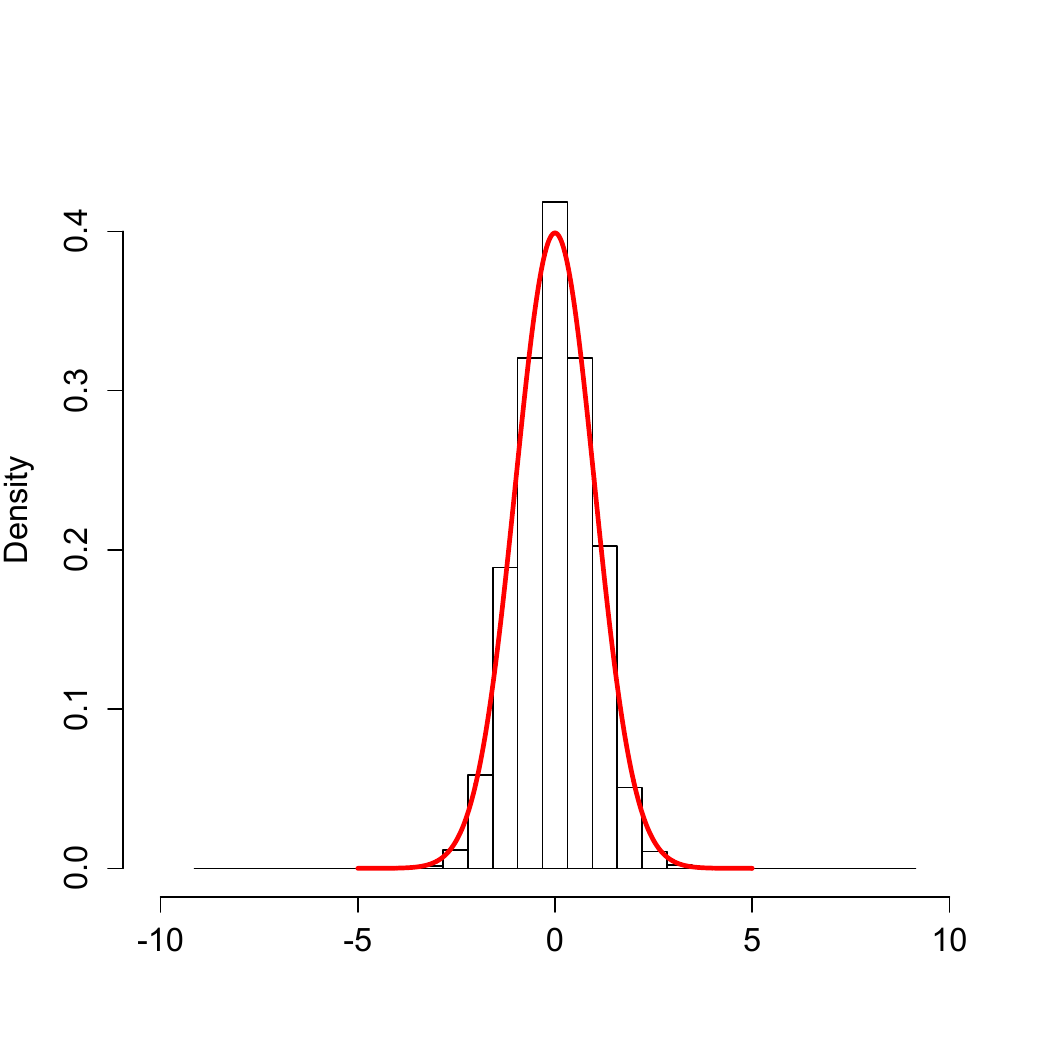}
\label{fig:hist1-eps02}
\put(-105,5){$p$}
\put(-140,190){Histogram of $\mtx{v}$}
}\hspace{0.8cm}
\subfigure[without factor $\scale$] {
\includegraphics*[width =3in]{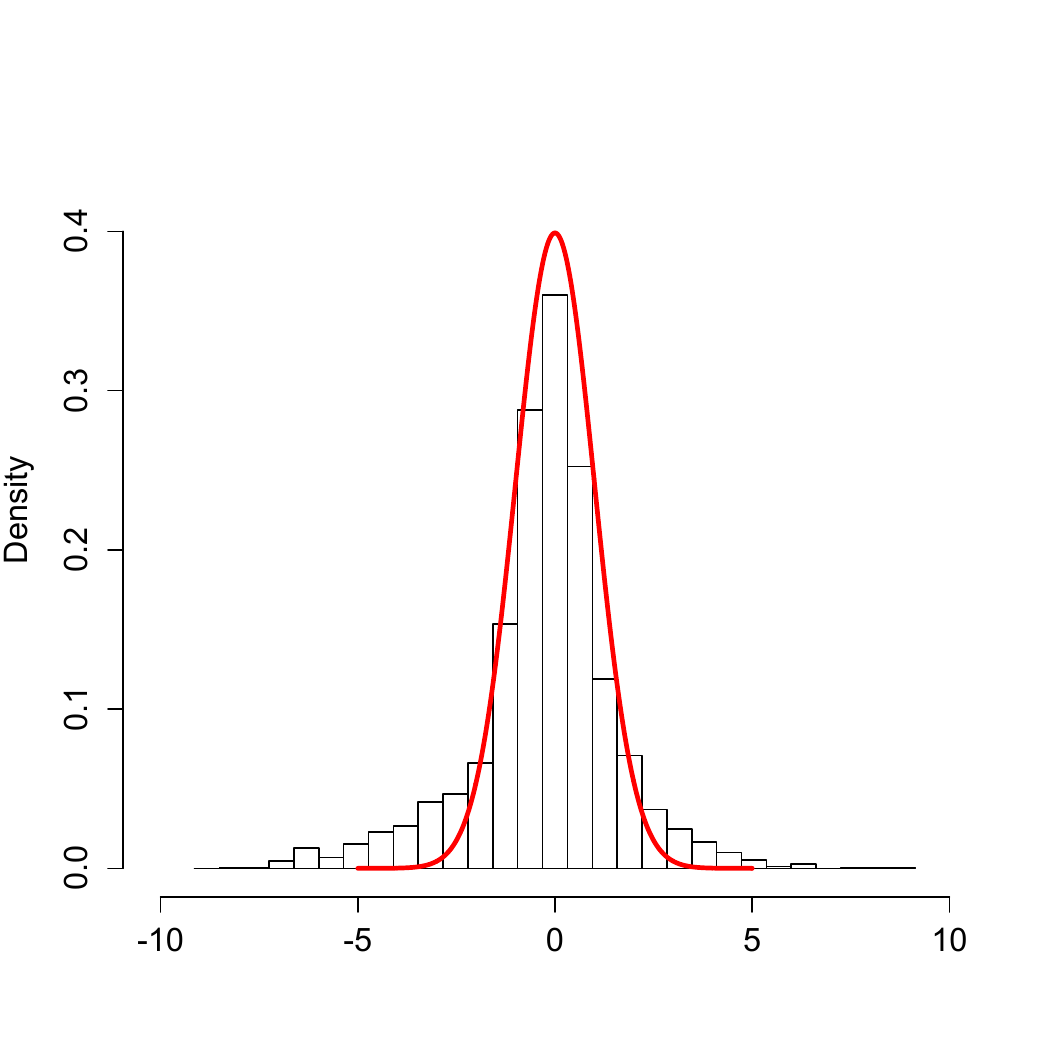}
\label{fig:hist2-eps02}
\put(-105,5){$p$}
\put(-140,190){Histogram of $\mtx{v}$}
}
\caption{Histogram of $\mtx{v}$ for $n=3\, s_0$ ($\eps = 0.2$, $\delta
  = 0.6$) and $p = 3000$. In left panel, factor $\scale$ is computed by Eq.~\eqref{eq:Step2} and in the right panel, $\scale =1$.}\label{fig:hist-eps02}
\end{figure}
\begin{figure}[]
\centering
\subfigure[with factor $\scale$] {
\includegraphics*[width=3in]{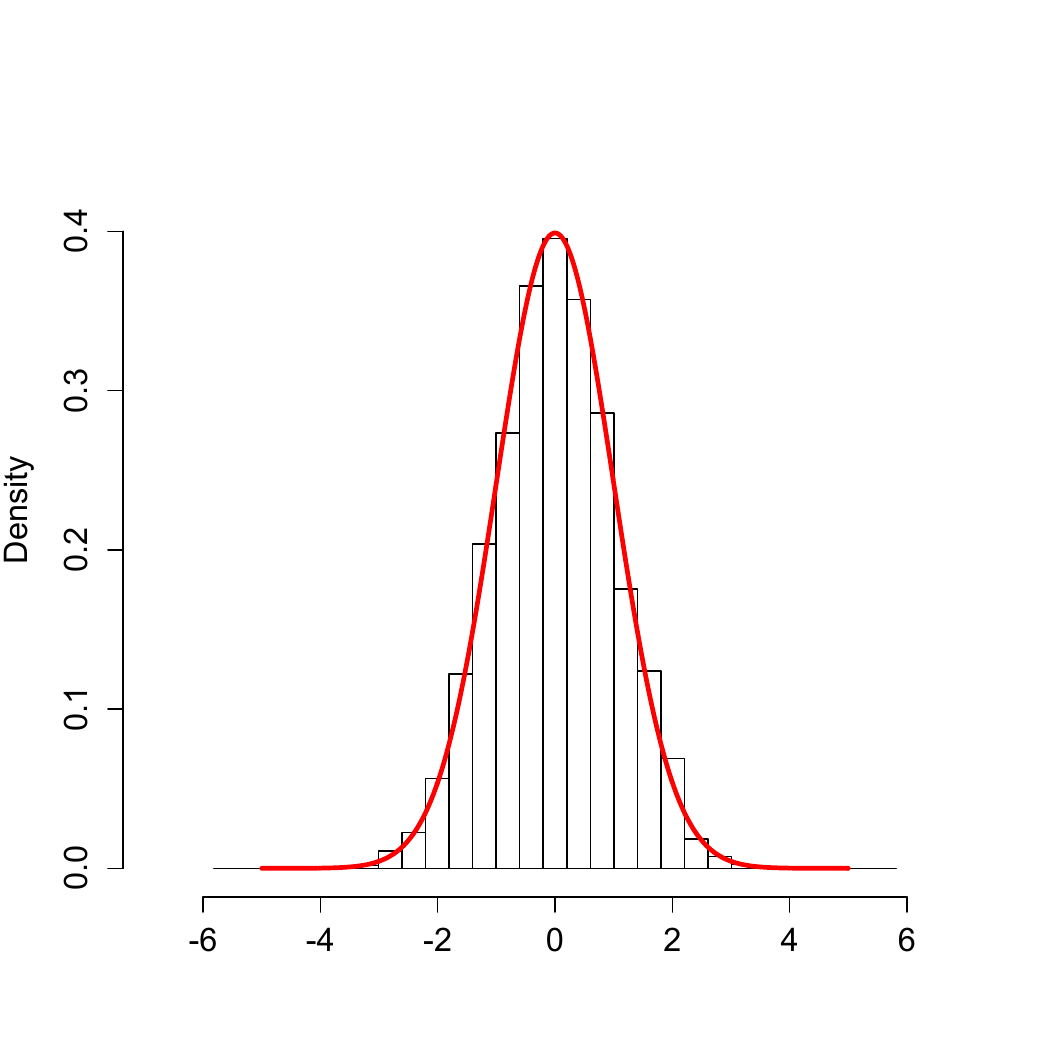}
\label{fig:hist1-eps002}
\put(-105,5){$p$}
\put(-140,190){Histogram of $\mtx{v}$}
}\hspace{0.8cm}
\subfigure[without factor $\scale$] {
\includegraphics*[width =3in]{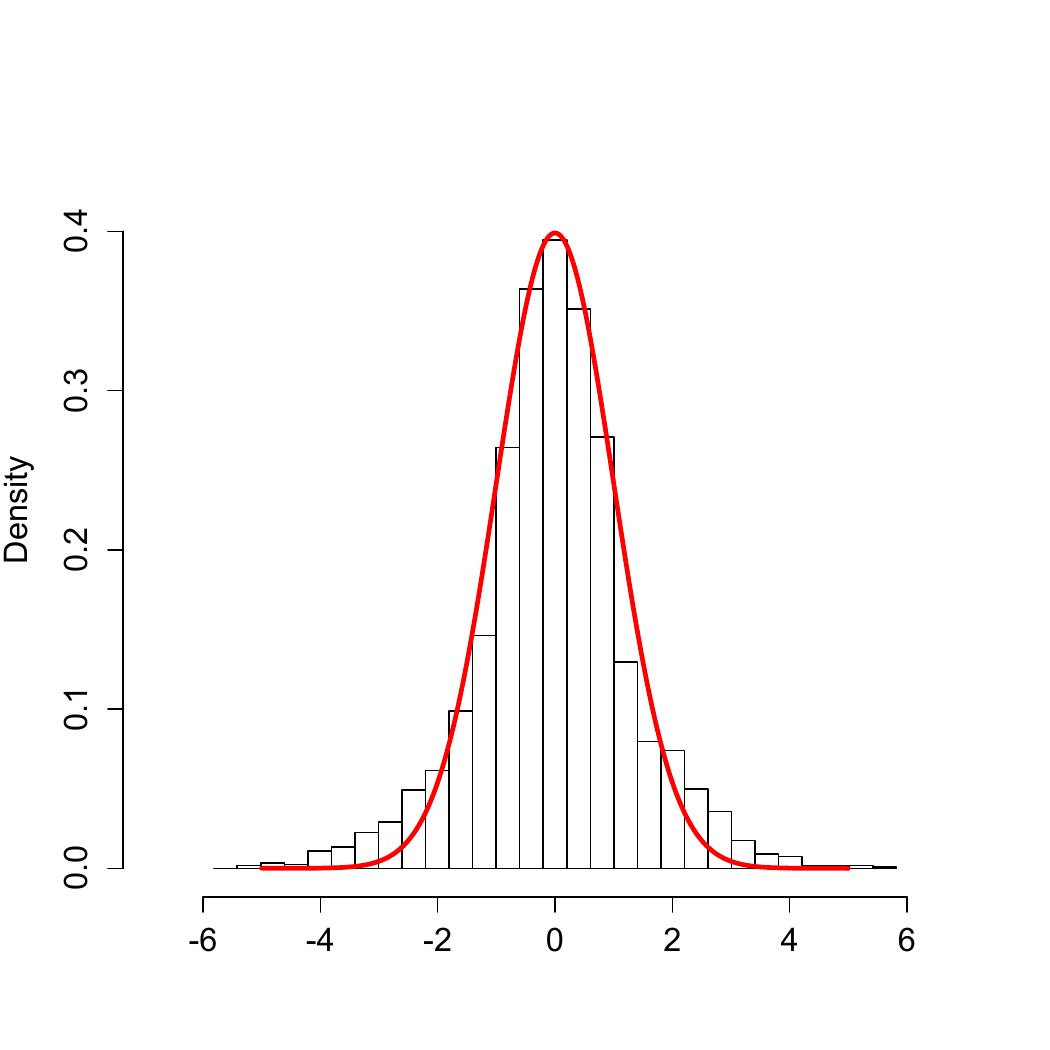}
\label{fig:hist2-eps002}
\put(-105,5){$p$}
\put(-140,190){Histogram of $\mtx{v}$}
}
\caption{Histogram of $\mtx{v}$ for $n=30\, s_0$ ($\eps = 0.02$, $\delta
  = 0.6$)  and $p = 3000$. In left panel, factor $\scale$ is computed by Eq.~\eqref{eq:Step2} and in the right panel, $\scale =1$.}\label{fig:hist-eps002}
\end{figure}

\item[$\bullet$ {$\bf n=30\, s_0$}] ($\eps = 0.02, \delta =0.6$). Figures \ref{fig:Kurtosis-eps002} and \ref{fig:hist-eps002} show similar plots for this case. As we see, the effect of $\scale$ becomes less noticeable
here. The reason is that we expect $\|\mtx{\htheta}\|_0/n = O(s_0/n)$,
and $\scale = (1-\|\mtx{\htheta}\|_0/n)^{-1} = 1+O(s_0/n)\approx 1$
for $s_0$ much smaller than $n$.
\end{description}

\subsection{Comparison with Local Asymptotic Normality}

Our approach is based on an asymptotic distributional characterization
of the Lasso estimator, cf. Theorem \ref{pro:lasso-limit}. Simplifying, the
Lasso estimator is in correspondence with a debiased estimator
$\mtx{\htheta}^u$
that is asymptotically normal  in the sense of finite-dimensional distributions.
This is analogous to what happens in classical statistics, where 
\emph{local asymptotic normality} (LAN) can be used to characterize an
estimator distribution, and hence derive test statistics
\cite{le1956asymptotic,van2000asymptotic}.

This analogy is only superficial, and the mathematical phenomenon
underlying Theorem \ref{pro:lasso-limit} is altogether different from
the one in local asymptotic normality. We refer to
\cite{BayatiMontanariLASSO} for a more complete understanding, and
only mention a few points:
\begin{enumerate}
\item LAN theory holds in the low-dimensional limit, where the
  number of parameters $p$ is much smaller than the number of samples
  $n$. Even more, the focus is on $p$ fixed, and $n\to\infty$. 

In contrast, the Gaussian limit in  Theorem \ref{pro:lasso-limit}
holds with $p$ proportional to $n$.
\item The starting point of LAN theory is low-dimensional consistency,
  namely $\mtx{\htheta}\to \mtx{\theta}_0$ as $n\to\infty$. As a
  consequence, the distribution of $\mtx{\htheta}$ can be
  characterized by a local approximation around $\mtx{\theta}_0$.

In contrast, in the high-dimensional asymptotic regime of
Theorem \ref{pro:lasso-limit}, the mean square error \emph{per coordinate}
$\|\mtx{\htheta}-\mtx{\theta}_0\|_2^2/p$ remains bounded away from zero
\cite{BayatiMontanariLASSO}. As a consequence, normality does not
follow from local approximation.
\item Indeed, in the present case, the Lasso estimator (which is of
  course a special case
  of M-estimator) $\mtx{\htheta}$ is \emph{not normal}. Only the
  debiased estimator $\mtx{\htheta}^u$ is asymptotically normal. 
Further, while LAN theory holds quite generally in the classical
asymptotics, the present theory is more sensitive to the properties of
the design matrix $\bX$. 
\end{enumerate}
%
\section{Real data application} \label{sec:crime}
\begin{figure}[!t]
\centering
\includegraphics*[viewport = -10 40 550 510, width =
3.2in]{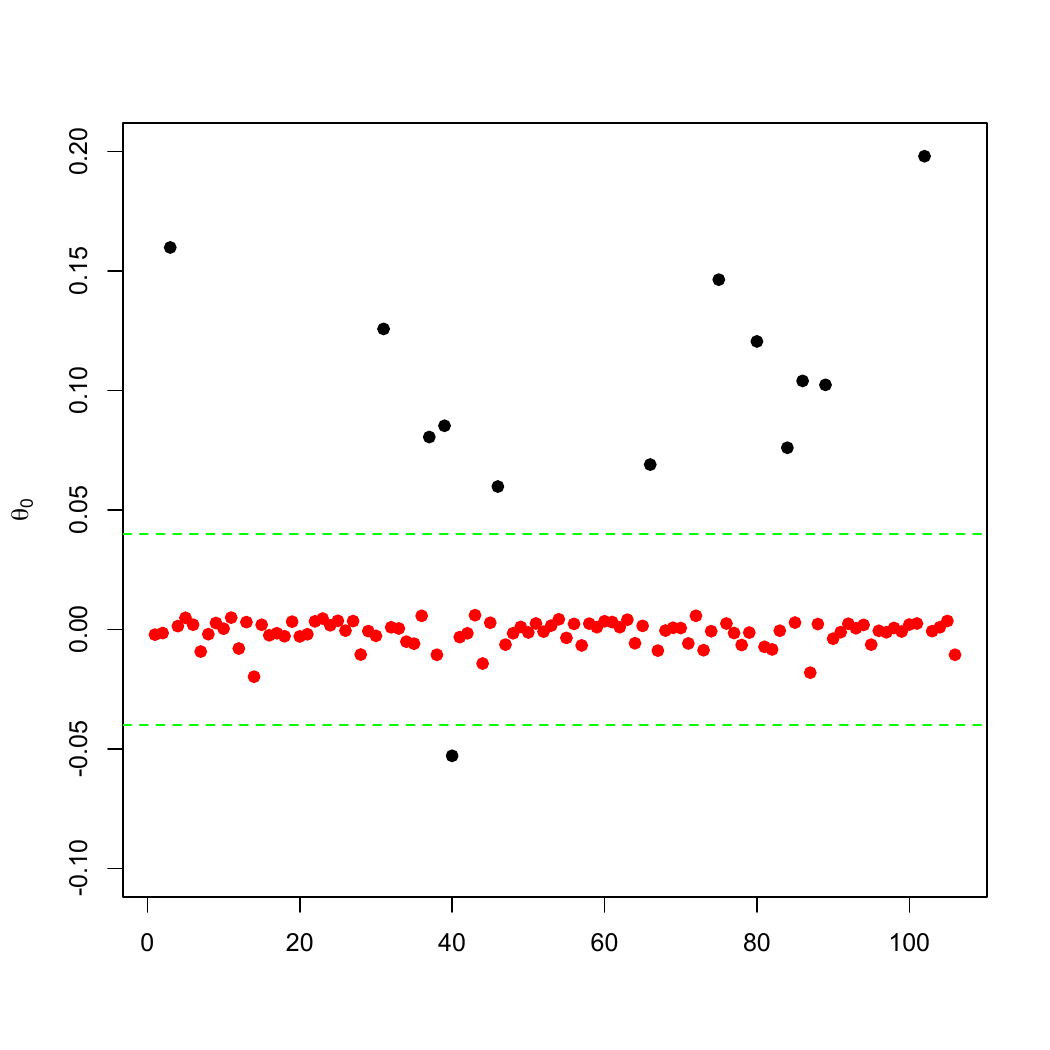}
\caption{Parameter vector $\mtx{\theta}_0$ for the communities data set.}
\label{fig:true_theta}
\end{figure}
We tested our method on the UCI communities and crimes dataset~\cite{FrankAsuncion2010}. This
concerns the prediction of the rate of violent crime in different
communities within US, based on other demographic attributes of the 
communities. The dataset consists of a response variable along with
122 predictive attributes for 1994 communities.  Covariates are
quantitative, including e.g., the fraction of urban population or the median family income. We consider a linear model as in~\eqref{eq:NoisyModel} and hypotheses $H_{0,i}$. Rejection of $H_{0,i}$ indicates that the $i$-th attribute is significant in predicting the response variable. 

We perform the following preprocessing steps: $(i)$ Each missing value
is replaced by the mean of the non missing values of that attribute
for other communities. $(ii)$ We eliminate $16$ attributes to make the
ensemble of the attribute vectors linearly independent. 
Thus we obtain a design matrix $\bX_{\rm tot}\in\reals^{n_{\rm
    tot}\times p}$ with $n_{\rm tot}=1994$ and $p = 106$;
$(iii)$  We normalize each column of the resulting design matrix to
have mean zero and $\ell_2$ norm  equal to $\sqrt{n_{\rm tot}}$.

In order to evaluate various hypothesis testing procedures, we need to
know the true significant variables. To this end,
we let $\mtx{\theta}_0 =  (\bX_{\rm tot}^{\sT}\bX_{\rm tot})^{-1}\bX_{\rm
  tot}^{\sT} \by$ be the least-square estimator, using the whole data
set. Figure~\ref{fig:true_theta} shows the
the entries of $\mtx{\theta}_0$.  Clearly, only a few entries have non negligible
values which correspond to the significant attributes. 
In computing type I errors and powers, we take the elements in $\mtx{\theta}_0$ with magnitude larger than $0.04$ as active and the others as inactive.

In order to validate our approach in the high-dimensional $p>n$
regime, we take random subsamples of the communities (hence  subsamples of
the rows of $\bX_{\rm tot}$) of size $n = 84$. We
compare \sdl\, with the method of \cite{BuhlmannSignificance}, over $20$ realizations and
significance levels $\alpha = 0.01, 0.025, 0.05$. The fraction of type I errors and
statistical power is computed by comparing to $\mtx{\theta}_0$. Table~\ref{tbl:crime}
summarizes the results. As the reader can see, Buhlmann's method is
very conservative yielding to no type-I errors and but much  smaller power than
\sdl.

In table~\ref{tbl:featuresname}, we report the relevant features obtained from the whole dataset
as described above, corresponding to the nonzero entries in
$\mtx{\theta}_0$.
We also report  the  features identified as relevant by \sdl\, and
those identified as relevant by Ridge-based regression method,
from one random subsample of communities of size $n=84$.
Features description is available in~\cite{FrankAsuncion2010}.

Finally, in Fig.~\ref{fig:gaussian_crime_partial}
we plot the normalized histograms of $\mtx{v}_{S_0}$ (in red) and
$\mtx{V}_{S_0^c}$ (in white). Recall that $\mtx{v} = (v_i)_{i=1}^p$ denotes the vector
 with $v_i\equiv \htheta^u_i/(\tau[(\mtx{\Sigma}^{-1})_{ii}]^{1/2})$. Further,
 $\mtx{v}_{S_0}$ and $\mtx{v}_{S_0^c}$ respectively denote the
restrictions of $\mtx{v}$ to the active set $S_0$ and the inactive set $S_0^c$.
 This plot demonstrates that $\mtx{v}_{S^c_0}$ has
roughly standard normal distribution as predicted by the theory.

\begin{figure}[!t]
\begin{minipage}{0.45\textwidth}
\includegraphics*[viewport = 70 200 540 590, width =
2.9in]{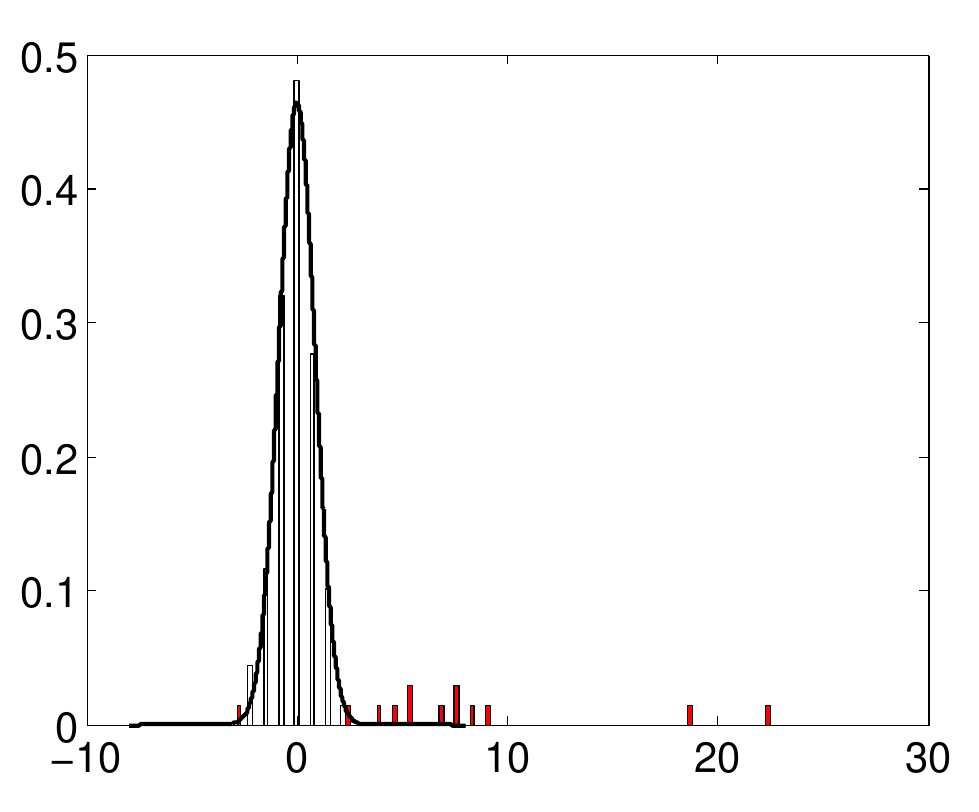}
\captionof{figure}{Normalized histogram of $\mtx{v}_{S_0}$ (in red) and $\mtx{v}_{S_0^c}$ (in white) for the communities data set.}
\label{fig:gaussian_crime_partial}
\end{minipage}
\hspace{0.4cm}
\begin{minipage}{0.5\textwidth}
{\small
\vspace{0.4cm}
\begin{tabular}{c|cc|}

{\bf Method} & Type I err & Avg. power \\ 
&(mean) & (mean)
\\
\cline{1-3} \cline{2-3}
\multicolumn{1}{c|}{\bf \sdl\, $(\alpha = 0.05)$} & 0.0172043 &  0.4807692
    \\ 
 \multicolumn{1}{c|}{\bf Ridge-based regression} & \color{red} 0 & \color{red} 0.1423077
     \\ \hline
\multicolumn{1}{c|}{\bf \sdl\, $(\alpha = 0.025)$} &  0.01129032&   0.4230769
    \\ 
\multicolumn{1}{c|}{\bf Ridge-based regression } &  \color{red}0 & \color{red}0.1269231
     \\ \hline
\multicolumn{1}{c|}{\bf \sdl\, $(\alpha = 0.01)$} &  0.008602151&   0.3576923
    \\ 
\multicolumn{1}{c|}{\bf Ridge-based regression } &  \color{red}0 & \color{red}0.1076923
\end{tabular}}
\vspace{1.8cm}
\captionof{table}{Simulation results for the communities data set.}\label{tbl:crime}
\end{minipage}
\end{figure}
\vspace{-1cm}
\begin{table}[h]
\begin{center}
{\small
\begin{tabular}{|m{4cm}|m{4cm}|m{10cm}|}
\hline
\multicolumn{2}{ |c| }{Relevant features} & 
racePctHisp, PctTeen2Par, PctImmigRecent, PctImmigRec8, PctImmigRec10,
PctNotSpeakEnglWell,  OwnOccHiQuart, NumStreet, PctSameState85,
LemasSwFTFieldPerPop, LemasTotReqPerPop, RacialMatchCommPol, PolicOperBudg
 \\
\hline\hline
\multicolumn{1}{ |c }{\multirow{2}{*}{$\alpha = 0.01$} } &
\multicolumn{1}{ |>{\centering}p{4cm}| }{Relevant features (\sdl\,) } & 
racePctHisp, PctTeen2Par, PctImmigRecent, PctImmigRec8, PctImmigRec10,
PctNotSpeakEnglWell, OwnOccHiQuart, NumStreet, PctSameState85,
LemasSwFTFieldPerPop, LemasTotReqPerPop, RacialMatchCommPol, PolicOperBudg
\\ \cline{2-3}
\multicolumn{1}{ |c  }{} &
\multicolumn{1}{ |>{\centering}p{4cm}| }{ Relevant features (ridge-based regression)} & 
racePctHisp, PctSameState85
\\ 
\hline\hline
\multicolumn{1}{ |c  }{\multirow{2}{*}{$\alpha = 0.025$} } &
\multicolumn{1}{ |>{\centering}p{4cm}| }{Relevant features (\sdl\,)} & 
racePctHisp, PctTeen2Par, PctImmigRecent, PctImmigRec8, PctImmigRec10,
PctNotSpeakEnglWell, {\color{red}{PctHousOccup}}, OwnOccHiQuart, NumStreet,
PctSameState85, LemasSwFTFieldPerPop, LemasTotReqPerPop,
RacialMatchCommPol, PolicOperBudg
\\ \cline{2-3}
\multicolumn{1}{ |c  }{} &
\multicolumn{1}{ |>{\centering}p{4cm}| }{Relevant features (ridge-based regression)} & 
racePctHisp, PctSameState85
\\ 
\hline\hline
\multicolumn{1}{ |c  }{\multirow{2}{*}{$\alpha = 0.05$} } &
\multicolumn{1}{ |>{\centering}p{4cm}| }{Relevant features (\sdl\,)} & 
racePctHisp, {\color{red}{PctUnemployed}}, PctTeen2Par, PctImmigRecent, PctImmigRec8,
PctImmigRec10, PctNotSpeakEnglWell, {\color{red}{PctHousOccup}}, OwnOccHiQuart,
NumStreet, PctSameState85, {\color{red}{LemasSwornFT}}, LemasSwFTFieldPerPop,
LemasTotReqPerPop, RacialMatchCommPol, {\color{red}{PctPolicWhite}}
\\ \cline{2-3}
\multicolumn{1}{ |c  }{} &
\multicolumn{1}{ |>{\centering}p{4cm}| }{Relevant features (ridge-based regression)} & 
racePctHisp, PctSameState85
 \\ 
\hline
\end{tabular}
}
\end{center}
\caption{The relevant features (using the whole dataset) and the relevant features
predicted by \sdl\, and  the method of \cite{BuhlmannSignificance} for a random subsample of size
$n=84$ from the communities. The false positive predictions are in red.}\label{tbl:featuresname}
\end{table}

%
%

\pagebreak
\section{Proofs}\label{sec:Proofs}

\subsection{Proof of Lemma \ref{lemma:Binary}}
\label{sec:ProofBinary}

Fix $\alpha\in [0,1]$, $\mu>0$, and assume that the minimum error rate for type
II errors in testing hypothesis $H_{0,i}$ at significance level
$\alpha$ is $\beta = \bopt_i(\alpha;\mu)$. Further fix $\xi>0$
arbitrarily small.
By definition there exists a statistical test $T_{i,\bX}:\reals^m\to\{0,1\}$ such that
$\prob_{\mtx{\theta}}(T_{i,\bX}(\by) =1) \le \alpha$ for any
$\mtx{\theta}\in\cR_0$ and $\prob_{\mtx{\theta}}(T_{i,\bX}(\by) =0) \le
\beta+\xi$ for any $\mtx{\theta}\in \cR_1$ (with $\cR_0,
\cR_1\in\reals^p$ defined as in Definition \ref{definition:RandomizedTheta}).
Equivalently:
\begin{align}
\begin{split}
\E\big\{\prob_{\mtx{\theta}}(T_{i,\bX}(\by) =1|\bX) \big\}&\le \alpha,\quad\quad \;\;
\mbox{for any  $\mtx{\theta}\in\cR_0$,}\\
\E\big\{\prob_{\mtx{\theta}}(T_{i,\bX}(\by) =0|\bX) \big\}&\le \beta+\xi,\;\;\;
\mbox{for any  $\mtx{\theta}\in\cR_1$.}
\end{split}
\end{align}
We now take expectation of these inequalities with respect to
$\mtx{\theta}\sim Q_0$ (in the first case) and $\mtx{\theta}\sim Q_1$ (in the
second case) and we get, with the notation introduced in the
Definition \ref{definition:RandomizedTheta},
\begin{align*}
\begin{split}
\E\big\{\prob_{Q,0,\bX}(T_{i,\bX}(\by) =1) \big\} &\le \alpha\, ,\\
\E\big\{\prob_{Q,1,\bX}(T_{i,\bX}(\by) =0) \big\} &\le \beta+\xi\, .
\end{split}
\end{align*}
Call $\alpha_{\bX} \equiv \prob_{Q,0,\bX}(T_{i,\bX}(\by) =1)$.
By assumption, for any test $T$, we have 
$\prob_{Q,1,\bX}(T_{i,\bX}(\by) =0) \ge \bbin_{i,\bX}(\alpha_{\bX};Q)$
and therefore the last inequalities imply
\begin{align}
\begin{split}
\E\big\{\alpha_{\bX} \big\}&\le \alpha\, ,\\
\E\big\{\bbin_{i,\bX}(\alpha_{\bX};Q)\big\}&\le \beta+\xi\, .
\end{split}
\end{align}
The thesis follows since $\xi>0$ is arbitrary.

\subsection{Proof of Lemma \ref{lemma:PerXUpperBound}}
\label{sec:ProofPerXUpperBound}

Fix $\bX$, $\alpha$, $i$, $S$ as in the statement and assume, without
loss of generality, $\projp_{S}\mtx{\tx}_i\neq 0$, and $\rank(\bX_S)=|S|<n$.
We take $Q_0 = \normal(0,\mtx{J})$ where 
$\mtx{J}\in\reals^{p\times p}$ is the diagonal matrix with
$\mtx{J}_{jj}= a$ if $j\in S$ and $\mtx{J}_{jj}=0$ otherwise. Here $a\in\reals_+$ and will be chosen later.
For the same covariance matrix $\mtx{J}$, we let $Q_1 = \normal(\mu\, \mtx{e}_i, \mtx{J})$
where $\mtx{e}_i$ is the $i$-th element of the standard basis. 
Recalling that $i \notin S$, and $|S| < s_0$,
the support of $Q_0$ is in $\cR_0$ and the support of $Q_1$
is in $\cR_1$.

Under $\prob_{Q,0,\bX}$ we have $\by\sim
\normal(\mtx{0},a\,\bX_S\bX_S^{\sT}+\sigma^2\id)$, and under $\prob_{Q,1,\bX}$ we
have  $\by\sim\normal(\mu\mtx{\tx}_i,a\,\bX_S\bX_S^{\sT}+\sigma^2\id)$.
Hence the binary hypothesis testing problem under study reduces to the
problem of testing a null  hypothesis on the mean of a Gaussian
random vector with known covariance against a simple alternative. It is well known that the most
powerful test \cite[Chapter 8]{lehmann2005testing} is obtained by comparing the ratio 
$\prob_{Q,0,\bX}(\by)/\prob_{Q,1,\bX}(\by)$ with a threshold.
 Equivalently, the most powerful test is of the form
\begin{align}
T_{i,\bX}(\by) =
\ind\Big\{\<\mu\mtx{\tx}_i,(a\bX_S\bX_S^{\sT}+\sigma^2\id)^{-1}\mtx{y}\>\ge c\Big\}\, ,
\end{align}
for some $c\in\reals$ that is to be chosen to achieve the desired
significance level $\alpha$.
Letting 
\begin{align}
\alpha \equiv 2\Phi\Big(-\frac{c}{\mu\|(a\bX_S\bX_S^{\sT}+\sigma^2\id)^{-1/2}\mtx{\tx}_i\|}\Big)\,,
\end{align}
 it is a straightforward calculation to drive the power of this test as
  %
\[
G\Big(\alpha, \mu\|(a\bX_S\bX_S^{\sT}+\sigma^2\id)^{-1/2}\mtx{\tx}_i\|\Big)\,,
\]
where the function $G(\alpha,u)$ is defined as per Eq.~\eqref{eqn:G}.
 Next we show that the power of this test
converges to $1 - \bora_{i,\bX}(\alpha;S,\mu)$ as $a\to\infty$. Hence
the claim is proved by taking $a\ge a(\xi)$ for some $a(\xi)$ large enough.

Write
\begin{align}
(a \bX_S \bX_S^\sT + \sigma^2 \id)^{-1/2} &= \frac{1}{\sigma} \Big(\id + \frac{a}{\sigma^2} \bX_S \bX_S^\sT\Big)^{-1/2}\nonumber\\
&= \frac{1}{\sigma} \Big\{\id - \bX_S \Big(\frac{\sigma^2}{a} \id + \bX_S^\sT \bX_S\Big)^{-1} \bX_S^\sT\Big\}^{1/2}\,,
\end{align}
where the second step follows from matrix inversion lemma. Clearly, as $a\to \infty$, the right hand side of the above equation converges to  $(1/\sigma)\, \projp_S$. Therefore, the power converges to $1 - \bora_{i,\bX}(\alpha;S,\mu) = G(\alpha, \mu \sigma^{-1}\|\projp_S \mtx{\tx}_i\|)$.
\subsection{Proof of Theorem \ref{thm:GeneralUpperBound}}
\label{sec:ProofGeneralUpperBound}

Let $u_{\bX}\equiv \mu\|\projp_S\mtx{\tx}_i\|_2/\sigma$.  
By Lemma \ref{lemma:Binary} and \ref{lemma:PerXUpperBound}, we
have, 
\begin{align}
1-\bopt_{i}(\alpha;\mu)  \le \sup\Big\{\E G(\alpha_\bX,u_{\bX})\, :\;\;
\E(\alpha_{\bX})\le\alpha\Big\}\, ,
\end{align}
with the $\sup$ taken over measurable functions $\bX\mapsto
\alpha_{\bX}$, and $G(\alpha,u)$ defined as per Eq.~\eqref{eqn:G}. 

It is easy to check that $\alpha\mapsto G(\alpha,u)$ is concave for
any $u\in\reals_+$ and $u\mapsto G(\alpha,u)$ is non-decreasing for
any $\alpha\in [0,1]$ (see Fig.~\ref{fig:AlphaBeta1d}). Further $G$ takes values in $[0,1]$. Hence
\begin{align}
\begin{split}
\E G(\alpha_\bX,u_{\bX})&\le \E\big\{G(\alpha_{\bX},u_{\bX}) \ind(u\le
u_0)\big\} +\prob(u_{\bX}>u_0)\\
&\le  \E\big\{G(\alpha_{\bX},u_0)\big\} +\prob(u_{\bX}>u_0)\\
&\le  G(\E (\alpha_{\bX}),u_0) +\prob(u_{\bX}>u_0)\\
&\le G(\alpha,u_0) +\prob(u_{\bX}>u_0)
\end{split}
\end{align}
Since $\mtx{\tx}_i$ and $\bX_S$ are jointly Gaussian, we have
\begin{align}
\mtx{\tx}_i = \mtx{\Sigma}_{i,S}\mtx{\Sigma}_{S,S}^{-1}\bX_S + {\Sigma}_{i|S}^{1/2} \mtx{z}_i\, ,
\end{align}
with $\mtx{z}_i\sim \normal(0,\id_{n\times n})$ independent of $\bX_S$. It
follows that
\begin{align}
u_{\bX} = (\mu/\sigma)\,{\Sigma}_{i|S}^{1/2} \, \|\projp_S\mtx{z}_i\|_2 \ed
(\mu/\sigma)\, \sqrt{{\Sigma}_{i|S} Z_{n-s_0+1}}\, ,
\end{align}
with $Z_{n-s_0+1}$ a chi-squared random variable with $n-s_0+1$
degrees of freedom. The desired claim follows by taking $u_0= (\mu/\sigma)\sqrt{{\Sigma}_{i|S}(n-s_0+\ell)}$.


\subsection{Proof of Theorem~\ref{thm:power}}\label{sec:ProofPower}

Since $\{(\mtx{\Sigma}(p) = \id_{p \times p}, \mtx{\theta}_0(p), n(p), \sigma(p))\}_{p \in \naturals}$ has a standard distributional limit, the empirical distribution of $\{(\theta_{0,i}, \htheta^u_i)\}_{i=1}^p$ converges weakly to $(\Theta_0, \Theta_0 + \tau Z)$ (with probability one). By the portmanteau theorem, and the fact that $\underset{p\to \infty}{\lim \inf}\, \sigma(p) / \sqrt{n(p)} = \sigma_0$, we have
\begin{eqnarray}\label{eqn:nomass}
\P(0 < |\Theta_0| < \mu_0 \sigma_0) \le \lim_{p \to \infty} \frac{1}{p} \sum_{i=1}^p \ind\bigg(0 < \theta_{0,i} < \mu_0 \frac{\sigma(p)}{\sqrt{n(p)}}\bigg) = 0.
\end{eqnarray}
In addition, since $\mu_0 \sigma_0/2$ is a continuity point of the distribution of $\Theta_0$, we have
\begin{eqnarray}\label{eqn:contpoint}
\lim_{p\to \infty} \frac{1}{p} \sum_{i=1}^p \ind(|\theta_{0,i}| \ge \frac{\mu_0\sigma_0}{2}) = \P(|\Theta_0| \ge \frac{\mu_0 \sigma_0}{2}).
\end{eqnarray}
Now, by Eq.~\eqref{eqn:nomass}, $\P(|\Theta_0| \ge \mu_0 \sigma_0/2) = \P(\Theta_0 \neq 0)$. Further, $\ind(|\theta_{0,i}| \ge \mu_0\sigma_0/2) = \ind(\theta_{0,i} \neq 0)$ for $1\le i \le p$, as $p \to \infty$. Therefore, Eq.~\eqref{eqn:contpoint} yields
\begin{eqnarray}\label{eqn:mass_nz}
\begin{split}
\lim_{p \to \infty} \frac{1}{p} |S_0(p)| &= \lim_{p \to \infty} \frac{1}{p}\sum_{i=1}^p \ind(\theta_{0,i} \neq 0)= \P(\Theta_0 \neq 0).
\end{split}
\end{eqnarray}
Hence,
\begin{eqnarray}
\begin{split}
\lim_{p\to \infty} \frac{1}{|S_0(p)|} \sum_{i \in S_0(p)} T_{i,\bX}(\by) &= 
\lim_{p\to \infty} \frac{1}{|S_0(p)|} \sum_{i \in S_0(p)} \ind(P_i \le \alpha)\\
& = \frac{1}{\P(\Theta_0 \neq 0)} \lim_{p\to \infty} \frac{1}{p}\sum_{i=1}^p \ind(P_i \le \alpha, i\in S_0(p))\\
&=\frac{1}{\P(\Theta_0 \neq 0)} \lim_{p\to \infty} \frac{1}{p}  \sum_{i =1}^p \ind\bigg(\Phi^{-1}(1-\alpha/2) \le \frac{|\htheta^u_i|}{\tau},\, |\theta_{0,i}| \ge \mu_0\frac{\sigma(p)}{\sqrt{n(p)}}\bigg)\\
&\ge \frac{1}{\P(\Theta_0 \neq 0)} \P \bigg(\Phi^{-1}(1-\alpha/2) \le \big|\frac{\Theta_0}{\tau} + Z\big|, \,\, |\Theta_0| \ge \mu_0 \sigma_0\bigg).
\end{split}
\end{eqnarray}
Note that $\tau$ depends on the distribution $p_{\Theta_0}$. Since $|S_0(p)| \le \ve p$, using Eq.~\eqref{eqn:mass_nz}, we have $\P(\Theta_0 \neq 0) \le \ve$, i.e, $p_{\Theta_0}$ is $\ve$-sparse. Let $\tilde{\tau}$ denote the maximum $\tau$ corresponding to densities in the family of $\ve$-sparse densities. As shown in~\cite{DMM09}, $\tilde{\tau} = \tau_* \sigma_0$, where $\tau_*$ is defined by Eqs.~\eqref{eqn:tau_*} and~\eqref{eqn:M_ve}. Consequently,
\begin{eqnarray}
\begin{split}
\lim_{p\to \infty} \frac{1}{|S_0(p)|} \sum_{i \in S_0(p)} T_{i,\bX}(\by)
 &\ge \P \bigg(\Phi^{-1}(1-\alpha/2) \le \big|\frac{\mu_0}{\tau_*} + Z\big|\bigg)\\
&= 1 - \P \bigg(-\Phi^{-1}(1-\alpha/2)- \frac{\mu_0}{\tau_*} \le Z\le \Phi^{-1}(1-\alpha/2)- \frac{\mu_0}{\tau_*}\bigg)\\
& = 1- \{\Phi(\Phi^{-1}(1-\alpha/2)- \mu_0/\tau_*) - \Phi(-\Phi^{-1}(1-\alpha/2)- \mu_0/\tau_*)\}\\
& = G(\alpha,\mu_0/\tau_*)\,.
\end{split}\label{eq:S0-T}
\end{eqnarray}
%
 
 Now, we take the expectation of both sides of Eq.~\eqref{eq:S0-T} with respect to the law of random design $\bX$
 and random noise $\mtx{w}$. Changing the order of limit and expectation by applying dominated convergence theorem and using
 linearity of expectation, we obtain
\begin{eqnarray}
\lim_{p\to \infty} \frac{1}{|S_0(p)|} \sum_{i \in S_0(p)} \E_{\bX,\mtx{w}}\{T_{i,\bX}(\by)\} \ge G\Big(\alpha,\frac{\mu_0}{\tau_*}\Big)\,.
\end{eqnarray}
Since $T_{i,\bX}(\by)$ takes values in $\{0,1\}$, we have $\E_{\bX,\mtx{w}}\{T_{i,\bX}(\by)\} = \prob_{\mtx{\theta}_0(p)}(T_{i,\bX}(\by)=1)$. The result follows by
 noting that the columns of $\bX$ are exchangeable and therefore $\prob_{\mtx{\theta}_0(p)}(T_{i,\bX}(\by)=1)$ does not depend on $i$.
 
 \subsection{Proof of Theorem~\ref{thm:type_I_nstd}}
Since the sequence $\{\mtx{\Sigma}(p), \mtx{\theta}_0(p), n(p), \sigma(p)\}_{p \in \naturals}$ has a standard distributional limit, with probability one the empirical distribution of $\{(\theta_{0,i}, \htheta^u_i, (\mtx{\Sigma}^{-1})_{ii})\}_{i=1}^p$ converges weakly to the distribution of $(\Theta_0, \Theta_0 +\tau \Upsilon^{1/2} Z, \Upsilon)$. Therefore, with probability one, the empirical distribution of 
\begin{eqnarray*}
\bigg\{\frac{\htheta^u_i - \theta_{0,i}}{\tau[(\mtx{\Sigma}^{-1})_{ii}]^{1/2}} \bigg\}_{i=1}^p 
\end{eqnarray*}
converges weakly to $\normal(0,1)$. Hence,
\begin{eqnarray}
\begin{split}
\lim_{p\to \infty} \frac{1}{|S_0^c(p)|} \sum_{i \in S_0^c(p)} T_{i,\bX}(\by) &=
\lim_{p\to \infty} \frac{1}{|S^c_0(p)|} \sum_{i \in S^c_0(p)} \ind(P_i \le \alpha)\\
&= \frac{1}{\P(\Theta_0 = 0)} \, \lim_{p\to \infty} \frac{1}{p} \sum_{i =1}^p \ind(P_i \le \alpha,\, i\in S^c_0(p))\\
&= \frac{1}{\P(\Theta_0 = 0)} \, \lim_{p\to \infty} \frac{1}{p}  \sum_{i =1}^p \ind\bigg(\Phi^{-1}(1-\alpha/2) \le \frac{|\htheta^u_i|}{\tau[(\mtx{\Sigma}^{-1})_{ii}]^{1/2}},\, \theta_{0,i} = 0\bigg)\\
&= \frac{1}{\P(\Theta_0 = 0)} \,\P(\Phi^{-1}(1-\alpha/2) \le |Z|, \Theta_0 = 0)\\
& = \P(\Phi^{-1}(1-\alpha/2) \le |Z|) = \alpha.
\end{split}
\end{eqnarray}
Applying the same argument as in the proof of Theorem~\ref{thm:power}, we obtain the following by taking the expectation 
of both sides of the above equation
\begin{eqnarray}
\lim_{p\to \infty} \frac{1}{|S_0(p)|} \sum_{i \in S_0(p)} \prob_{\mtx{\theta}_0(p)}(T_{i,\bX}(\by)=1) = \alpha\,.
\end{eqnarray}
In particular, for the standard Gaussian design (cf. Theorem~\ref{thm:type_I}), since the columns of $\bX$ are exchangeable we
get $\lim_{p\to \infty} \prob_{\mtx{\theta}_0(p)}(T_{i,\bX}(\by)=1) = \alpha $ for all $i\in S_0(p)$.

 
 \subsection{Proof of Theorem~\ref{thm:power2}}
 The proof of Theorem~\ref{thm:power2} proceeds along the same lines as the proof of Theorem~\ref{thm:power}. Since $\{(\mtx{\Sigma}(p), \mtx{\theta}_0(p), n(p),$  $\sigma(p))\}_{p \in \naturals}$ has a standard distributional limit, with probability one the empirical distribution of $\{(\theta_{0,i}, \htheta^u_i, (\mtx{\Sigma}^{-1})_{ii})\}_{i=1}^p$ converges weakly to the distribution of $(\Theta_0, \Theta_0 +\tau \Upsilon^{1/2} Z, \Upsilon)$. Similar to Eq.~\eqref{eqn:mass_nz}, we have
 \begin{eqnarray}
 \lim_{p \to \infty} \frac{1}{p} |S_0(p)| = \P(\Theta_0 \neq 0).
 \end{eqnarray}
 Also
\begin{align}
\lim_{p\to \infty} \frac{1}{|S_0(p)|} \sum_{i \in S_0(p)} T_{i,\bX}(\by) &=
\lim_{p\to \infty} \frac{1}{|S_0(p)|} \sum_{i \in S_0(p)} \ind(P_i \le \alpha)\nonumber\\
& = \frac{1}{\P(\Theta_0 \neq 0)} \lim_{p\to \infty} \frac{1}{p}\sum_{i=1}^p \ind(P_i \le \alpha, i\in S_0(p))\nonumber\\
&=\frac{1}{\P(\Theta_0 \neq 0)} \lim_{p\to \infty} \frac{1}{p}  \sum_{i =1}^p \ind\bigg(\Phi^{-1}(1-\alpha/2) \le \frac{|\htheta^u_i|}{\tau[(\mtx{\Sigma}^{-1})_{ii}]^{1/2}},\, \frac{|\theta_{0,i}|}{[(\mtx{\Sigma}^{-1})_{ii}]^{1/2}} \ge \mu_0\bigg)\nonumber\\
&= \frac{1}{\P(\Theta_0 \neq 0)} \P \bigg(\Phi^{-1}(1-\alpha/2) \le \big|\frac{\Theta_0}{\tau\Upsilon^{1/2}} + Z\big|, \,\, \frac{|\Theta_0|}{\Upsilon^{1/2}} \ge \mu_0\bigg)\nonumber\\
&\ge \frac{1}{\P(\Theta_0 \neq 0)} \P \bigg(\Phi^{-1}(1-\alpha/2) \le \big|\frac{\mu_0}{\tau} + Z\big|\bigg)\nonumber\\
& = 1- \{\Phi(\Phi^{-1}(1-\alpha/2)- \mu_0/\tau) - \Phi(-\Phi^{-1}(1-\alpha/2)- \mu_0/\tau)\}\nonumber\\
& = G(\alpha,\mu_0/\tau)\,.
\end{align}
%

Similar to the proof of Theorem~\ref{thm:power}, by taking the expectation of both sides of the above inequality we get
\begin{eqnarray}
\lim_{p\to \infty} \frac{1}{|S_0(p)|} \sum_{i \in S_0(p)} \prob_{\mtx{\theta}_0}(T_{i,\bX}(\by)=1)
\ge  G\Big(\alpha,\frac{\mu_0}{\tau}\Big)\,.
\end{eqnarray}
%
\subsection{Proof of Theorem~\ref{thm:Replica-Rigor}}
\label{proof:Replica-Rigor}

In order to prove the claim, we
will establish the following (corresponding to the the case
$\Theta_0=0$ of Definition \ref{def:SDL}):
\begin{description}
\item[Claim 1.] If $\tau$ solves
  Eq.~(\ref{eq:ClaimedTau}), then $\tau^2\to
  \sigma^2_0$ as $p\to\infty$.
\item[Claim 2.] The empirical distribution of
  $\{(\theta_{0,i},\htheta^u_i,(\mtx{\Sigma}^{-1})_{ii})\}_{1\le i\le
    p}$ converges weakly to the random vector
  $(0,\sigma_0\Upsilon^{1/2}Z,\Upsilon)$, with $Z\sim\normal(0,1)$
  independent of $\Upsilon$. Namely fixing
  $\psi:\reals^3\to \reals$, bounded Lipschitz, we need to prove
\begin{align}
\lim_{p\to\infty}\frac{1}{p}\sum_{i=1}^p
\psi(\theta_{0,i},\htheta^{u}_i,(\mtx{\Sigma}^{-1})_{ii}) =
\E\big\{\psi(0,\sigma_0\Upsilon^{1/2} Z, \Upsilon)\big\}\, .
\end{align}
\item[Claim 3.] Recalling $\mtx{r} \equiv \scale (\mtx{y} - \mtx{X} \mtx{\htheta})/\sqrt{n}$, the empirical distribution of $\{r_i\}_{1\le i\le n}$ converges
  weakly to $\normal(0,\sigma_0^2)$.
\end{description}
We will prove these three claims after some preliminary remarks. 
First notice that, by \cite[Theorem 6]{rudelson2013reconstruction}
(and using assumptions $(i)$ and $(iii)$)
$\bX$ satisfies the restricted eigenvalue property RE$(s_0,3s_0,3)$ of
\cite{BickelEtAl} with a $p$-independent constant $\kappa = \kappa(c_{\rm min},c_{\rm
  max})>0$, almost surely for all $p$ large enough. (Indeed
Theorem 6 of \cite{rudelson2013reconstruction} ensures that this holds
with probability at least $1-e^{-\Omega(n(p))}$, and hence almost surely
for all $p$ large enough by Borel-Cantelli lemma.)

We can therefore apply \cite[Theorem 7.2]{BickelEtAl} to conclude that
there exists a constant $C_0$ such that, almost surely for all $p$
large enough, we have
\begin{align}
\|\bX(\mtx{\htheta}-\mtx{\theta}_0)\|_2^2 & \le \frac{1}{2}\,
C_0s_0\sigma^2\log p\le C_0\sigma_0^2ns_0\log p\, ,\label{eq:PRED}\\
\|\mtx{\htheta}-\mtx{\theta}_0\|_1& \le \frac{C_0s_0\sigma}{2} \sqrt{\frac{\log
    p}{n}}\le C_0\sigma_0s_0\sqrt{\log p}\, ,\label{eq:ELL1}\\
\|\mtx{\htheta}-\mtx{\theta}_0\|_2^2& \le \frac{C_0\sigma^2}{2}\, \frac{s_0\log
    p}{n}\le C_0\sigma_0^2s_0\log p\, ,\label{eq:ELL2}\\
\|\mtx{\htheta}\|_0& \le C_0 s_0\, .\label{eq:ELL0}
\end{align}
(Here we used $\sigma^2\le 2n\sigma_0^2$ for all $p$ large enough.)
In particular, from Eq.~(\ref{eq:ELL0}) and assumption $(i)$, it follows that
$\lim_{p\to\infty}\|\htheta\|_0/n=0$ and hence, almost surely,
\begin{align}
\lim_{p\to\infty}\scale = 1\, .\label{eq:LimitScale}
\end{align}

\subsubsection{Claim 1}

By Eq.~(\ref{eq:LimitScale}), we can assume $\scale\in (1/2,2)$ for
all $p$ large enough. By Eq.~(\ref{eq:ClaimedTau}) it is sufficient
to show that $\Ena(\tau^2,b)\to 0$
uniformly for $b\in [1/2,2]$, $\tau\in [0,M\sigma_0]$, for some $M\ge 2$. Since $\|\mtx{\theta}\|_0/p\to 0$, and
by dominated convergence, we have
\begin{align}
\Ena(\tau^2,b) 
&\equiv\lim_{p\to\infty}\frac{1}{p}\E\big\{\big\|\eta_{b}(\tau\mtx{\Sigma}^{-1/2}\mtx{z})\big\|_{\mtx{\Sigma}}^2\big\}\,
.
\end{align}
It is easy to see that $\|\eta_{b}(\mtx{y})\big\|_{\mtx{\Sigma}}\le
C\|\mtx{y}\|_2$ for some constant $C$ depending on $c_{\rm
  min}$, $c_{\rm max}$.
Hence, letting $Y_p \equiv
\|\eta_{b}(\tau\mtx{\Sigma}^{-1/2}\mtx{z})\big\|_{\mtx{\Sigma}}^2/p$,
we conclude that $\E\{Y_p^2\}$ is bounded uniformly in $p$. By Cauchy-Schwarz
\begin{align}
\Ena(\tau^2,b) 
&\equiv\lim_{p\to\infty}\E\{Y_p\} \le
\lim_{p\to\infty}\E\{Y_p^2\}^{1/2}\prob(Y_p\neq 0)^{1/2} \, . \label{eq:EnaCS}
\end{align}
It is therefore sufficient to prove that $\prob(Y_p\neq 0)\to 0$.
By definition of $\eta_{b}(\, \cdot\, )$, cf. Eq.~(\ref{eq:ProximalOperator}), we have 
$\eta_{b}(\mtx{y}) = 0$  if and only if  
\begin{align}
\|\bSigma \by\|_{\infty} \le \frac{\lambda}{b}\, .
\end{align}
Therefore, substituting $\by = \tau\mtx{\Sigma}^{-1/2}\mtx{z}$, we
have
\begin{align}
\prob(Y_p\neq 0) = \prob\Big(\|\bSigma^{1/2}\bz\|_{\infty}>
\frac{\lambda}{b\tau}\Big) \le 
\prob\Big(\max_{i\in [p]}|(\bSigma^{1/2}\bz)_i|>
\frac{\lambda}{2M\sigma_0}\Big) 
\, .
\end{align}
The random variables $(\bSigma^{1/2}\bz)_i$ are
$\normal(0,\Sigma_{ii})$. Therefore by union bound, since
$\Sigma_{ii}\le c_{\rm max}$, for $Z\sim\normal(0,1)$, we have
\begin{align}
\prob(Y_p\neq 0) \le p \, \prob\Big(|Z|\ge
\frac{\lambda}{2M\sigma_0 \sqrt{c_{\rm max}}}\Big) \le 2p \exp\Big(-\frac{\lambda^2}{8M^2\sigma_0^2c_{\max}}\Big)\,
.
\end{align}
Therefore $\prob(Y_p\neq 0) \to 0$ since $\lambda = C_*\sigma_0\sqrt{\log
  p}$, provided  $C_*\ge M\sqrt{8 c_{\max}}$, by Eq.~(\ref{eq:EnaCS}).

\subsubsection{Claim 2}

Let $\mtx{z} = \mtx{\Sigma}^{-1}\bX^\sT \mtx{w} /n$. Conditional on
$\bX$, we have 
\begin{eqnarray}
\mtx{z}|\bX \sim \normal(0, \mtx{C}), \quad \mtx{C} = \frac{\sigma^2}{n}\mtx{\Sigma}^{-1}\Big(\frac{\bX^\sT \bX}{n}\Big) \mtx{\Sigma}^{-1}\,.\label{eq:GaussianZ}
\end{eqnarray} 
Using the assumption $\sigma^2/n \to \sigma_0^2$ and
employing~\cite[Lemma 7.2]{JavanMon-OptSample}, we have,
almost surely,
\begin{eqnarray}
\lim_{p\to \infty}\max_{i\in [p]} \Big| C_{ii} - \sigma^2_0 (\mtx{\Sigma}^{-1})_{ii}\Big| = 0\,.\label{eq:Cii}
\end{eqnarray}
Consequently, we have, for almost every sequence of matrices $\bX$,
letting $Z\sim\normal(0,1)$ independent of $\bX$
\begin{align}
\lim_{p\to\infty}
\frac{1}{p}\sum_{i=1}^p\E\big\{\psi(0,z_i,(\mtx{\Sigma}^{-1})_{ii})\big|\bX\big\}
&= \lim_{p\to\infty}
\frac{1}{p}\sum_{i=1}^p\E\big\{\psi(0,\sqrt{C_{ii}}
Z,(\mtx{\Sigma}^{-1})_{ii})\big|\bX\big\}\\
&= \lim_{p\to\infty}
\frac{1}{p}\sum_{i=1}^p\E\big\{\psi(0,\sigma_0 \sqrt{(\mtx{\Sigma}^{-1})_{ii}}
Z,(\mtx{\Sigma}^{-1})_{ii})\big\}\\
& = \E\big\{\psi(0,\sigma_0 \Upsilon^{1/2}
Z,\Upsilon)\big\}\, .
\end{align}
(Here, the first identity follows from Eq.~(\ref{eq:GaussianZ}), the
second from Eq.~(\ref{eq:Cii}) and the Lipschitz continuity of $\psi$,
and the last from assumption $(iv)$, together with the fact that
$\psi$ is bounded Lipschitz.)

Next, applying Gaussian isoperimetry \cite{Ledoux} to the conditional measure
of $\mtx{z}$ given $\bX$ (noting that $\|\mtx{C}\|_2\le C_1$ almost
surely for all $n$ large enough and some constant $C_1<\infty$), and to the Lipschitz function 
$\mtx{z}\mapsto \Psi(\bz)\equiv p^{-1}\sum_{i=1}^p\psi(0,z_i,(\mtx{\Sigma}^{-1})_{ii})$, we have
\begin{align}
\prob\Big\{|\Psi(\bz)-\E\big(\Psi(\bz)|\bX)|\ge \eps\Big|\bX\Big\}\le
2 e^{-n\eps^2/C_1}\, ,
\end{align}
almost surely for all $n$ large enough.
Using Borel-Cantelli lemma, we conclude that, almost surely
\begin{align}
\lim_{p\to\infty}
\frac{1}{p}\sum_{i=1}^p\psi(0,z_i,(\mtx{\Sigma}^{-1})_{ii})
= \E\big\{\psi(0,\sigma_0 \Upsilon^{1/2}
Z,\Upsilon)\big\}\, .\label{eq:WeakLimitZ}
\end{align}
Substituting $\by = \bX \theta_0+ \mtx{w}$ in definition of $\mtx{\htheta}^u$, we get 
\begin{align}
\mtx{\htheta}^u -\mtx{\theta}_0 &=\Big(\frac{\scale}{n} \mtx{\Sigma}^{-1} \bX^\sT\bX - \id \Big)(\mtx{\theta}_0 - \mtx{\htheta})
+ \frac{\scale}{n} \mtx{\Sigma}^{-1}\bX^\sT \mtx{w}\nonumber\\
&= \scale \Big(\mtx{\Sigma}^{-1} \mtx{\hSigma} - \id \Big)(\mtx{\theta}_0 - \mtx{\htheta})
+(\scale-1)(\mtx{\theta}_0-\mtx{\htheta})
+ \frac{\scale}{n}\mtx{\Sigma}^{-1}\bX^\sT \mtx{w}\\
& = \Delta_1 +\Delta_2 + \scale\, \bz\, ,
\end{align}
where we recall that $\mtx{\hSigma}\equiv (\bX^{\sT}\bX)/n$ and we defined
\begin{align}
\Delta_1 =  \scale \Big(\mtx{\Sigma}^{-1} \mtx{\hSigma} - \id \Big)(\mtx{\theta}_0 - \mtx{\htheta})\,,\quad \quad
\Delta_2 = (\scale-1)(\mtx{\theta}_0-\mtx{\htheta})\,.
\end{align}
The proof is therefore concluded if we can show that, almost surely,
\begin{align}
\lim_{p\to\infty}\frac{1}{p}\sum_{i=1}^p\big|\psi(\theta_{0,i},\theta_{0,i}+\Delta_{1,i}+\Delta_{2,i}+\scale
z_i,(\mtx{\Sigma}^{-1})_{ii})
-\psi(0,z_i,(\mtx{\Sigma}^{-1})_{ii})\big| = 0\, . \label{eq:Claim1Key}
\end{align}
In order to simplify the notation, and since the last argument plays
no role, we let $\psi_i(x,y) \equiv\psi(x,y,(\mtx{\Sigma}^{-1})_{ii})$.
Without loss of generality we will assume that
$\|\psi_i\|_{\infty}\le 1$, and that the Lipschitz modulus of $\psi_i$ is at
most one. 

In order to prove the claim (\ref{eq:Claim1Key}), note that, by
triangular inequality,
\begin{align}
\frac{1}{p}\sum_{i=1}^p\big|\psi_i(\theta_{0,i},\theta_{0,i}+&\Delta_{1,i}+\Delta_{2,i}+\scale
z_i)
-\psi_i(0,z_i)\big| \\
& \le \frac{1}{p}\sum_{i=1}^p g(\theta_{0,i}) +
\frac{1}{p}\sum_{i=1}^p g(\Delta_{1,i}) +
\frac{1}{p}\sum_{i=1}^pg(\Delta_{2,i})+\frac{1}{p}\sum_{i=1}^pg\big(|\scale-1|\,z_i\big)\,
,\label{eq:FinalTriangular}
\end{align}
where $g(x) \equiv \min(|x|,2)$. 

The first term in Eq.~(\ref{eq:FinalTriangular}) vanishes since by
assumption $(i)$, $s_0\le n/(\log p)^2$, and therefore
\begin{align}
\lim_{p\to\infty}\frac{1}{p}\sum_{i=1}^p g(\theta_{0,i}) \le
\lim_{p\to\infty} \frac{2s_0}{p} = 0\, .
\end{align}

Consider next the third term in Eq.~(\ref{eq:FinalTriangular}):
\begin{align}
\frac{1}{p}\sum_{i=1}^pg(\Delta_{2,i})\le\frac{1}{p}\,|\scale-1|\,
\|\mtx{\htheta}-\mtx{\theta}_0\|_1\le \frac{1}{p}\, |\scale-1|\, 
C_0\sigma_0 s_0\sqrt{\log p}\, ,
\end{align}
where the second inequality follows from (\ref{eq:ELL1}), that holds
almost surely for all $p$ large enough. Next, using
Eq.~(\ref{eq:LimitScale}), 
\begin{align}
\lim_{p\to\infty}\frac{1}{p}\sum_{i=1}^pg(\Delta_{2,i})=0\, .
\end{align}

Consider next  the last term in Eq.~(\ref{eq:FinalTriangular}), 
and fix $\delta>0$ arbitrarily small. Since by
Eq.~(\ref{eq:LimitScale}), $|\scale-1|\le \delta$  almost surely for
all $p$ large enough, we have
\begin{align}
\lim\sup_{p\to\infty}\frac{1}{p}\sum_{i=1}^pg\big(|\scale-1|\,z_i\big)
\le \lim\sup_{p\to\infty}\frac{1}{p}\sum_{i=1}^pg\big(\delta\,z_i\big)
= \E\{g(\delta\sigma_0\Upsilon^{1/2}Z)\}\, ,
\end{align}
where the last equality follows from Eq.~(\ref{eq:WeakLimitZ}),
applied to $\psi(a,b,c) = g(\delta b)$. Finally, letting $\delta\to
0$, we get, by dominated convergence, $\lim_{\delta\to
  0}\E\{g(\delta\sigma_0\Upsilon^{1/2}Z)\}=0$,
and hence
\begin{align}
\lim\sup_{p\to\infty}\frac{1}{p}\sum_{i=1}^pg\big(|\scale-1|\,z_i\big)
= 0\, .
\end{align}

Finally, consider the second term. Fix a partition $[p] =\cup_{\ell=1}^LA_{\ell}$, where
$s_0\le |A_{\ell}|\le 9\, s_0$, and $p/(9s_0)\le L\le (p/s_0)$. Then
\begin{align}
\frac{1}{p}\sum_{i=1}^pg(\Delta_{1,i})\le\frac{1}{p}\|\Delta_1\|_1&\le\frac{1}{p}
\sum_{\ell=1}^L\sqrt{|A_{\ell}|}\,\|\Delta_{1,A_{\ell}}\|_2
\le \frac{3}{\sqrt{s_0}}\max_{1\le   \ell\le L}\,\|\Delta_{1,A_{\ell}}\|_2\, . \label{eq:GDelta1}
\end{align}
Let $T\equiv \supp(\mtx{\htheta})\cup\supp(\mtx{\theta}_0)$. By
Eq.~(\ref{eq:ELL0}) we have $|T|\le (C_0+1)s_0$ almost surely for all
$p$ large enough. Hence, using $\scale\le 2$ for all $p$ large enough,
we get
\begin{align}
\|\Delta_{1,A_{\ell}}\|_2\le
2\big\|\big(\mtx{\Sigma}^{-1}\mtx{\hSigma}-\id\big)_{A_{\ell},T}\big\|_2
\|\mtx{\htheta}-\mtx{\theta}_0\|_2\, .\label{eq:Delta1Bound}
\end{align}
The operator norm can be upper bounded using the following lemma,
whose proof can be found in Appendix \ref{app:MatrixBound}.
(See also the conference paper \cite{JavanMon-OptSample} for a similar estimate: we provide
a full proof in appendix for the reader's convenience.)
\begin{lemma}\label{lemma:MatrixBound}
Under the assumption of Theorem \ref{thm:Replica-Rigor},
for any constant $c_0$, there exists $K = K(c_{\rm min},c_{\rm max},c_0)$
\begin{align}
\max\Big\{\big\|\big(\mtx{\Sigma}^{-1}\mtx{\hSigma}-\id\big)_{A,B}\big\|_2
:\; A,B\subseteq [p],\; |A|,|B|\le c_{0}\,s_0\Big\}\le K\,
\sqrt{\frac{s_0\log p}{n}}\, , \label{eq:operator_lem}
\end{align}
with probability at least $(1-p^{-5})$ for all $p$ large enough.
\end{lemma}

Using Borel-Cantelli lemma together with Eq.~(\ref{eq:operator_lem}) and
Eq.~(\ref{eq:ELL2}) in Eq.~(\ref{eq:Delta1Bound}) we get, almost
surely for all $p$ large enough, and some constant $C$
\begin{align}
\|\Delta_{1,A_{\ell}}\|_2 \le C \sigma_0\frac{s_0\log p}{\sqrt{n}}\, 
\end{align}
Hence, using Eq.~(\ref{eq:GDelta1}) and assumption $(i)$
\begin{align}
\lim_{p\to\infty}\frac{1}{p}\sum_{i=1}^pg(\Delta_{1,i}) \le
\lim_{p\to\infty}   C' \sigma_0\frac{\sqrt{s_0}\log p}{\sqrt{n}}=0\, .
\end{align}
This finishes the proof of Claim 2.

\subsubsection{Claim 3}

Note that, by definition
\begin{align}
\mtx{r} = \frac{1}{\sqrt{n}}\mtx{w}
+\frac{\scale}{\sqrt{n}} \, \mtx{X}(\mtx{\theta}_0-\mtx{\htheta})+ 
\frac{\scale-1}{\sqrt{n}}\mtx{w}\, .
\end{align}
Defining $\bu \equiv \mtx{w}/\sqrt{n}$, $\bh_1
\equiv\scale\mtx{X}(\mtx{\theta}_0-\mtx{\htheta})/\sqrt{n}$,
and $\bh_2 \equiv (\scale-1)\bu$, the proof consists in two steps.
First, for any Lipschitz bounded function
$\psi:\reals\to\reals$, we have 
\begin{align}
\lim_{p\to\infty}\frac{1}{n}\sum_{i=1}^n\psi(u_i) =
\E\{\psi(\sigma_0\, Z)\}
\, .
\end{align}
This is immediate by the law of large numbers, since $\bu$ has i.i.d. $\normal(0,\sigma^2/n)$
entries and by assumption $\sigma^2/n\to \sigma_0^2$.

Second, we have
\begin{align}
\frac{1}{n}\sum_{i=1}^n\big|\psi(r_i)-\psi(u_i) \big|\le
\frac{1}{n}\sum_{i=1}^ng(h_{1,i})+ \frac{1}{n}\sum_{i=1}^ng(h_{2,i})\,  ,
\end{align}
and the right hand side converges to $0$ as $p\to\infty$. Here the first term is controlled
using Eq.~(\ref{eq:PRED}), and the second using
Eq.~(\ref{eq:LimitScale}).  These derivations are almost identical to the ones
of Claim 2, and we omit them.

\section*{Acknowledgements}

This work was partially supported by the NSF CAREER award CCF-0743978,
and the grants AFOSR FA9550-10-1-0360 and  AFOSR/DARPA FA9550-12-1-0411.
 
 
\appendix
\section{Effective noise variance $\tau_0^2$}\label{app:tau}
As stated in Theorem \ref{pro:lasso-limit}  the unbiased estimator
$\mtx{\htheta}^u$ can be regarded 
--asymptotically-- as a noisy version
of $\mtx{\theta}_0$ with noise variance $\tau_0^2$. An explicit formula for $\tau_0$ is given in~\cite{BayatiMontanariLASSO}.
For the reader's convenience, we explain it here using our notations. 

Denote by $\eta: \reals \times \reals_+ \to \reals$ the soft thresholding function
\begin{align}
\eta(x;a) = \begin{cases}
x-a & \text{if } x> a,\\
0 & \text{if } -a\le x\le a\\
x+a & \text{otherwise.}
\end{cases}
\end{align}
Further define function $\Map: \reals_+ \times \reals_+ \to \reals_+$ as
\begin{align}
\Map(\tau^2,a) = \sigma^2 + \frac{1}{\delta} \E \{[\eta(\Theta_0 + \tau Z;a) - \Theta_0]^2\}\,,
\end{align}
where $\Theta_0$ and $Z$ are defined as in Theorem~\ref{pro:lasso-limit}.
Let $\kappa_{\min} = \kappa_{\min}(\delta)$ be the unique non-negative solution of the equation 
\begin{align}
(1+\kappa^2) \Phi(-\kappa) - \kappa \phi(\kappa) = \frac{\delta}{2}\,.
\end{align}
%
The effective noise variance $\tau_0^2$ is obtained by solving the following two equations for $\kappa$ and $\tau$, restricted to the interval $\kappa \in (\kappa_{\min},\infty)$: 
\begin{eqnarray}
\tau^2 &=& \Map(\tau^2,\kappa \tau)\,,\label{eq:Tau1}\\
\lambda &=& \kappa \tau \bigg[1-\frac{1}{\delta} \prob(|\Theta_0 + \tau Z| \ge \kappa\tau) \bigg]\,.\label{eq:Tau2}
\end{eqnarray}
Existence and uniqueness of $\tau_0$ is proved in~\cite[Proposition 1.3]{BayatiMontanariLASSO}.

\section{Tunned regularization parameter $\lambda$ } \label{app:lambda}
In previous appendix, we provided the value of $\tau_0$ for a given regularization parameter $\lambda$.
In this appendix, we discuss the tuned value for $\lambda$ to achieve the power stated in Theorem~\ref{thm:power}. 


Let $\cF_{\ve} \equiv \{p_{\Theta_0}: \,\,\, p_{\Theta_0}(\{0\}) \ge 1-\eps\}$ be the family of $\ve$-sparse distributions.
Also denote by ${M}(\ve,\kappa)$ the minimax risk of soft thresholding denoiser (at threshold value $\kappa$)
over $\cF_\ve$, i.e.,
\begin{align}
M(\ve,\kappa) = \sup_{p_{\Theta_0}\in \cF_\ve} \, \E\{[\eta(\Theta_0 + Z;\kappa) - \Theta_0]^2\}\,.
\end{align}
The function $M$ can be computed explicitly by evaluating the mean square error on the worst case
$\ve$-sparse distribution. A simple calculation gives 
\begin{align}
{M}(\ve, \kappa) = \ve (1+\kappa^2) + (1-\ve) [2(1+\kappa^2) \Phi(-\kappa) - 2\kappa \phi(\kappa)]\,.
\end{align}
Further, let
\begin{align}
\kappa_* (\ve)\equiv \arg \min_{\kappa \in \reals_+} {M}(\ve, \kappa)\,.
\end{align}
In words, $\kappa_*(\ve)$ is the minimax optimal value of threshold $\kappa$ over $\cF_\ve$.
The value of $\lambda$ for Theorem~\ref{thm:power} is then obtained by solving Eq.~\eqref{eq:Tau1} for
$\tau$ with $\kappa = \kappa_*(\ve)$, and then substituting $\kappa_*$ and $\tau$ in Eq.~\eqref{eq:Tau2} to get
 $\lambda = \lambda(p_{\Theta_0}, \sigma, \ve, \delta)$.
 
\begin{remark}\label{rem:lambda}
The theory of \cite{BayatiMontanariLASSO,DMM-NSPT-11} implies that in the standard Gaussian setting and
for a converging sequence of instances $\{(\mtx{\theta}_0(p), n(p), \sigma(p))\}_{p \in \naturals}$, Eq.~\eqref{eq:Tau2} is equivalent to 
the following:
\begin{align}
\lambda \scale = \kappa \tau\,,
\end{align}
where the normalization factor $\scale$ is given by Eq.~\eqref{eq:Step2-std}.  
\end{remark}

\section{Statistical power of earlier approaches}
\label{app:Alternative}

In this appendix, we briefly compare our results with those of Zhang
and Zhang \cite{ZhangZhangSignificance}, and B\"uhlmann
\cite{BuhlmannSignificance}. Both of these papers consider
deterministic designs under restricted eigenvalue conditions. As a
consequence, controlling both type I and type II errors requires a
significantly larger value of $\mu/\sigma$.

In~\cite{ZhangZhangSignificance}, authors propose low dimensional projection estimator ($\ldpe\,$) to assess
confidence intervals for the parameters $\theta_{0,j}$. Following the treatment of \cite{ZhangZhangSignificance}, a necessary condition for rejecting $H_{0,j}$ with
non-negligible probability is 
\begin{align}
|\theta_{0,j}|\ge c\tau_j\sigma(1+\epsilon'_n) ,
\end{align}
which follows immediately from
\cite[Eq. (23)]{ZhangZhangSignificance}. Further $\tau_j$ and
$\eps'_n$ are lower bounded in
\cite{ZhangZhangSignificance} as follows
\begin{align}
\tau_j & \ge \frac{1}{\|\mtx{\tx}_j\|_2}\, ,\\
\eps'_n & \ge C \eta^* s_0\sqrt{\frac{\log p}{n}}\, ,
\end{align}
where for a standard Gaussian design $\eta^* \ge \sqrt{\log p}$. Using
further $\|\mtx{\tx}_j\|_2\le 2\sqrt{n}$ which again holds with high
probability for standard Gaussian designs, we get the necessary condition
\begin{align}
|\theta_{0,j}|\ge c'\max\Big\{\frac{\sigma s_0\log p}{n}, \frac{\sigma}{\sqrt{n}}\Big\}\, ,
\end{align}
for some constant $c'$.

In \cite{BuhlmannSignificance}, p-values are defined, in the notation 
of the present paper, as
\begin{align}
P_j \equiv 2\Big\{1-\Phi\big((a_{n,p;j}(\sigma)|\htheta_{j,{\rm corr}}|-\Delta_j)_+\big)\Big\}\, ,
\end{align}
with $\htheta_{j,{\rm corr}}$ a `corrected' estimate of
$\theta_{0,j}$, cf. \cite [Eq. (2.14)]{BuhlmannSignificance}. The corrected estimate $\htheta_{j,{\rm corr}}$
is defined by the following motivation. The ridge estimator bias, in general, can be decomposed into two terms.
The first term is the estimation bias governed by the regularization, and the second term is the additional projection bias
$\proj_{\bX} \mtx{\theta}_0 - \mtx{\theta}_0$, where $\proj_{\bX}$ denotes the orthogonal projector on the row space of $\bX$.
The corrected estimate $\htheta_{j,{\rm corr}}$ is defined in such a way to remove the second bias term under the null hypothesis $H_{0,j}$.
Therefore, neglecting the first bias term, we have $\htheta_{j,{\rm corr}} = (\proj_{\bX})_{jj} \theta_{0,j}$. 

Assuming the corrected estimate to be consistent (which it is in $\ell_1$ sense
under the assumption of the paper), rejecting $H_{0,j}$ with
non-negligible probability requires
\begin{align}
|\theta_{0,j}|\ge \frac{c}{a_{n,p;j}(\sigma)|(\proj_{\bX})_{jj}|} \max\{\Delta_j, 1\}\, ,
\end{align}

Following \cite [Eq. (2.13)]{BuhlmannSignificance} and keeping the dependence on $s_0$ instead of assuming $s_0 =
o((n/\log p)^{\xi})$, we have
\begin{eqnarray}
\frac{\Delta_j}{a_{n,p;j}(\sigma)|(\proj_{\bX})_{jj}|} =
 C\max_{k\in [p]\backslash j} \frac{|(\proj_{\bX})_{jk}|}{|(\proj_{\bX})_{jj}|}\,  \sigma s_0 \sqrt{\frac{\log p}{n}}\,.
\end{eqnarray}
Further, plugging for $a_{n,p;j}$ we have
\begin{eqnarray}
\frac{1}{a_{n,p;j}(\sigma)|(\proj_{\bX})_{jj}|} = \frac{\sigma \sqrt{\Omega_{jj}}}{\sqrt{n}\,|(\proj_{\bX})_{jj}|}\,.
\end{eqnarray}
For a standard Gaussian design $(p/n) (\proj_{\bX})_{jk}$ is approximately
distributed as $u_1$, where $\mtx{u}=(u_1,u_2,\dots,$ $u_n)\in\reals^n$ is a
uniformly random vector with  $\|\mtx{u}\|=1$. In particular $u_1$ is
approximately $\normal(0,1/n)$. A standard calculation yields 
$\max_{k\in [p]\setminus j}|(\proj_{\bX})_{jk}|\ge\sqrt{n\log p}/p$
with high probability. Furthermore, $|(\proj_{\bX})_{jj}|$ concentrates
around $n/p$. Finally, by definition of $\Omega_{jj}$ (cf.~\cite [Eq. (2.3)]{BuhlmannSignificance})
and using classical large deviation results about the singular values of a Gaussian matrix, we 
have $\Omega_{jj} \ge (n/p)^2$ with high probability.
Hence, a necessary condition for rejecting $H_{0,j}$ with
non-negligible probability is
\begin{align}
|\theta_{0,j}|\ge C \max\Big\{ \frac{\sigma s_0 \log p}{n},\frac{\sigma}{\sqrt{n}}\Big\} ,
\end{align}
as stated in Section \ref{sec:Introduction}.

\section{Replica method calculation}
\label{app:replica}

In this section we outline the replica calculation leading to the
Claim \ref{claim:Replica}.  Indeed we consider an even more general
setting, whereby the $\ell_1$ regularization is
  replaced by an arbitrary separable penalty.
Namely, instead of the Lasso, we consider regularized  least squares estimators of the form
\begin{eqnarray}
\mtx{\htheta}(\by,\bX) = \arg\min_{\mtx{\theta}\in\reals^{p}}  \label{eqn:general_opt}
\Big\{
\frac{1}{2n} \|\by - \bX\mtx{\theta}\|^2 + J(\mtx{\theta})\Big\}\, ,
\end{eqnarray}
with $J(\mtx{\theta})$ being a convex separable penalty function; namely 
for a vector $\mtx{\theta} \in \reals^p$, we have $J(\mtx{\theta}) = J_1(\theta_1) + \cdots +
J_p(\theta_p)$, where $J_\ell : \reals \to \reals$ is a
convex function. Important instances from this ensemble of estimators
are Ridge-regression ($J(\mtx{\theta}) = \lambda\|\mtx{\theta}\|^2/2$), 
and the Lasso  ($J(\mtx{\theta}) =\lambda \|\mtx{\theta}\|_1$).
The Replica Claim \ref{claim:Replica} is generalized to the present
setting replacing $\lambda \|\mtx{\theta}\|_1$ by $J(\mtx{\theta})$.
The only required modification concerns the
definition of the factor $\scale$. 
We let $\scale$ be the unique positive  solution of the following equation
\vspace{-0.15cm}
\begin{eqnarray}
1 = \frac{1}{\scale}+\frac{1}{n}\, \Tr\Big\{(\id+\scale
\mtx{\Sigma}^{-1/2}\nabla^2 J(\mtx{\htheta}) \mtx{\Sigma}^{-1/2})^{-1}\Big\},
\label{eq:GeneralOnsager}
\end{eqnarray}
where $\nabla^2J(\mtx{\htheta})$ denotes the Hessian, which is diagonal
since $J$ is separable. If $J$ is non differentiable, then we formally
set $[\nabla^2 J(\mtx{\htheta})]_{ii} =\infty$ for all the coordinates $i$
such that $J$ is non-differentiable at $\htheta_i$. It can be
checked that this definition is well posed and that yields the
previous choice for $J(\mtx{\theta}) = \lambda\|\mtx{\theta}\|_1$.

\vspace{0.5cm}

We pass next to establishing the claim.
We limit ourselves to the main steps, since 
analogous calculations can be found in several earlier works
\cite{TanakaCDMA,GuoVerdu,TakedaKabashima}.
For a general introduction to the method and its motivation we refer
to \cite{SpinGlass,MezardMontanari}.
Also, for the sake of simplicity, we shall focus on characterizing the
asymptotic distribution of $\mtx{\htheta}^u$,
cf. Eq.~(\ref{eq:ThetauDef}). The distribution of $r$ is derived
by the same approach.

Fix a sequence of instances
$\{(\mtx{\Sigma}(p),\mtx{\theta}_0(p),n(p),\sigma(p))\}_{p\in\naturals}$. For the
sake of simplicity, we assume $\sigma(p)^2= n(p)\sigma_0^2$ and $n(p) = p\delta$
(the slightly more general case $\sigma(p)^2= n(p)[\sigma_0^2+o(1)]$
and $n(p) = p[\delta+o(1)]$ does not require any change to the
derivation given here, but is more cumbersome notationally).
Fix $\tg:\reals\times\reals\times\reals\to\reals$
a continuous function convex in its first argument, and 
let $g(u,y,z) \equiv \max_{x\in\reals} [ux-\tg(x,y,z)]$ be its
Lagrange dual.
The replica calculation aims at estimating the following
moment generating function (\emph{partition function})
\begin{align}
\cZ_p(\beta,s) \equiv \int \!\exp\Big\{-\frac{\beta}{2n}
\|\by-\bX\mtx{\theta}\|_2^2-\beta  J(\mtx{\theta})&-\beta s\sum_{i=1}^p
[g(u_i,\theta_{0,i},(\mtx{\Sigma}^{-1})_{ii})-u_i\htheta^u_i]\nonumber\\
&-\frac{\beta}{2n}(s\tscale)^2
\|\bX\mtx{\Sigma}^{-1}\mtx{u}\|^2_2\Big\}
\de \mtx{\theta}\,\de \mtx{u}\, .
\label{eq:PartitionFunction}
\end{align}
Here $(y_i,\mtx{x}_i)$ are i.i.d. pairs distributed as per model
(\ref{eqn:regression})
and $\mtx{\htheta}^u = \mtx{\theta} + (\tscale/n) \, \mtx{\Sigma}^{-1} \bX^\sT (\by-\bX \mtx{\theta})$ with $\tscale \in \reals$
to be defined below.
Further, $g:\reals\times\reals\times\reals\to\reals$
is a continuous function strictly convex in its first argument. Finally, $s\in\reals_+$ and $\beta>0$ is a
`temperature' parameter not to be confused with the type II error rate
as used in the main text.
 We will eventually show that the
appropriate choice of $\tscale$ is given by Eq.~(\ref{eq:GeneralOnsager}).

Within the replica method, it is
assumed that the limits $p\to\infty$, $\beta\to\infty$ exist  almost
surely for the quantity $(p\beta)^{-1}\log \cZ_p(\beta,s)$, and that the
order of the limits can be exchanged.
We therefore define
\begin{eqnarray}
\Free(s) &\equiv &-\lim_{\beta\to \infty}\lim_{p\to \infty} \frac{1}{p\beta}\,
\log \cZ_p(\beta,s)\\
&\equiv &-\lim_{p\to \infty}\lim_{\beta\to \infty} \frac{1}{p\beta}\,
\log \cZ_p(\beta,s)\, . \label{eq:FirstBeta}
\end{eqnarray}
In other words $\Free(s)$ is the exponential growth rate of
$\cZ_p(\beta,s)$. It is also assumed that $p^{-1}\log \cZ_p(\beta,s)$
concentrates tightly around its expectation so that $\Free(s)$ can in
fact be evaluated by computing
\begin{eqnarray}
\Free(s) = -\lim_{\beta\to \infty}\lim_{p\to \infty}\frac{1}{p\beta}\, \E
\log \cZ_p(\beta,s)\, ,\label{eq:FreeExp}
\end{eqnarray}
where expectation is being taken with respect to the distribution of
$(y_1,\mtx{x}_1), \cdots, (y_n,\mtx{x}_n)$. Notice that, by
Eq.~(\ref{eq:FirstBeta}) and using Laplace method in the integral
(\ref{eq:PartitionFunction}), we have
\begin{align}
&\Free(s) = \nonumber\\
&\lim_{p\to \infty} \frac{1}{p}\min_{\mtx{\theta},\mtx{u}\in\reals^p}\Big\{\frac{1}{2n}
\|\by-\bX\mtx{\theta}\|_2^2+J(\mtx{\theta})+s\sum_{i=1}^p [g(u_i,\theta_{0,i},(\mtx{\Sigma}^{-1})_{ii})-u_i\htheta^u_i]+\frac{1}{2n}(s\tscale)^2
\|\bX\mtx{\Sigma}^{-1}\mtx{u}\|^2_2\Big\}. 
\end{align}
Finally we assume that the derivative of $\Free(s)$ as $s\to 0$ can be
obtained by differentiating inside the limit. This condition holds,
for instance, if the cost function is strongly convex at $s=0$. We 
get 
\begin{eqnarray}
\frac{\de \Free}{\de s}(s=0) =\lim_{p\to \infty}
\frac{1}{p}\sum_{i=1}^p
\min_{u_i\in\reals}[g(u_i,\theta_{0,i},(\mtx{\Sigma}^{-1})_{ii})-u_i\htheta^u_i]
\end{eqnarray}
where $\mtx{\htheta}^u = \mtx{\htheta} + (\tscale/n) \, \mtx{\Sigma}^{-1} \bX^\sT (\by-\bX
\mtx{\htheta})$ and $\mtx{\htheta}$ is the minimizer of the regularized least
squares as per Eq.~(\ref{eqn:Lasso_cost}).
Since, by duality $\tg(x,y,z) \equiv \max_{u\in\reals} [ux-g(u,y,z)]$,
we get
\begin{eqnarray}
\frac{\de \Free}{\de s}(s=0) &
= -\lim_{p\to \infty}
\frac{1}{p}\sum_{i=1}^p
\tg(\htheta^u_i,\theta_{0,i},(\mtx{\Sigma}^{-1})_{ii})\, .  \label{eq:LimEmp}
\end{eqnarray}
Hence, by computing  $\Free(s)$ using Eq.~(\ref{eq:FreeExp}) for a
complete set of functions $\tg$, we get access to the corresponding
limit quantities (\ref{eq:LimEmp}) and hence, via standard weak
convergence arguments, to the joint empirical distribution of the
triple $(\htheta^u_i,\theta_{0,i},(\mtx{\Sigma}^{-1})_{ii})$,
cf. Eq.~(\ref{eq:EmpDef}).

In order to carry out the calculation of $\Free(s)$, we begin by
rewriting the partition function (\ref{eq:PartitionFunction}) in a
more convenient form. Using the definition of $\mtx{\htheta}^u$ and after a
simple manipulation 
\begin{align}
&\cZ_p(\beta,s) = \nonumber\\
&\int \!\exp\Big\{-\frac{\beta}{2n}
\|\by-\bX(\mtx{\theta}+s\tscale\mtx{\Sigma}^{-1}\mtx{u})\|_2^2-\beta  J(\mtx{\theta})+\beta s\<\mtx{u},\mtx{\theta}\>-\beta\, s\sum_{i=1}^p
g(u_i,\theta_{0,i},(\mtx{\Sigma}^{-1})_{ii})\Big\}
\de \mtx{\theta}\,\de \mtx{u}\, .\label{eq:PartFun2}
\end{align}
Define the measure $\nu(\de \mtx{\theta})$ over $\mtx{\theta}\in\reals^p$ as
follows
\begin{align}
\nu(\de\mtx{\theta}) = \int \exp\Big\{-\beta J(\mtx{\theta}-s\tscale\mtx{\Sigma}^{-1}\mtx{u})+\beta s\<\mtx{\theta}-s\tscale\mtx{\Sigma}^{-1}\mtx{u},\mtx{u}\>-\beta s\sum_{i=1}^p
g(u_i,\theta_{0,i},(\mtx{\Sigma}^{-1})_{ii})\Big\}\,
\de \mtx{u}\, .
\end{align}
Using this definition and with the change of variable $\mtx{\theta}' =
\mtx{\theta}+s\tscale\mtx{\Sigma}^{-1}\mtx{u}$, we can rewrite Eq.~(\ref{eq:PartFun2}) as
\begin{align}
\cZ_p(\beta,s) &\equiv \int \!\exp\Big\{-\frac{\beta}{2n}
\|\by-\bX\mtx{\theta}\|_2^2\Big\}\, \nu(\de\mtx{\theta})\nonumber \\
&=\int \!\exp\Big\{i\sqrt{\frac{\beta}{n}}
\<\mtx{z},\by-\bX\mtx{\theta}\>\Big\}\,\nu(\de\mtx{\theta})\, \gamma_n(\de \mtx{z})\nonumber\\
&=\int \!\exp\Big\{i\sqrt{\frac{\beta}{n}}
\<\mtx{w},\mtx{z}\>+i\sqrt{\frac{\beta}{n}}\<\mtx{z},\bX(\mtx{\theta}_0-\mtx{\theta})\>\Big\}\,\nu(\de\mtx{\theta})\,
\gamma(\de \mtx{z})
\label{PartFunRepresentation}
\, ,
\end{align}
where $\gamma_n(\de \mtx{z})$ denotes the standard Gaussian measure on
$\reals^n$: $\gamma_n(\de \mtx{z}) \equiv (2\pi)^{-n/2}\exp(-\|\mtx{z}\|^2_2/2)\, \de \mtx{z}$.

The replica method aims at computing the expected log-partition
function, cf. Eq.~(\ref{eq:FreeExp}) using the identity
\begin{eqnarray}
\E\log\cZ_p(\beta,s) = \left.\frac{\de\phantom{k}}{\de
  k}\right|_{k=0}\log\E\big\{\cZ_p(\beta,s)^k\big\}\, .
\end{eqnarray}
This formula would require computing fractional moments of $\cZ_p$ as
$k\to 0$. The replica method consists in a prescription that allows to
compute a formal expression for the $k$ integer, and then
extrapolate it as $k\to 0$. Crucially, the limit $k\to 0$ is inverted
with the one $p\to\infty$:
\begin{eqnarray}
\lim_{p\to\infty}\frac{1}{p}\E\log\cZ_p(\beta,s) = \left.\frac{\de\phantom{k}}{\de
  k}\right|_{k=0}\lim_{p\to\infty}\frac{1}{p}\log\E\big\{\cZ_p(\beta,s)^k\big\}\, .\label{eq:ReplicaFormula}
\end{eqnarray}

In order to represent $\cZ_p(\beta,s)^k$, we use the identity
\begin{eqnarray}
\Big(\int f(\mtx{x})\,\rho(\de \mtx{x})\Big)^k= \int f(\mtx{x}^1)f(\mtx{x}^2)\cdots f(\mtx{x}^k)\, \rho(\de
\mtx{x}^1)\cdots\rho(\de \mtx{x}^k)\, .
\end{eqnarray}
In order to apply this formula to Eq.~(\ref{PartFunRepresentation}),
we let, with a slight abuse of notation, 
$\nu^k(\de\mtx{\theta})\equiv \nu(\de\mtx{\theta}^1)\times\nu(\de\mtx{\theta}^2)\times\cdots
\times \nu(\de\mtx{\theta}^k)$ be a measure over $(\reals^p)^k$, with
$\mtx{\theta}^1,\dots,\mtx{\theta}^k\in\reals^p$.
Analogously $\gamma_n^k(\de \mtx{z})\equiv \gamma_n(\de \mtx{z}^1)\times\gamma_n(\de \mtx{z}^2)\times\cdots
\times \gamma_n(\de \mtx{z}^k)$, with
$\mtx{z}^1,\dots,\mtx{z}^k\in\reals^n$.
With these notations, we have
\begin{align}
\E\{\cZ_p(\beta,s)^k\}=\int \!\E\exp\Big\{i\sqrt{\frac{\beta}{n}}
\<\mtx{w},\sum_{a=1}^k \mtx{z}^a\>+i\sqrt{\frac{\beta}{n}}\<\bX,\sum_{a=1}^k \mtx{z}^a(\mtx{\theta}_0-\mtx{\theta}^a)^{\sT}\>\Big\}\,\nu^k(\de\mtx{\theta})\,
\gamma_n^k(\de \mtx{z})\, .\label{eq:ExpectedPartFun}
\end{align}
In the above expression $\E$ denotes expectation with respect to the
noise vector $\mtx{w}$, and the design matrix $\bX$. Further, we used
$\<\,\cdot\,,\,\cdot\,\>$ to denote matrix scalar product as well: $\<\mtx{A},\mtx{B}\>
\equiv \Tr(\mtx{A}^{\sT}\mtx{B})$.

At this point we can take the expectation with respect to $\mtx{w}$,
$\bX$. We use the fact that, for any $\mtx{M}\in\reals^{n\times p}$,
$\mtx{u}\in\reals^n$
\begin{eqnarray}
\begin{split}
\E\big\{\exp\big( i\<\mtx{w},\mtx{u}\>\big)\big\} & = &
\exp\Big\{-\frac{1}{2}n\sigma^2_0\, \|\mtx{u}\|^2_2\Big\}\, ,\\
\E\big\{\exp\big( i\<\mtx{M},\bX\>\big)\big\} & =
&\exp\Big\{-\frac{1}{2}\,\<\mtx{M}, \mtx{M} \mtx{\Sigma} \>\Big\}\, ,
\end{split}
\end{eqnarray}
Using these identities in Eq.~(\ref{eq:ExpectedPartFun}), we obtain
\begin{align}
&\E\{\cZ_p^k\}= \nonumber\\
&\int\!\exp\Big\{-\frac{1}{2}\,\beta\sigma^2_0\sum_{a=1}^k
\|\mtx{z}^a\|_2^2-\frac{\beta}{2n}
\sum_{a,b=1}^k \<\mtx{z}^a,\mtx{z}^b\>\,\<(\mtx{\theta}^a-\mtx{\theta}_0),\mtx{\Sigma}(\mtx{\theta}^b-\mtx{\theta}_0)\>\Big\}\,\nu^k(\de\mtx{\theta})\,
\gamma_n^k(\de \mtx{z})\, .\label{eq:ExpZ}
\end{align}
We next use the identity
\begin{align}
e^{-xy} = \frac{1}{2\pi i}\int_{(-i\infty,i\infty)}\int_{(-\infty,\infty)} e^{-\zeta q+\zeta x-qy} \,
\de\zeta\, \de q\, ,
\end{align}
where the integral is over $\zeta\in (-i\infty,i\infty)$ (imaginary
axis) and $q\in (-\infty,\infty)$. We apply this  identity to
Eq.~(\ref{eq:ExpZ}), and introduce integration variables $\mtx{Q} \equiv
(Q_{ab})_{1\le a,b\le k}$ and $\mtx{\Lambda} \equiv(\Lambda_{ab})_{1\le
  a,b\le k}$.
Letting $\de \mtx{Q}\equiv \prod_{a,b}\de Q_{ab}$ and  $\de \mtx{\Lambda} \equiv \prod_{a,b}\de \Lambda_{ab}$ 
\begin{align}
\E\{\cZ_p^k\} &= \Big(\frac{\beta n}{4\pi i}\Big)^{k^2}\int
\exp\Big\{-p\cS_k(\mtx{Q},\mtx{\Lambda})\Big\}\,\de \mtx{Q}\, \de \mtx{\Lambda}\, ,\label{eq:IntegralDq}\\
\cS_k(\mtx{Q},\mtx{\Lambda})&=\frac{\beta\delta}{2}\sum_{a,b=1}^k\Lambda_{ab}Q_{ab}
-\frac{1}{p}\,
\log\xi(\mtx{\Lambda}) - \delta\,\log \hxi(\mtx{Q})\, ,\label{eq:Action}\\
\xi(\mtx{\Lambda}) & \equiv \int\!
\exp\Big\{\frac{\beta}{2}\sum_{a,b=1}^k\Lambda_{ab}\<(\mtx{\theta}^a-\mtx{\theta}_0),\mtx{\Sigma}(\mtx{\theta}^b-\mtx{\theta}_0)\>\Big\}\,
\nu^k(\de\mtx{\theta})\, ,\\
\hxi(\mtx{Q}) &\equiv \int\!
\exp\Big\{-\frac{\beta}{2}\sum_{a,b=1}^k(\sigma_0^2\id+\mtx{Q})_{a,b}\, z_1^a\, z_1^b\Big\}\,\gamma^k_1(\de
{z}_1)\, .
\end{align}
Notice that above we used the fact that, after introducing
$\mtx{Q},\mtx{\Lambda}$, the integral over $(\mtx{z}^1,\dotsc, \mtx{z}^k)\in (\reals^n)^k$ factors into $n$
integrals over $(\reals)^k$ with measure $\gamma^k_1(\de z_1)$.

We next use the saddle point method in Eq.~(\ref{eq:IntegralDq}) to
obtain 
\begin{eqnarray}
-\lim_{p\to\infty}\frac{1}{p}\log\E\{\cZ_p^k\} &= \cS_k(\mtx{Q}^*,\mtx{\Lambda}^*)\, ,
\end{eqnarray}
where $\mtx{Q}^*$, $\mtx{\Lambda}^*$ is the saddle-point location. The replica
method provides a hierarchy of ansatz for this saddle-point. The
first level of this hierarchy is the so-called \emph{replica
  symmetric} ansatz postulating that $\mtx{Q}^*$, $\mtx{\Lambda}^*$ ought to
be invariant under permutations of the row/column indices. This is motivated by
the fact that $\cS_k(\mtx{Q},\mtx{\Lambda})$ is indeed left unchanged by such
change of variables. This is equivalent to postulating that
\begin{eqnarray}
Q^*_{ab} =
\begin{cases}
q_1 & \mbox{ if } a=b,\\
q_0 & \mbox{ otherwise,}
\end{cases}\,,\;\;\;\;\;\;\;\;
\Lambda^*_{ab} =
\begin{cases}
\beta\zeta_1 & \mbox{ if } a=b,\\
\beta\zeta_0 & \mbox{ otherwise,}
\end{cases}
\end{eqnarray}
where the factor $\beta$ is for future convenience.
Given that the partition function,
cf. Eq.~(\ref{eq:PartitionFunction}) is the integral of a log-concave
function, it is expected that the replica-symmetric ansatz yields in
fact the correct result
\cite{SpinGlass,MezardMontanari}.

The next step consists in substituting  the above expressions for
$\mtx{Q}^*$, $\mtx{\Lambda}^*$ in $\cS_k(\,\cdot\,,\,\cdot\,)$ and then taking the
limit $k\to 0$. We will consider separately each term of
$\cS_k(\mtx{Q},\mtx{\Lambda})$, cf.~Eq.~(\ref{eq:Action}). 

Let us begin with the first term
\begin{eqnarray}
\sum_{a,b=1}^k\Lambda^*_{ab}Q^*_{ab} = k\,
\beta\zeta_1q_1+k(k-1)\beta\zeta_0q_0\, .
\end{eqnarray}
Hence
\begin{eqnarray}
\lim_{k\to\infty}
\frac{\beta\delta}{2k}\sum_{a,b=1}^k\Lambda^*_{ab}Q^*_{ab}=
\frac{\beta^2\delta}{2}\,(\zeta_1q_1-\zeta_0q_0)\, .\label{eq:LambdaQ}
\end{eqnarray}

Let us consider  $\hxi(\mtx{Q}^*)$. We have
\begin{align}
\log \hxi(\mtx{Q}^*) &= -\frac{1}{2}\log \Det(\id+\beta\sigma^2 \id+\beta \mtx{Q}^*) \\
&=
-\frac{k-1}{2}\log\big(1+\beta(q_1-q_0)\big)-\frac{1}{2}\log\big(1+\beta(q_1-q_0)+\beta
k(\sigma^2+q_0)\big)\, .
\end{align}
In the limit $k\to 0$ we thus obtain
\begin{align}
\lim_{k\to 0}\frac{1}{k}(-\delta)\log \hxi(\mtx{Q}^*) &=
\frac{\delta}{2}\,\log\big(1+\beta(q_1-q_0)\big)+\frac{\delta}{2}
\,\frac{\beta(\sigma^2+q_0)}{1+\beta(q_1-q_0)}\, . \label{eq:Xihat}
\end{align}

Finally, introducing the notation $\|\mtx{v}\|_{\mtx{\Sigma}}^2 \equiv \<\mtx{v},\mtx{\Sigma}
\mtx{v}\>$, we have
\begin{align}
\xi(\mtx{\Lambda}^*) & \equiv \int\!
\exp\Big\{\frac{\beta^2}{2} (\zeta_1-\zeta_0)\sum_{a=1}^k\|\mtx{\theta}^a-\mtx{\theta}_0\|_{\mtx{\Sigma}}^2
+\frac{\beta^2\zeta_0}{2} \sum_{a,b=1}^k\< (\mtx{\theta}^a-\mtx{\theta}_0),\mtx{\Sigma} (\mtx{\theta}^b-\mtx{\theta}_0)\>\Big\}\,
\nu^k(\de\mtx{\theta})\, ,\nonumber\\
&= \E\int\!
\exp\Big\{\frac{\beta^2}{2} (\zeta_1-\zeta_0)\sum_{a=1}^k\|\mtx{\theta}^a-\mtx{\theta}_0\|_{\mtx{\Sigma}}^2
+\beta\sqrt{\zeta_0} \sum_{a=1}^k\< \mtx{z},\mtx{\Sigma} ^{1/2}(\mtx{\theta}^a-\mtx{\theta}_0)\>\Big\}\,
\nu^k(\de\mtx{\theta})\, ,
\end{align}
where expectation is with respect to $\mtx{z}\sim\normal(0,\id_{p\times
  p})$. Notice that, given $\mtx{z}\in\reals^p$, the integrals over
$\mtx{\theta}^1,\mtx{\theta}^2,\dots,\mtx{\theta}^k$ factorize, whence
\begin{align}
\xi(\mtx{\Lambda}^*) &= \E \left\{\left[\int\!
\exp\Big\{\frac{\beta^2}{2} (\zeta_1-\zeta_0)\|\mtx{\theta}-\mtx{\theta}_0\|_{\mtx{\Sigma}}^2
+\beta\sqrt{\zeta_0} \< \mtx{z},\mtx{\Sigma} ^{1/2}(\mtx{\theta}-\mtx{\theta}_0)\>\Big\}\,
\nu(\de\mtx{\theta})\right]^k\right\}\, .
\end{align}
Therefore
\begin{align}
\lim_{k\to 0}\frac{(-1)}{pk}&\log \xi(\mtx{\Lambda}^*) =\nonumber \\
&-\frac{1}{p}\E \left\{\log\left[\int\!
\exp\Big\{\frac{\beta^2}{2} (\zeta_1-\zeta_0)\|\mtx{\theta}-\mtx{\theta}_0\|_{\mtx{\Sigma}}^2
+\beta\sqrt{\zeta_0} \< \mtx{z},\mtx{\Sigma} ^{1/2}(\mtx{\theta}-\mtx{\theta}_0)\>\Big\}\,
\nu(\de\mtx{\theta})\right]\right\}\, . \label{eq:Xi} 
\end{align}
Putting Eqs.~(\ref{eq:LambdaQ}), (\ref{eq:Xihat}), and (\ref{eq:Xi})
together we obtain
\begin{align}
-\lim_{p\to\infty}\frac{1}{p\beta}\E\log\cZ_p &= \lim_{k\to
  0}\frac{1}{k\beta}\cS_k(\mtx{Q}^*,\mtx{\Lambda}^*)\nonumber\\
&= \frac{\beta\delta}{2}\,(\zeta_1q_1-\zeta_0q_0)
+\frac{\delta}{2\beta}\,\log\big(1+\beta(q_1-q_0)\big)+\frac{\delta}{2}
\,\frac{\sigma^2+q_0}{1+\beta(q_1-q_0)}\nonumber\\
&-\lim_{p\to \infty}\frac{1}{p\beta}\E \left\{\log\left[\int\!
\exp\Big\{\frac{\beta^2}{2} (\zeta_1-\zeta_0)\|\mtx{\theta}-\mtx{\theta}_0\|_{\mtx{\Sigma}}^2\right.\right.\nonumber\\
&\left.\left.\phantom{AAAAAAA\int}+\beta\sqrt{\zeta_0} \< \mtx{z},\mtx{\Sigma} ^{1/2}(\mtx{\theta}-\mtx{\theta}_0)\>\Big\}\,
\nu(\de\mtx{\theta})\right]\right\}\, .
\end{align}

We can next take the limit $\beta\to\infty$. In doing this, one has to
be careful with respect to the behavior of the saddle point parameters
$q_0,q_1$, $\zeta_0,\zeta_1$. A careful analysis (omitted here)
shows that $q_0, q_1$ have the same limit, denoted here by $q_0$, and  
$\zeta_0, \zeta_1$ have the same limit, denoted by $\zeta_0$.
Moreover $q_1-q_0= (q/\beta)+o(\beta^{-1})$ and 
$\zeta_1-\zeta_0= (-\zeta/\beta)+o(\beta^{-1})$. Substituting in
the above expression, and using Eq.~(\ref{eq:FreeExp}), we get
\begin{align}
\Free(s) =  & \frac{\delta}{2}(\zeta_0q-\zeta q_0) +\frac{\delta}{2}\, \frac{q_0+\sigma^2}{1+q} \nonumber\\
&+\lim_{p\to\infty}\frac{1}{p}
\E\min_{\mtx{\theta}\in\reals^p}\Big\{\frac{\zeta}{2}\|\mtx{\theta}-\mtx{\theta}_0\|_{\mtx{\Sigma}}^2-\sqrt{\zeta_0}\<\mtx{z},\mtx{\Sigma}^{1/2}(\mtx{\theta}-\mtx{\theta}_0)\>+
\ttJ(\mtx{\theta};s)\Big\}\, ,\\
\ttJ(\mtx{\theta};s) =  & \min_{\mtx{u}\in\reals^p}\Big\{ J(\mtx{\theta}-s\tscale\mtx{\Sigma}^{-1}\mtx{u})-s\<\mtx{\theta}-s\tscale\mtx{\Sigma}^{-1}\mtx{u},\mtx{u}\>+ s\sum_{i=1}^p
g(u_i,\theta_{0,i},(\mtx{\Sigma}^{-1})_{ii})\Big\}\, .
\end{align}
After the change of variable $\mtx{\theta}-s\tscale\mtx{\Sigma}^{-1}\mtx{u}\to\mtx{\theta}$, this reads
\begin{eqnarray}
\Free(s) &= &\frac{\delta}{2}(\zeta_0q-\zeta q_0) +\frac{\delta}{2}\, \frac{q_0+\sigma_0^2}{1+q}-\frac{\zeta_0}{2\zeta}\nonumber\\
&&+\lim_{p\to\infty}\frac{1}{p}
\E\min_{\mtx{\theta},\mtx{u}\in\reals^p}\Big\{\frac{\zeta}{2}\Big\|\mtx{\theta}-\mtx{\theta}_0-\frac{\sqrt{\zeta_0}}{\zeta}\mtx{\Sigma}^{-1/2}\mtx{z}+s\tscale\mtx{\Sigma}^{-1}\mtx{u}\Big\|_{\mtx{\Sigma}}^2+
\tJ(\mtx{\theta},\mtx{u};s)\Big\}\, ,\\
\tJ(\mtx{\theta},\mtx{u};s)    &=&J(\mtx{\theta})-s\<\mtx{\theta},\mtx{u}\>+ s\sum_{i=1}^p
g(u_i,\theta_{0,i},(\mtx{\Sigma}^{-1})_{ii})\, .
\end{eqnarray}
Finally, we must set $\zeta,\zeta_0$ and  $q,q_0$ to their saddle
point values. We start by using the stationarity conditions with
respect to $q$, $q_0$:
\begin{eqnarray}
\frac{\partial\Free}{\partial q}(s) &=&
\frac{\delta}{2}\zeta_0-\frac{\delta}{2}\,
\frac{q_0+\sigma_0^2}{(1+q)^2}\, ,\\
\frac{\partial\Free}{\partial q_0}(s) &=&-\frac{\delta}{2}\,\zeta
+\frac{\delta}{2}\,\frac{1}{1+q}\, .
\end{eqnarray}
We use these to eliminate $q$ and $q_0$. Renaming $\zeta_0 =
\zeta^2\tau^2$, we get our final expression for $\Free(s)$:
\begin{eqnarray}
\Free(s) &=
&-\frac{1}{2}(1-\delta)\zeta\tau^2-\frac{\delta}{2}\zeta^2\tau^2+\frac{\delta}{2}\sigma_0^2\zeta
\label{eq:FreeFinal}\nonumber\\
&&+\lim_{p\to\infty}\frac{1}{p}
\E\min_{\mtx{\theta},\mtx{u}\in\reals^p}\Big\{\frac{\zeta}{2}\Big\|\mtx{\theta}-\mtx{\theta}_0-\tau\mtx{\Sigma}^{-1/2}\mtx{z}+s\tscale\mtx{\Sigma}^{-1}\mtx{u}\Big\|_{\mtx{\Sigma}}^2+
\tJ(\mtx{\theta},\mtx{u};s)\Big\}\, ,\\
\tJ(\mtx{\theta},\mtx{u};s)    &=&J(\mtx{\theta})-s\<\mtx{\theta},\mtx{u}\>+ s\sum_{i=1}^p
g(u_i,\theta_{0,i},(\mtx{\Sigma}^{-1})_{ii})\, .
\end{eqnarray}
Here it is understood that $\zeta$ and $\tau^2$ are to be set to their
saddle point values.

We are interested in the derivative of $\Free(s)$ with respect to
$s$, cf. Eq.~(\ref{eq:LimEmp}).
Consider first the case $s=0$. Using the assumption $\Energy^{(p)}(a,b) \to\Energy(a,b)$,
cf. Eq.~(\ref{eq:ReplicaAssumption}), we get
\begin{eqnarray}
\Free(s=0) = -\frac{1}{2}(1-\delta)\zeta\tau^2-\frac{\delta}{2}\zeta^2\tau^2+\frac{\delta}{2}\sigma_0^2\zeta
+\Energy(\tau^2,\zeta)\,  .
\end{eqnarray}
The values of $\zeta$, $\tau^2$ are obtained by setting to zero the
partial derivatives
\begin{eqnarray}
\frac{\partial\Free}{\partial\zeta}(s=0) &=&
-\frac{1}{2}(1-\delta)\tau^2-\delta\zeta\tau^2+\frac{\delta}{2}\sigma_0^2
+\frac{\partial\Energy}{\partial\zeta}(\tau^2,\zeta)\, ,\label{eq:DFree1}\\
\frac{\partial\Free}{\partial\tau^2}(s=0)
&=&-\frac{1}{2}(1-\delta)\zeta-\frac{\delta}{2}\,\zeta^2 +
\frac{\partial\Energy}{\partial\tau^2}(\tau^2,\zeta)\, ,\label{eq:DFree2}
\end{eqnarray}
Define, as in the statement of the Replica Claim
\begin{align}
\Ena(a,b) 
&\equiv\lim_{p\to\infty}\frac{1}{p}\E\big\{\big\|\eta_{b}(\mtx{\theta}_0+\sqrt{a}\mtx{\Sigma}^{-1/2}\mtx{z})-\mtx{\theta}_0\big\|_{\mtx{\Sigma}}^2\big\}\,
,\\
\Enb(a,b)
&\equiv\lim_{p\to\infty}\frac{1}{p}\E\big\{\div\eta_{b}(\mtx{\theta}_0+\sqrt{a}\mtx{\Sigma}^{-1/2}\mtx{z})\big\}\nonumber\\
&=\lim_{p\to\infty}\frac{1}{p\tau}\E\big\{\<\eta_{b}(\mtx{\theta}_0+\sqrt{a}\mtx{\Sigma}^{-1/2}\mtx{z}),\mtx{\Sigma}^{1/2}\mtx{z}\>\big\}\, ,
\end{align}
where the last identity follows by integration by parts. These limits
exist by the assumption that
$\nabla\Energy^{(p)}(a,b)\to\nabla\Energy(a,b)$. In particular
\begin{align}
\frac{\partial\Energy}{\partial\zeta}(\tau^2,\zeta)& =
\frac{1}{2}\Ena (\tau^2,\zeta)-\tau^2\,\Enb (\tau^2,\zeta)+\frac{1}{2}\,\tau^2\, ,\\
\frac{\partial\Energy}{\partial\tau^2}(\tau^2,\zeta)& =
-\frac{\zeta}{2}\Enb (\tau^2,\zeta)+\frac{1}{2}\,\zeta\, .
\end{align}
Substituting these expressions in Eqs.~(\ref{eq:DFree1}),
(\ref{eq:DFree2}), and simplifying, we conclude that the derivatives
vanish if and only if $\zeta, \tau^2$ satisfy the following
equations
\begin{eqnarray}
\tau^2 &=& \sigma_0^2 +\frac{1}{\delta}\, \Ena(\tau^2,\zeta)\,
,\label{eq:LambdaTau1}\\
\zeta & = & 1-\frac{1}{\delta}\,\Enb(\tau^2,\zeta)\, . \label{eq:LambdaTau2}
\end{eqnarray}
The solution of these equations is expected to be unique for $J$ convex and $\sigma_0^2>0$.

Next consider the derivative of $\Free(s)$ with respect to $s$, which
is our main object of interest, cf. Eq.~(\ref{eq:LimEmp}).
By differentiating Eq.~(\ref{eq:FreeFinal}) and inverting the order of derivative
and limit, we get
\begin{eqnarray}
\frac{\de \Free}{\de s}(s=0) = \lim_{p\to\infty}
\frac{1}{p}\E\min_{\mtx{u}\in\reals^p}\Big\{\zeta\tscale\<\mtx{u},\mtx{\htheta}-\mtx{\theta}_0-\tau\mtx{\Sigma}^{-1/2}\mtx{z}\>-
\<\mtx{\htheta},\mtx{u}\>+\sum_{i=1}^pg(u_i,\theta_{0,i},(\mtx{\Sigma}^{-1})_{ii})\Big\}\, ,
\end{eqnarray}
where $\mtx{\htheta}$ is the minimizer at $s=0$, i.e.,
$\mtx{\htheta} = \eta_{\zeta}(\mtx{\theta}_0+\tau\mtx{\Sigma}^{-1/2}\mtx{z})$, and
$\zeta,\tau^2$ solve Eqs.~(\ref{eq:LambdaTau1}),
(\ref{eq:LambdaTau2}). At this point we choose $\tscale =
1/\zeta$. Minimizing over $\mtx{u}$ (recall that $\tg(x,y,z) =
\max_{u\in\reals}[ux-g(u,y,z)]$), we get
\begin{eqnarray}
\frac{\de \Free}{\de s}(s=0) = -\lim_{p\to\infty}
\frac{1}{p}\E\,
\tg(\theta_{0,i}+\tau(\mtx{\Sigma}^{-1/2}\mtx{z})_i,\theta_{0,i},(\mtx{\Sigma}^{-1})_{ii})\, .
\end{eqnarray}
Comparing with  Eq.~(\ref{eq:LimEmp}), this proves the claim that the
standard distributional limit does indeed hold. 

Notice that $\tau^2$ is given by Eq.~(\ref{eq:LambdaTau1}) that, for
$\scale=1/\zeta$ does indeed coincide with the claimed Eq.~(\ref{eq:ClaimedTau}).
Finally consider the scale parameter $\scale=\scale(p)$ defined by
Eq.~(\ref{eq:GeneralOnsager}). 
We claim that 
\begin{eqnarray}
\lim_{p\to\infty} \scale(p) = \tscale=\frac{1}{\zeta}\, .\label{eq:ClaimOnsager}
\end{eqnarray}
Consider, for the sake of simplicity, the case that $J$
is differentiable and strictly convex (the general case can be obtained as a limit). Then the minimum condition of the proximal operator
(\ref{eq:ProximalOperator}) reads
\begin{eqnarray}
\mtx{\theta} = \eta_b(\by) \;\;\; \Leftrightarrow\;\;\; b\mtx{\Sigma} (\by-\mtx{\theta}) = \nabla J(\mtx{\theta})\, .
\end{eqnarray}
Differentiating with respect to $\mtx{\theta}$, and denoting by $\D\eta_b$
the Jacobian of $\eta_b$, we get $\D\eta_b(\by) =
(\id + b^{-1}\mtx{\Sigma}^{-1} \nabla^2 J(\mtx{\theta}))^{-1}$ and hence 
\begin{eqnarray}
\Enb(a,b) & = & \lim_{p\to\infty}\frac{1}{p}\E\, \Tr \Big\{(1+b^{-1}
\mtx{\Sigma}^{-1/2}\nabla^2 J(\mtx{\htheta}) \mtx{\Sigma}^{-1/2})^{-1}\Big\} \, ,\label{eq:E2Tr}\\
\mtx{\htheta} &\equiv &\eta_b(\mtx{\theta}_0+\sqrt{a}\, \mtx{\Sigma}^{-1/2}\, \mtx{z})\, .
\end{eqnarray}
Hence, combining Eqs.~\eqref{eq:LambdaTau2} and~\eqref{eq:E2Tr} implies that $\tscale=\zeta^{-1}$ satisfies 
\begin{eqnarray}
1&=& \frac{1}{\tscale}+\lim_{p\to\infty}\frac{1}{n}\E \,\Tr \Big\{(1+\tscale
\mtx{\Sigma}^{-1/2}\nabla^2 J(\mtx{\htheta}) \mtx{\Sigma}^{-1/2})^{-1}\Big\} \, ,\\
\mtx{\htheta} &\equiv &\eta_{1/\tscale}(\mtx{\theta}_0+\tau\, \mtx{\Sigma}^{-1/2}\, \mtx{z})\,.
\end{eqnarray}
The claim (\ref{eq:ClaimOnsager}) follows by comparing this with
Eq.~(\ref{eq:GeneralOnsager}), and noting that, by the above $\mtx{\htheta}$
is indeed asymptotically distributed as the estimator (\ref{eqn:general_opt}).

%
%
\newpage
\section{Simulation results}\label{app:Simulation Results}
Consider the setup discussed in Section~\ref{sec:experiment-std}. We compute type I error and statistical power of \sdl\,, ridge-based regression~\cite{BuhlmannSignificance}, and $\ldpe$~\cite{ZhangZhangSignificance} for $10$ realizations of each configuration. The experiment results for the case of identity covariance ($\mtx{\Sigma} = \id_{p\times p}$) are summarized in Tables~\ref{tbl:iid_alpha05} and~\ref{tbl:iid_alpha025}. Table~\ref{tbl:iid_alpha05} and Table~\ref{tbl:iid_alpha025} respectively correspond to significance levels $\alpha = 0.05$ and $\alpha = 0.025$. The results are also compared with the asymptotic bound given in Theorem~\ref{thm:power}.

The results for the case of circulant covariance matrix are summarized in Tables~\ref{tbl:illus_alpha05} and~\ref{tbl:illus_alpha025}. Table~\ref{tbl:illus_alpha05} and Table~\ref{tbl:illus_alpha025} respectively correspond to significance levels $\alpha = 0.05$ and $\alpha = 0.025$. The results are also compared with the lower bound given in Theorem~\ref{thm:power2}. 

For each configuration, the tables contain the means and the standard deviations of type I errors and the powers  across 10 realizations.
A quadruple such as $(1000, 600, 50, 0.1)$ denotes the values of $p = 1000$, $n = 600$, $s_0 = 50$, $\mu = 0.1$. 
\vspace{1.3cm}

\begin{table*}[]
\begin{center}
{\small
\begin{tabular}{c|cccc|}

{\bf Method} & Type I err & Type I err & Avg. power & Avg. power \\ 
&(mean) & (std.) & (mean) & (std)
\\
\cline{1-5} \cline{2-5}
\multicolumn{1}{c|}{\bf $\sdl$ $(1000, 600, 50, 0.15)$} &  0.06189 & 0.01663 & 0.83600 & 0.04300
    \\ 
\multicolumn{1}{c|}{\bf Ridge-based regression $(1000, 600, 50, 0.15)$} &  \color{red} 0.00989 & \color{red}0.00239  & \color{red} 0.35000& \color{red} 0.07071
\\
  \multicolumn{1}{c|}{\bf $\ldpe\,$ $(1000, 600, 50, 0.15)$} & \color{darkblue}0.03925 & \color{darkblue} 0.00588& \color{darkblue} 0.55302& \color{darkblue} 0.07608
 \\
\multicolumn{1}{c|}{\bf Asymptotic Bound $(1000, 600, 50, 0.15)$} & \color{olivegreen}0.05 & \color{olivegreen}NA & \color{olivegreen}0.84721 & \color{olivegreen}NA
\\ 
\hline 
\multicolumn{1}{c|}{\bf $\sdl$ $(1000, 600, 25, 0.15)$} &  0.0572 & 0.0190 & 0.8840 & 0.0638
    \\ 
\multicolumn{1}{c|}{\bf Ridge-based regression $(1000, 600, 25, 0.15)$} &  \color{red} 0.0203& \color{red} 0.0052 & \color{red} 0.3680 & \color{red}0.1144  
\\
  \multicolumn{1}{c|}{\bf $\ldpe\,$ $(1000, 600, 25, 0.15)$} & \color{darkblue}0.04010 & \color{darkblue} 0.00917& \color{darkblue} 0.62313& \color{darkblue} 0.05408
 \\
\multicolumn{1}{c|}{\bf Asymptotic Bound $(1000, 600, 25, 0.15)$} & \color{olivegreen}0.05 & \color{olivegreen}NA & \color{olivegreen}0.9057 & \color{olivegreen}NA
\\ 
\hline 
\multicolumn{1}{c|}{\bf $\sdl$ $(1000, 300, 50, 0.15)$} &  0.05547 & 0.01554 & 0.45800 & 0.06957
    \\ 
\multicolumn{1}{c|}{\bf Ridge-based regression $(1000, 300, 50, 0.15)$} &  \color{red} 0.01084 & \color{red}  0.00306& \color{red} 0.19200 & \color{red}  0.04541
\\
  \multicolumn{1}{c|}{\bf $\ldpe\,$ $(1000, 300, 50, 0.15)$} & \color{darkblue}0.03022 & \color{darkblue} 0.00601& \color{darkblue} 0.23008& \color{darkblue} 0.08180
 \\
\multicolumn{1}{c|}{\bf Asymptotic Bound $(1000, 300, 50, 0.15)$} & \color{olivegreen}0.05 & \color{olivegreen}NA & \color{olivegreen}0.31224  & \color{olivegreen}NA
\\ 
\hline 
\multicolumn{1}{c|}{\bf $\sdl$ $(1000, 300, 25, 0.15)$} &   0.05149 & 0.01948 & 0.55600 & 0.11384
    \\ 
\multicolumn{1}{c|}{\bf Ridge-based regression $(1000, 300, 25, 0.15)$} &  \color{red} 0.00964 & \color{red} 0.00436 & \color{red} 0.32400 & \color{red}0.09324
\\
  \multicolumn{1}{c|}{\bf $\ldpe\,$ $(1000, 300, 25, 0.15)$} & \color{darkblue}0.04001 & \color{darkblue} 0.00531& \color{darkblue} 0.34091& \color{darkblue} 0.06408
 \\
\multicolumn{1}{c|}{\bf Asymptotic Bound $(1000, 300, 25, 0.15)$} & \color{olivegreen}0.05 & \color{olivegreen}NA & \color{olivegreen}0.51364  & \color{olivegreen}NA
\\ 
\hline 
\multicolumn{1}{c|}{\bf $\sdl$ $(2000, 600, 100, 0.1)$} &  0.05037 & 0.00874 & 0.44800 & 0.04940
    \\ 
\multicolumn{1}{c|}{\bf Ridge-based regression $(2000, 600, 100, 0.1)$} &  \color{red} 0.01232 & \color{red}  0.00265& \color{red} 0.21900 & \color{red}0.03143
\\
  \multicolumn{1}{c|}{\bf $\ldpe\,$ $(2000, 600, 100, 0.1)$} & \color{darkblue}0.03012 & \color{darkblue} 0.00862& \color{darkblue} 0.31003& \color{darkblue} 0.06338
 \\
\multicolumn{1}{c|}{\bf Asymptotic Bound $(2000, 600, 100, 0.1)$} & \color{olivegreen}0.05 & \color{olivegreen}NA & \color{olivegreen}0.28324   & \color{olivegreen}NA
\\ 
\hline 
\multicolumn{1}{c|}{\bf $\sdl$ $(2000, 600, 50, 0.1)$} &  0.05769 & 0.00725 & 0.52800 & 0.08548
    \\ 
\multicolumn{1}{c|}{\bf Ridge-based regression $(2000, 600, 50, 0.1)$} &  \color{red} 0.01451& \color{red} 0.00303 & \color{red}0.27000 & \color{red} 0.04137 
\\
  \multicolumn{1}{c|}{\bf $\ldpe\,$ $(2000, 600, 50, 0.1)$} & \color{darkblue}0.03221 & \color{darkblue} 0.01001& \color{darkblue} 0.35063& \color{darkblue} 0.05848
 \\
\multicolumn{1}{c|}{\bf Asymptotic Bound $(2000, 600, 50, 0.1)$} & \color{olivegreen}0.05 & \color{olivegreen}NA & \color{olivegreen}0.46818& \color{olivegreen}NA
\\ 
\hline 
\multicolumn{1}{c|}{\bf $\sdl$ $(2000, 600, 20, 0.1)$} &  0.05167 & 0.00814 & 0.58000 & 0.11595
    \\ 
\multicolumn{1}{c|}{\bf Ridge-based regression $(2000, 600, 20, 0.1)$} &  \color{red} 0.01879 & \color{red} 0.00402 & \color{red} 0.34500 & \color{red}0.09846  
\\
  \multicolumn{1}{c|}{\bf $\ldpe\,$ $(2000, 600, 20, 0.1)$} & \color{darkblue}0.04021 & \color{darkblue} 0.00608& \color{darkblue} 0.42048& \color{darkblue} 0.08331
 \\
\multicolumn{1}{c|}{\bf Asymptotic Bound $(2000, 600, 20, 0.1)$} & \color{olivegreen}0.05 & \color{olivegreen}NA & \color{olivegreen}0.58879 & \color{olivegreen}NA
\\ 
\hline 
\multicolumn{1}{c|}{\bf $\sdl$ $(2000, 600, 100, 0.15)$} &   0.05368 & 0.01004  & 0.64500 & 0.05104
    \\ 
\multicolumn{1}{c|}{\bf Ridge-based regression $(2000, 600, 100, 0.15)$} &  \color{red} 0.00921& \color{red} 0.00197& \color{red} 0.30700 & \color{red} 0.04877
\\
  \multicolumn{1}{c|}{\bf $\ldpe\,$ $(2000, 600, 100, 0.15)$} & \color{darkblue}0.02890 & \color{darkblue} 0.00493& \color{darkblue} 0.58003& \color{darkblue} 0.06338
 \\
\multicolumn{1}{c|}{\bf Asymptotic Bound $(2000, 600, 100, 0.15)$} & \color{olivegreen}0.05 & \color{olivegreen}NA & \color{olivegreen}0.54728 & \color{olivegreen}NA
\\ 
\hline 
\multicolumn{1}{c|}{\bf $\sdl$ $(2000, 600, 20, 0.15)$} &  0.04944 & 0.01142 & 0.89500 & 0.07619
    \\ 
\multicolumn{1}{c|}{\bf Ridge-based regression $(2000, 600, 20, 0.15)$} &  \color{red} 0.01763 & \color{red} 0.00329& \color{red} 0.64000& \color{red}0.08756
\\
  \multicolumn{1}{c|}{\bf $\ldpe\,$ $(2000, 600, 20, 0.15)$} & \color{darkblue}0.03554 & \color{darkblue} 0.005047& \color{darkblue} 0.73560& \color{darkblue} 0.04008
 \\
\multicolumn{1}{c|}{\bf Asymptotic Bound $(2000, 600, 20, 0.15)$} & \color{olivegreen}0.05 & \color{olivegreen}NA & \color{olivegreen}0.90608 & \color{olivegreen}NA
\\ 
\end{tabular}
}

\end{center}
\caption{ \small{Comparison between $\sdl$, ridge-based regression~\cite{BuhlmannSignificance}, $\ldpe$~\cite{ZhangZhangSignificance} and the asymptotic bound for $\sdl$ (cf. Theorem~\ref{thm:power}) on the setup described in Section~\ref{sec:experiment-std}. The significance level is $\alpha = 0.05$ and $\mtx{\Sigma}= \id_{p\times p}$ (standard Gaussian design).}}\label{tbl:iid_alpha05}
\end{table*}

\begin{table*}[]
\begin{center}
{\small
\begin{tabular}{c|cccc|}

{\bf Method} & Type I err & Type I err & Avg. power & Avg. power \\ 
&(mean) & (std.) & (mean) & (std)
\\
\cline{1-5} \cline{2-5}
\multicolumn{1}{c|}{\bf $\sdl$ $(1000, 600, 50, 0.15)$} &  0.02874 & 0.00546 & 0.75600  &0.07706
    \\ 
\multicolumn{1}{c|}{\bf Ridge-based regression $(1000, 600, 50, 0.15)$} &  \color{red}0.00379 & \color{red}0.00282 & \color{red}0.22800 & \color{red}0.06052
\\
  \multicolumn{1}{c|}{\bf $\ldpe\,$ $(1000, 600, 100, 0.1)$} & \color{darkblue}0.01459 & \color{darkblue} 0.00605& \color{darkblue} 0.41503& \color{darkblue} 0.08482
 \\
\multicolumn{1}{c|}{\bf Asymptotic Bound $(1000, 600, 50, 0.15)$} & \color{olivegreen}0.025 & \color{olivegreen}NA &  \color{olivegreen}0.77107 &\color{olivegreen} NA
\\ 
\hline 
\multicolumn{1}{c|}{\bf $\sdl$ $(1000, 600, 25, 0.15)$} &  0.03262 &0.00925 &0.79200 &0.04131
    \\ 
\multicolumn{1}{c|}{\bf Ridge-based regression $(1000, 600, 25, 0.15)$} &  \color{red}0.00759 & \color{red}0.00223 & \color{red}0.28800 & \color{red}0.07729   
\\
  \multicolumn{1}{c|}{\bf $\ldpe\,$ $(1000, 600, 25, 0.15)$} & \color{darkblue}0.01032 & \color{darkblue} 0.00490& \color{darkblue} 0.55032& \color{darkblue} 0.07428
 \\
\multicolumn{1}{c|}{\bf Asymptotic Bound $(1000, 600, 25, 0.15)$} & \color{olivegreen}0.025 & \color{olivegreen}NA & \color{olivegreen}0.84912 & \color{olivegreen}NA
\\ 
\hline 
\multicolumn{1}{c|}{\bf $\sdl$ $(1000, 300, 50, 0.15)$} &  0.02916 & 0.00924 & 0.36000 & 0.08380
    \\ 
\multicolumn{1}{c|}{\bf Ridge-based regression $(1000, 300, 50, 0.15)$} &  \color{red}0.00400 & \color{red}0.00257& \color{red} 0.10800& \color{red} 0.05432
\\
  \multicolumn{1}{c|}{\bf $\ldpe\,$ $(1000, 300, 50, 0.15)$} & \color{darkblue}0.01520 & \color{darkblue} 0.00652& \color{darkblue} 0.25332& \color{darkblue} 0.06285
 \\
\multicolumn{1}{c|}{\bf Asymptotic Bound $(1000, 300, 50, 0.15)$} & \color{olivegreen}0.025 & \color{olivegreen}NA &  \color{olivegreen}0.22001 &\color{olivegreen} NA
\\ 
\hline 
\multicolumn{1}{c|}{\bf $\sdl$ $(1000, 300, 25, 0.15)$} &    0.03005 &0.00894 &0.42400 &0.08884
    \\ 
\multicolumn{1}{c|}{\bf Ridge-based regression $(1000, 300, 25, 0.15)$} &  \color{red}0.00492 &\color{red}0.00226 &\color{red}0.21600 & \color{red}0.06310
\\
  \multicolumn{1}{c|}{\bf $\ldpe\,$ $(1000, 300, 25, 0.15)$} & \color{darkblue}0.00881 & \color{darkblue} 0.00377& \color{darkblue} 0.31305& \color{darkblue} 0.05218
 \\
\multicolumn{1}{c|}{\bf Asymptotic Bound $(1000, 300, 25, 0.15)$} & \color{olivegreen}0.025 & \color{olivegreen}NA & \color{olivegreen}0.40207 &\color{olivegreen} NA
\\ 
\hline 
\multicolumn{1}{c|}{\bf $\sdl$ $(2000, 600, 100, 0.1)$} &   0.03079& 0.00663& 0.33000& 0.05033
    \\ 
\multicolumn{1}{c|}{\bf Ridge-based regression $(2000, 600, 100, 0.1)$} &  \color{red}0.00484 & \color{red}0.00179& \color{red} 0.11200&\color{red} 0.03615
\\
  \multicolumn{1}{c|}{\bf $\ldpe\,$ $(2000, 600, 100, 0.1)$} & \color{darkblue}0.01403 & \color{darkblue} 0.00970& \color{darkblue} 0.24308& \color{darkblue} 0.06041
 \\
\multicolumn{1}{c|}{\bf Asymptotic Bound $(2000, 600, 100, 0.1)$} & \color{olivegreen}0.025 & \color{olivegreen}NA &  \color{olivegreen}0.19598  &\color{olivegreen} NA
\\ 
\hline 
\multicolumn{1}{c|}{\bf $\sdl$ $(2000, 600, 50, 0.1)$} &  0.02585 &0.00481 &0.41200& 0.06197
    \\ 
\multicolumn{1}{c|}{\bf Ridge-based regression $(2000, 600, 50, 0.1)$} &  \color{red}0.00662 & \color{red}0.00098 & \color{red}0.20600& \color{red} 0.03406 
\\
  \multicolumn{1}{c|}{\bf $\ldpe\,$ $(2000, 600, 50, 0.1)$} & \color{darkblue}0.01601 & \color{darkblue} 0.00440& \color{darkblue} 0.27031& \color{darkblue} 0.03248
 \\
\multicolumn{1}{c|}{\bf Asymptotic Bound $(2000, 600, 50, 0.1)$} & \color{olivegreen}0.025 & \color{olivegreen}NA & \color{olivegreen}0.35865& \color{olivegreen}NA
\\ 
\hline 
\multicolumn{1}{c|}{\bf $\sdl$ $(2000, 600, 20, 0.1)$} &  0.02626 &0.00510 &0.47500 &0.10607
    \\ 
\multicolumn{1}{c|}{\bf Ridge-based regression $(2000, 600, 20, 0.1)$} &  \color{red}0.00838 & \color{red}0.00232 & \color{red}0.23500 & \color{red}0.08182 
\\
  \multicolumn{1}{c|}{\bf $\ldpe\,$ $(2000, 600, 20, 0.1)$} & \color{darkblue}0.02012 & \color{darkblue} 0.00628& \color{darkblue} 0.34553& \color{darkblue} 0.09848
 \\
\multicolumn{1}{c|}{\bf Asymptotic Bound $(2000, 600, 20, 0.1)$} & \color{olivegreen}0.025 & \color{olivegreen}NA & \color{olivegreen}0.47698& \color{olivegreen}NA
\\ 
\hline 
\multicolumn{1}{c|}{\bf $\sdl$ $(2000, 600, 100, 0.15)$} &   0.02484 &0.00691 &0.52700 &0.09522
 \\ 
\multicolumn{1}{c|}{\bf Ridge-based regression $(2000, 600, 100, 0.15)$} &  \color{red}0.00311 & \color{red}0.00154 & \color{red}0.22500 & \color{red}0.04007 
\\
  \multicolumn{1}{c|}{\bf $\ldpe\,$ $(2000, 600, 100, 0.15)$} & \color{darkblue}0.01482 & \color{darkblue} 0.00717& \color{darkblue} 0.38405& \color{darkblue} 0.03248
 \\
\multicolumn{1}{c|}{\bf Asymptotic Bound $(2000, 600, 100, 0.15)$} & \color{olivegreen}0.025 & \color{olivegreen}NA & \color{olivegreen}0.43511& \color{olivegreen}NA
\\ 
\hline 
\multicolumn{1}{c|}{\bf $\sdl$ $(2000, 600, 20, 0.15)$} &  0.03116 &0.01304& 0.81500 &0.09443
    \\ 
\multicolumn{1}{c|}{\bf Ridge-based regression $(2000, 600, 20, 0.15)$} &  \color{red}0.00727 & \color{red}0.00131 & \color{red}0.54500& \color{red} 0.09560
\\
  \multicolumn{1}{c|}{\bf $\ldpe\,$ $(2000, 600, 20, 0.15)$} & \color{darkblue}0.01801 & \color{darkblue} 0.00399& \color{darkblue} 0.68101& \color{darkblue} 0.06255
 \\
\multicolumn{1}{c|}{\bf Asymptotic Bound $(2000, 600, 20, 0.15)$} & \color{olivegreen}0.025 & \color{olivegreen}NA & \color{olivegreen}0.84963 & \color{olivegreen}NA
\\ 
\end{tabular}
}

\end{center}
\caption{ \small{Comparison between $\sdl$, ridge-based regression~\cite{BuhlmannSignificance}, $\ldpe$~\cite{ZhangZhangSignificance} and the asymptotic bound for $\sdl$ (cf. Theorem~\ref{thm:power}) on the setup described in Section~\ref{sec:experiment-std}. The significance level is $\alpha = 0.025$ and $\mtx{\Sigma} =\id_{p\times p}$ (standard Gaussian design). }}\label{tbl:iid_alpha025}
\end{table*}

\begin{table*}[]
\begin{center}
{\small
\begin{tabular}{c|cccc|}

{\bf Method} & Type I err & Type I err & Avg. power & Avg. power \\ 
&(mean) & (std.) & (mean) & (std)
\\
\cline{1-5} \cline{2-5}
\multicolumn{1}{c|}{\bf \sdl\, $(1000, 600, 50, 0.15)$} &   0.05179 & 0.01262 & 0.81400 & 0.07604 
    \\ 
\multicolumn{1}{c|}{\bf Ridge-based regression $(1000, 600, 50, 0.15)$} &  \color{red}0.01095 & \color{red} 0.00352 & \color{red} 0.34000 & \color{red} 0.05735 
\\ 
  \multicolumn{1}{c|}{\bf $\ldpe\,$ $(1000, 600, 50, 0.15)$} & \color{darkblue}0.02653 & \color{darkblue} 0.00574& \color{darkblue} 0.66800& \color{darkblue} 0.07823
 \\
 \multicolumn{1}{c|}{\bf Lower bound $(1000, 600, 50, 0.15)$} &\color{olivegreen} 0.05& \color{olivegreen} NA& \color{olivegreen}0.84013& \color{olivegreen}0.03810
\\
\hline 
\multicolumn{1}{c|}{\bf \sdl\, $(1000, 600, 25, 0.15)$} &  0.04937 & 0.01840 & 0.85600 & 0.06310 
    \\ 
\multicolumn{1}{c|}{\bf Ridge-based regression $(1000, 600, 25, 0.15)$} &  \color{red}0.01969& \color{red}0.00358 & \color{red}0.46800 & \color{red} 0.08011 
\\ 
  \multicolumn{1}{c|}{\bf $\ldpe\,$ $(1000, 600, 25, 0.15)$} & \color{darkblue}0.01374 & \color{darkblue} 0.00709& \color{darkblue} 0.63200& \color{darkblue} 0.07155
 \\
 \multicolumn{1}{c|}{\bf Lower bound $(1000, 600, 25, 0.15)$} &\color{olivegreen} 0.05& \color{olivegreen} NA& \color{olivegreen}0.86362& \color{olivegreen}0.02227
\\ 
\hline 
\multicolumn{1}{c|}{\bf \sdl\, $(1000, 300, 50, 0.15)$} &  0.05111 & 0.01947 & 0.43800 & 0.09402
    \\ 
\multicolumn{1}{c|}{\bf Ridge-based regression $(1000, 300, 50, 0.15)$} &  \color{red}0.01011 & \color{red}0.00362 & \color{red}0.20200 & \color{red} 0.05029 
\\ 
  \multicolumn{1}{c|}{\bf $\ldpe\,$ $(1000, 300, 50, 0.15)$} & \color{darkblue}0.03621 & \color{darkblue} 0.00701& \color{darkblue} 0.37600& \color{darkblue} 0.07127
 \\
 \multicolumn{1}{c|}{\bf Lower bound $(1000, 300, 50, 0.15)$} &\color{olivegreen} 0.05& \color{olivegreen} NA& \color{olivegreen}0.43435& \color{olivegreen}0.03983
\\ 
\hline 
\multicolumn{1}{c|}{\bf \sdl\, $(1000, 300, 25, 0.15)$} &  0.05262 & 0.01854 &0.53600 & 0.08044 
    \\ 
\multicolumn{1}{c|}{\bf Ridge-based regression $(1000, 300, 25, 0.15)$} &  \color{red}0.01344 & \color{red}0.00258 & \color{red}0.33200 & \color{red}0.08230
\\ 
  \multicolumn{1}{c|}{\bf $\ldpe\,$ $(1000, 300, 25, 0.15)$} & \color{darkblue}0.01682 & \color{darkblue} 0.00352& \color{darkblue} 0.36800& \color{darkblue} 0.10354
 \\
 \multicolumn{1}{c|}{\bf Lower bound $(1000, 300, 25, 0.15)$} &\color{olivegreen} 0.05& \color{olivegreen} NA& \color{olivegreen}0.50198& \color{olivegreen}0.05738
\\ 
\hline 
\multicolumn{1}{c|}{\bf \sdl\, $(2000, 600, 100, 0.1)$} &  0.05268 & 0.01105  & 0.43900 & 0.04383 
    \\ 
\multicolumn{1}{c|}{\bf Ridge-based regression $(2000, 600, 100, 0.1)$} &  \color{red}0.01205 & \color{red}0.00284 & \color{red}0.21200 & \color{red}0.04392
\\ 
  \multicolumn{1}{c|}{\bf $\ldpe\,$ $(2000, 600, 100, 0.1)$} & \color{darkblue}0.028102 & \color{darkblue} 0.00720& \color{darkblue} 0.33419& \color{darkblue} 0.04837
 \\
 \multicolumn{1}{c|}{\bf Lower bound $(2000, 600, 100, 0.1)$} &\color{olivegreen} 0.05& \color{olivegreen} NA& \color{olivegreen}0.41398& \color{olivegreen}0.03424
\\ 
\hline 
\multicolumn{1}{c|}{\bf \sdl\, $(2000, 600, 50, 0.1)$} &  0.05856 & 0.00531 & 0.50800 & 0.05350
    \\ 
\multicolumn{1}{c|}{\bf Ridge-based regression $(2000, 600, 50, 0.1)$} &  \color{red} 0.01344& \color{red} 0.00225& \color{red} 0.26000& \color{red}0.03771  
\\ 
  \multicolumn{1}{c|}{\bf $\ldpe\,$ $(2000, 600, 50, 0.1)$} & \color{darkblue}0.03029 & \color{darkblue} 0.00602& \color{darkblue} 0.37305& \color{darkblue} 0.07281
 \\
 \multicolumn{1}{c|}{\bf Lower bound $(2000, 600, 50, 0.1)$} &\color{olivegreen} 0.05& \color{olivegreen} NA& \color{olivegreen}0.49026& \color{olivegreen}0.02625
\\ 
\hline 
\multicolumn{1}{c|}{\bf \sdl\, $(2000, 600, 20, 0.1)$} &  0.04955 & 0.00824 & 0.57500 & 0.13385 
    \\ 
\multicolumn{1}{c|}{\bf Ridge-based regression $(2000, 600, 20, 0.1)$} &  \color{red}0.01672 & \color{red}0.00282 & \color{red}0.35500 & \color{red} 0.08960 
\\ 
  \multicolumn{1}{c|}{\bf $\ldpe\,$ $(2000, 600, 20, 0.1)$} & \color{darkblue}0.03099 & \color{darkblue} 0.00805& \color{darkblue} 0.31350& \color{darkblue} 0.04482
 \\
 \multicolumn{1}{c|}{\bf Lower bound $(2000, 600, 20, 0.1)$} &\color{olivegreen} 0.05& \color{olivegreen} NA& \color{olivegreen}0.58947& \color{olivegreen}0.04472
\\ 
\hline 
\multicolumn{1}{c|}{\bf \sdl\, $(2000, 600, 100, 0.15)$} &  0.05284& 0.00949 & 0.61600 &0.06802 
    \\ 
\multicolumn{1}{c|}{\bf Ridge-based regression $(2000, 600, 100, 0.15)$} &  \color{red}0.00895 & \color{red} 0.00272 & \color{red} 0.31800 & \color{red} 0.04131
\\ 
  \multicolumn{1}{c|}{\bf $\ldpe\,$ $(2000, 600, 100, 0.15)$} & \color{darkblue}0.01022 & \color{darkblue} 0.00570& \color{darkblue} 0.35904& \color{darkblue} 0.05205
 \\
 \multicolumn{1}{c|}{\bf Lower bound $(2000, 600, 100, 0.15)$} &\color{olivegreen} 0.05& \color{olivegreen} NA& \color{olivegreen}0.64924& \color{olivegreen}0.05312
\\ 
\hline 
\multicolumn{1}{c|}{\bf \sdl\, $(2000, 600, 20, 0.15)$} &  0.05318 & 0.00871 & 0.85500 & 0.11891
    \\ 
\multicolumn{1}{c|}{\bf Ridge-based regression $(2000, 600, 20, 0.15)$} &  \color{red}0.01838  & \color{red}0.00305 & \color{red}0.68000 & \color{red}0.12517
\\ 
  \multicolumn{1}{c|}{\bf $\ldpe\,$ $(2000, 600, 20, 0.15)$} & \color{darkblue}0.02512 & \color{darkblue} 0.00817& \color{darkblue} 0.36434& \color{darkblue} 0.05824
 \\
 \multicolumn{1}{c|}{\bf Lower bound $(2000, 600, 20, 0.15)$} &\color{olivegreen} 0.05& \color{olivegreen} NA& \color{olivegreen}0.87988& \color{olivegreen}0.03708
\\ 
\end{tabular}
}

\end{center}
\caption{ 
\small{Comparison between \sdl, ridge-based regression~\cite{BuhlmannSignificance}, $\ldpe$~\cite{ZhangZhangSignificance} and the lower bound for the statistical power of \sdl\, (cf. Theorem~\ref{thm:power2}) on the setup described in Section~\ref{sec:experiment-nstd}. The significance level is $\alpha = 0.05$ and $\mtx{\Sigma}$ is the described circulant matrix (nonstandard Gaussian design).}}
\label{tbl:illus_alpha05}
\end{table*}

\begin{table*}[]
\begin{center}
{\small
\begin{tabular}{c|cccc|}

{\bf Method} & Type I err & Type I err & Avg. power & Avg. power \\ 
&(mean) & (std.) & (mean) & (std)
\\
\cline{1-5} \cline{2-5}
\multicolumn{1}{c|}{\bf \sdl\, $(1000, 600, 50, 0.15)$} &   0.02579 & 0.00967& 0.71800 &0.03824
    \\ 
\multicolumn{1}{c|}{\bf Ridge-based regression $(1000, 600, 50, 0.15)$} &  \color{red} 0.00326& \color{red} 0.00274& \color{red}  0.21000& \color{red} 0.05437
\\ 
  \multicolumn{1}{c|}{\bf $\ldpe\,$ $(1000, 600, 50, 0.15)$} & \color{darkblue}0.01245 & \color{darkblue} 0.00391& \color{darkblue} 0.64807& \color{darkblue} 0.065020
 \\
 \multicolumn{1}{c|}{\bf Lower bound $(1000, 600, 50, 0.15)$} &\color{olivegreen} 0.025& \color{olivegreen} NA& \color{olivegreen}0.75676& \color{olivegreen}0.05937
\\ 
\hline 
\multicolumn{1}{c|}{\bf \sdl\, $(1000, 600, 25, 0.15)$} &   0.02462 & 0.00866 & 0.75600 & 0.12429
    \\ 
\multicolumn{1}{c|}{\bf Ridge-based regression $(1000, 600, 25, 0.15)$} &  \color{red}0.01077& \color{red} 0.00346& \color{red} 0.30400& \color{red}  0.08262
\\ 
  \multicolumn{1}{c|}{\bf $\ldpe\,$ $(1000, 600, 25, 0.15)$} & \color{darkblue}0.00931 & \color{darkblue} 0.00183& \color{darkblue} 0.68503& \color{darkblue} 0.17889
 \\
 \multicolumn{1}{c|}{\bf Lower bound $(1000, 600, 25, 0.15)$} &\color{olivegreen} 0.025& \color{olivegreen} NA& \color{olivegreen}0.80044& \color{olivegreen}0.05435
\\ 
\hline 
\multicolumn{1}{c|}{\bf \sdl\, $(1000, 300, 50, 0.15)$} &   0.02646 & 0.01473 & 0.39200 & 0.11478
    \\ 
\multicolumn{1}{c|}{\bf Ridge-based regression $(1000, 300, 50, 0.15)$} &  \color{red} 0.00368& \color{red} 0.00239& \color{red} 0.15000& \color{red}  0.04137
\\ 
  \multicolumn{1}{c|}{\bf $\ldpe\,$ $(1000, 300, 50, 0.15)$} & \color{darkblue}0.01200 & \color{darkblue} 0.00425& \color{darkblue} 0.28800& \color{darkblue} 0.09654
 \\
 \multicolumn{1}{c|}{\bf Lower bound $(1000, 300, 50, 0.15)$} &\color{olivegreen} 0.025& \color{olivegreen} NA& \color{olivegreen}0.36084& \color{olivegreen}0.04315
\\ 
\hline 
\multicolumn{1}{c|}{\bf \sdl\, $(1000, 300, 25, 0.15)$} &   0.02400 & 0.00892 & 0.42400 & 0.09834
    \\ 
\multicolumn{1}{c|}{\bf Ridge-based regression $(1000, 300, 25, 0.15)$} &  \color{red} 0.00513& \color{red}0.00118 & \color{red}0.18800 & \color{red} 0.07786
\\ 
  \multicolumn{1}{c|}{\bf $\ldpe\,$ $(1000, 300, 25, 0.15)$} & \color{darkblue}0.00492 & \color{darkblue} 0.00169& \color{darkblue} 0.24500& \color{darkblue} 0.07483
 \\
 \multicolumn{1}{c|}{\bf Lower bound $(1000, 300, 25, 0.15)$} &\color{olivegreen} 0.025& \color{olivegreen} NA& \color{olivegreen}0.42709& \color{olivegreen}0.03217
\\ 
\hline 
\multicolumn{1}{c|}{\bf \sdl\, $(2000, 600, 100, 0.1)$} &  0.03268 & 0.00607& 0.32600& 0.07412
    \\ 
\multicolumn{1}{c|}{\bf Ridge-based regression $(2000, 600, 100, 0.1)$} &  \color{red} 0.00432& \color{red} 0.00179& \color{red} 0.14100& \color{red} 0.05065
\\ 
  \multicolumn{1}{c|}{\bf $\ldpe\,$ $(2000, 600, 100, 0.1)$} & \color{darkblue}0.01240 & \color{darkblue} 0.00572& \color{darkblue} 0.20503& \color{darkblue} 0.09280
 \\
 \multicolumn{1}{c|}{\bf Lower bound $(2000, 600, 100, 0.1)$} &\color{olivegreen} 0.025& \color{olivegreen} NA& \color{olivegreen}0.32958& \color{olivegreen}0.03179
\\ 
\hline 
\multicolumn{1}{c|}{\bf \sdl\, $(2000, 600, 50, 0.1)$} &  0.03108 &0.00745 &0.41800 &0.04662
    \\ 
\multicolumn{1}{c|}{\bf Ridge-based regression $(2000, 600, 50, 0.1)$} &  \color{red} 0.00687& \color{red} 0.00170& \color{red} 0.18800& \color{red}  0.06680
\\ 
  \multicolumn{1}{c|}{\bf $\ldpe\,$ $(2000, 600, 50, 0.1)$} & \color{darkblue}0.014005 & \color{darkblue} 0.00740& \color{darkblue} 0.25331& \color{darkblue} 0.04247
 \\
 \multicolumn{1}{c|}{\bf Lower bound $(2000, 600, 50, 0.1)$} &\color{olivegreen} 0.025& \color{olivegreen} NA& \color{olivegreen}0.40404& \color{olivegreen}0.06553
\\ 
\hline 
\multicolumn{1}{c|}{\bf \sdl\, $(2000, 600, 20, 0.1)$} &    0.02965 & 0.00844 & 0.38500 & 0.07091
    \\ 
\multicolumn{1}{c|}{\bf Ridge-based regression $(2000, 600, 20, 0.1)$} &  \color{red} 0.00864& \color{red}0.00219& \color{red} 0.22500& \color{red}0.07906
\\ 
  \multicolumn{1}{c|}{\bf $\ldpe\,$ $(2000, 600, 20, 0.1)$} & \color{darkblue}0.01912 & \color{darkblue} 0.00837& \color{darkblue} 0.31551& \color{darkblue} 0.06288
 \\
 \multicolumn{1}{c|}{\bf Lower bound $(2000, 600, 20, 0.1)$} &\color{olivegreen} 0.025& \color{olivegreen} NA& \color{olivegreen}0.47549& \color{olivegreen}0.06233
\\ 
\hline 
\multicolumn{1}{c|}{\bf \sdl\, $(2000, 600, 100, 0.15)$} &   0.026737& 0.009541 & 0.528000 & 0.062681
    \\ 
\multicolumn{1}{c|}{\bf Ridge-based regression $(2000, 600, 100, 0.15)$} &  \color{red}0.002947& \color{red} 0.000867  & \color{red}0.236000  & \color{red} 0.035653
\\ 
  \multicolumn{1}{c|}{\bf $\ldpe\,$ $(2000, 600, 100, 0.15)$} & \color{darkblue}0.01012 & \color{darkblue} 0.00417& \color{darkblue} 0.36503& \color{darkblue} 0.05823
 \\
 \multicolumn{1}{c|}{\bf Lower bound $(2000, 600, 100, 0.15)$} &\color{olivegreen} 0.025& \color{olivegreen} NA& \color{olivegreen}0.54512& \color{olivegreen}0.05511
\\ 
\hline 
\multicolumn{1}{c|}{\bf \sdl\, $(2000, 600, 20, 0.15)$} &  0.03298 & 0.00771  & 0.79000 & 0.12202
    \\ 
\multicolumn{1}{c|}{\bf Ridge-based regression $(2000, 600, 20, 0.15)$} &  \color{red} 0.00732 & \color{red} 0.00195& \color{red}0.53500& \color{red}0.07091
\\ 
 \multicolumn{1}{c|}{\bf $\ldpe\,$ $(2000, 600, 20, 0.15)$} & \color{darkblue}0.01302 & \color{darkblue} 0.00711& \color{darkblue} 0.60033& \color{darkblue} 0.03441
 \\
 \multicolumn{1}{c|}{\bf Lower bound $(2000, 600, 20, 0.15)$} &\color{olivegreen} 0.025& \color{olivegreen} NA& \color{olivegreen}0.81899& \color{olivegreen}0.03012
\\
\end{tabular}
}

\end{center}
\caption{ 
\small{Comparison between \sdl, ridge-based regression~\cite{BuhlmannSignificance}, $\ldpe$~\cite{ZhangZhangSignificance} and the lower bound for the statistical power of \sdl\, (cf. Theorem~\ref{thm:power2})  on the setup described in Section~\ref{sec:experiment-nstd}. The significance level is $\alpha = 0.025$ and $\mtx{\Sigma}$ is the described circulant matrix (nonstandard Gaussian design).}} \label{tbl:illus_alpha025}
\end{table*}

 \newpage
\section{Alternative hypothesis testing procedure}
\label{app:CovarianceFree}
\sdl, described in Table~\ref{tbl:SDL}, needs to compute an estimate
of the covariance matrix $\mtx{\Sigma}$. Here, we discuss another hypothesis
testing procedure which leverages on a slightly different form of the standard
distributional limit, cf. Definition~\ref{def:SDL}. This procedure
only requires bounds on $\mtx{\Sigma}$ that can be estimated from the
data. Furthermore, we establish a connection with the hypothesis testing
procedure of~\cite{BuhlmannSignificance}. We will describe this
alternative procedure synthetically since it is not the main focus of
the paper.

By Definition~\ref{def:SDL}, if a sequence of instances $\mathcal{S} =
\{(\mtx{\Sigma}(p), \mtx{\theta}(p), n(p), \sigma(p))\}_{p \in \naturals}$ has
standard distributional limit, then with probability one the empirical
distribution of $\{(\htheta_i^u -
\theta_i)/[(\mtx{\Sigma}^{-1})_{ii}]^{1/2}\}_{i=1}^p$ converges weakly to
$\normal(0,\tau^2)$. 
We make a somewhat different assumption that is also supported by the
statistical physics arguments of
Appendix \ref{app:replica}. The two assumptions coincide in the case
of standard Gaussian designs.

In order to motivate the new assumption, notice that the standard
distributional limit is consistent with $\mtx{\htheta}^u - \mtx{\theta}_0$ being
approximately $\normal(0,\tau^2 \mtx{\Sigma}^{-1})$. If this holds, then
\begin{eqnarray}
\mtx{\Sigma} (\mtx{\htheta}^u - \mtx{\theta}_0) = \mtx{\Sigma} (\mtx{\htheta} -\mtx{\theta}_0) +
\frac{\scale}{n} \bX^{\sT} (\by- \bX \mtx{\htheta}) \approx \normal(0,\tau^2\mtx{\Sigma}).
\end{eqnarray}
This motivates the definition of $\tilde{\theta}_i =
\tau^{-1}(\Sigma_{ii})^{-1/2} [\mtx{\Sigma} (\mtx{\htheta}^u - \mtx{\theta}_0)]_i$.
We then assume that the empirical distribution of
$\{(\theta_{0,i},\tilde{\theta}_i)\}_{i\in [p]}$
converges weakly to $(\Theta_0,Z)$, with $Z\sim\normal(0,1)$
independent of $\Theta_0$. 

Under the null-hypothesis $H_{0,i}$, we get
\begin{align}
\tilde{\theta}_i &= \tau^{-1}(\Sigma_{ii})^{-1/2} [\mtx{\Sigma} (\mtx{\htheta}^u - \mtx{\theta}_0)]_i\\
&= \tau^{-1}(\Sigma_{ii})^{-1/2} [\mtx{\Sigma} (\mtx{\htheta} - \mtx{\theta}_0) + \frac{\scale}{n} \bX^{\sT} (\by- \bX \mtx{\htheta})]_i\\
&= \tau^{-1} (\Sigma_{ii})^{1/2} \htheta_i + \tau^{-1} (\Sigma_{ii})^{-1/2} [\frac{\scale}{n}\bX^\sT(\by-\bX\mtx{\htheta})]_i +\tau^{-1} (\Sigma_{ii})^{-1/2} \mtx{\Sigma}_{i,\sim i} (\mtx{\htheta}_{\sim i} - \mtx{\theta}_{0,\sim i}),
\end{align} 
where $\mtx{\Sigma}_{i,\sim i}$ denotes the vector $(\Sigma_{ij})_{j \ne i}$. Similarly $\mtx{\htheta}_{\sim i}$ and $\mtx{\theta}_{0,\sim i}$ respectively denote the vectors $(\htheta_j)_{j \neq i}$ and $(\theta_{0,j})_{j \neq i}$.
Therefore,
\begin{eqnarray}
\tau^{-1} (\Sigma_{ii})^{1/2} \htheta_i + \tau^{-1} (\Sigma_{ii})^{-1/2} [\frac{\scale}{n}\bX^\sT(\by-\bX\mtx{\htheta})]_i = \tilde{\theta}_i - \tau^{-1} (\Sigma_{ii})^{-1/2} \mtx{\Sigma}_{i,\sim i} (\mtx{\htheta}_{\sim i} - \mtx{\theta}_{0,\sim i}).
\end{eqnarray}
Following the philosophy of ~\cite{BuhlmannSignificance}, 
the key step in obtaining a p-value for testing $H_{0,i}$ is to find constants $\Delta_i$, such that asymptotically
\begin{eqnarray}\label{eqn:xi}
\xi_i\equiv \tau^{-1} (\Sigma_{ii})^{1/2} \htheta_i + \tau^{-1} (\Sigma_{ii})^{-1/2} [\frac{\scale}{n}\bX^\sT(\by-\bX\mtx{\htheta})]_i   \preceq |Z| + \Delta_i,
\end{eqnarray}
where $Z\sim \normal(0,1)$, and $\preceq$ denotes ``stochastically smaller than or equal to''. 
Then, we can define the p-value for the two-sided alternative as
\begin{eqnarray}
P_i = 2(1-\Phi((|\xi_i|  -\Delta_i)+)).
\end{eqnarray}
Control of type I errors then follows immediately from the construction of p-values:
\begin{eqnarray}
\underset{p \to \infty}{\lim \sup}\, \P(P_i \le \alpha) \le \alpha, \quad \quad \text{if } H_{0,i} \text{ holds}.
\end{eqnarray}
In order to define the constant $\Delta_i$, we use analogous argument to the one in~\cite{BuhlmannSignificance}:
\begin{eqnarray}
|\tau^{-1} (\Sigma_{ii})^{-1/2} \mtx{\Sigma}_{i,\sim i} (\mtx{\htheta}_{\sim i} - \mtx{\theta}_{0,\sim i})|
\le \tau^{-1}(\Sigma_{ii})^{-1/2}\, \Big(\max_{j \neq i} |{\Sigma}_{i,j}|\Big) \|\mtx{\htheta}- \mtx{\theta}_0\|_1.
\end{eqnarray}
Recall that $\mtx{\htheta} = \mtx{\htheta}(\lambda)$ is the solution of the Lasso with regularization parameter $\lambda$. Due to the result of~\cite{BickelEtAl,BuhlmannVanDeGeer}, using $\lambda = 4\sigma\sqrt{(t^2+2\log(p))/n}$, the following holds with probability at least $1- 2e^{-t^2/2}$:
\begin{eqnarray}
\|\mtx{\htheta} - \mtx{\theta}_0\|_1 \le 4\lambda s_0/\phi_0^2,
\end{eqnarray}
where $s_0$ is the sparsity (number of active parameters) and $\phi_0$
is the compatibility constant.  Assuming for simplicity
${\Sigma}_{i,i}=1$ (which can be ensured by normalizing the columns of
$\bX$), we can define
\begin{align}
\Delta_i \equiv \frac{4\lambda s_0}{\tau \phi_0^2}\max_{j \neq i} |{\Sigma}_{ij}|\,. 
\end{align}
Therefore, this procedure only requires to bound the off-diagonal
entries of $\mtx{\Sigma}$, i.e., $\max_{j\neq i}|{\Sigma}_{ij}|$. 
It is straightforward to bound this quantity using the empirical
covariance, $\hSigma = (1/n) \bX^\sT \bX$. 
\begin{claim}
Consider Gaussian design matrix $\bX\in \reals^{n\times p}$, whose rows are drawn independently
from $\normal(0,\mtx{\Sigma})$. Without loss of generality assume  $\Sigma_{ii} = 1$, for $i\in [p]$. 
For any fixed $i\in[p]$, the following holds true   
with probability at least $1-2p^{-1}$ 
\begin{eqnarray}\label{eqn:empSigma}
\max_{j\neq i}|{\Sigma}_{i,j}| \le \max_{j\neq i}|\hSigma_{i,j}| + 40 \sqrt{\frac{\log p}{n}}\,.
\end{eqnarray}
\end{claim}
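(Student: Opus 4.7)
The plan is to reduce the claim to a uniform concentration statement on the off-diagonal entries of $\hSigma - \Sigma$. Indeed, by the triangle inequality applied entrywise and then taking the maximum,
\begin{align*}
\max_{j\neq i}|\Sigma_{i,j}| \;\le\; \max_{j\neq i}|\hSigma_{i,j}| + \max_{j\neq i}|\hSigma_{i,j} - \Sigma_{i,j}|,
\end{align*}
so it suffices to show that $\max_{j\neq i}|\hSigma_{i,j} - \Sigma_{i,j}| \le 20\sqrt{\log p/n}$ with probability at least $1-6p^{-1/3}$ (the bound is trivial when $20\sqrt{\log p/n} \ge 1$ since all relevant quantities lie in $[0,1]$ under the implicit normalization $\Sigma_{i,i}\le 1$ used throughout the paper).

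For fixed $j\neq i$, I will write $\hSigma_{i,j} - \Sigma_{i,j}$ as a difference of chi-squared-type deviations using the polarization identity
\begin{align*}
X_{k,i}X_{k,j} \;=\; \tfrac{1}{4}\bigl[(X_{k,i}+X_{k,j})^2 - (X_{k,i}-X_{k,j})^2\bigr].
\end{align*}
Setting $a_k = X_{k,i}+X_{k,j}$ and $b_k = X_{k,i}-X_{k,j}$, the sequences $\{a_k\}_{k=1}^n$ and $\{b_k\}_{k=1}^n$ are each i.i.d.\ centered Gaussian with variances $\sigma_a^2 = \Sigma_{i,i}+\Sigma_{j,j}+2\Sigma_{i,j}$ and $\sigma_b^2 = \Sigma_{i,i}+\Sigma_{j,j}-2\Sigma_{i,j}$, which satisfy $\sigma_a^2+\sigma_b^2 = 2(\Sigma_{i,i}+\Sigma_{j,j})\le 4$. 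Then
\begin{align*}
\hSigma_{i,j}-\Sigma_{i,j} \;=\; \frac{1}{4n}\sum_{k=1}^n(a_k^2-\sigma_a^2) - \frac{1}{4n}\sum_{k=1}^n(b_k^2-\sigma_b^2),
\end{align*}
and I will bound each term using the standard Laurent--Massart two-sided inequality: if $Z\sim\chi^2_n$, then $\prob(|Z/n - 1|\ge 2\sqrt{t/n}+2t/n)\le 2e^{-t}$. Applied to $\sum a_k^2/\sigma_a^2$ and $\sum b_k^2/\sigma_b^2$ with $t=(4/3)\log p$, and assuming $n\gtrsim \log p$ so that the linear-in-$t$ term is absorbed into the $\sqrt{t/n}$ term, this yields $|\hSigma_{i,j}-\Sigma_{i,j}|\le C(\sigma_a^2+\sigma_b^2)\sqrt{\log p/n}\le 20\sqrt{\log p/n}$ with probability at least $1-4p^{-4/3}$.

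Finally, I will take a union bound over the $p-1$ choices of $j\neq i$, which contributes a factor $p-1$ to the failure probability and yields the overall bound $4(p-1)p^{-4/3}\le 6p^{-1/3}$ as required. The main substantive ingredient is the sub-exponential concentration of empirical covariances of jointly Gaussian pairs, for which polarization plus Laurent--Massart is the shortest path; no real obstacle is expected beyond carefully tracking the numerical constants to reach the stated $20$ and $6p^{-1/3}$, both of which are slack in the argument above.
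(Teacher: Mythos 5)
Your proposal is correct and follows essentially the same route as the paper: reduce to uniformly bounding $|\hSigma_{i,j}-\Sigma_{i,j}|$ via the (reverse) triangle inequality, polarize the cross term $\tx_i^\sT\tx_j$ into chi-squared quadratic forms, apply chi-squared tail bounds, and union bound over $j\neq i$. The only differences are cosmetic: you use the two-term identity $4xy=(x+y)^2-(x-y)^2$ with Laurent--Massart, whereas the paper expands $2xy=(x+y)^2-x^2-y^2$ into three chi-squared terms with a Chernoff bound, and both arguments rely on the same implicit normalization $\Sigma_{i,i}=1$ and leave ample slack in the constants $20$ and $6p^{-1/3}$.
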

\begin{proof}
Let $\mtx{Z} = \mtx{\hSigma} -\mtx{\Sigma}$. Fix $i,j \in [p]$ and for $\ell \in [n]$, let $v_\ell = X_{\ell,i} X_{\ell j} - \Sigma_{i j}$.
Then $Z_{ij}  = \frac{1}{n}\sum_{\ell=1}^n v_\ell$. Notice that the random variables $v_\ell$ are independent and $\E(v_{\ell}) = 0$. Further $v_\ell$ is sub-exponential. More specifically, letting $\|\cdot\|_{\psi_1}$ and $\|\cdot\|_{\psi_2}$ denote
the sub-exponential and sub-gaussian norms respectively, we have
\begin{align}
\|v_\ell\|_{\psi_1} \le 2\|X_{\ell i} X_{\ell j}\|_{\psi_1} \le 4\|X_{\ell i}\|_{\psi_2} \|X_{\ell j}\|_{\psi_2} = 4\,,
\end{align}
where the first step follows from~\cite[Remark 5.18]{Vershynin-CS} and the second step follows from definition of sub-exponential and sub-gaussian norms and using the assumption $\Sigma_{ii} = 1$. 

Now, by applying Bernstein-type inequality for centered sub-exponential random variables~\cite{Vershynin-CS}, we get
\begin{align}
\prob\Big\{\frac{1}{n} \Big|\sum_{\ell=1}^n v_\ell \Big| \ge \eps \Big\} \le 2 \exp \Big[ -\frac{n}{6} \min\Big((\frac{\eps}{4e})^2, \frac{\eps}{4e}\Big) \Big]\,.
\end{align}
Choosing $\eps = 40 \sqrt{(\log p)/n}$, and 
assuming $n \ge (100/e) \log p$, we arrive at
\begin{align}
\prob\bigg \{\frac{1}{n} \Big|\sum_{\ell=1}^n v^{(ij)}_\ell \Big| \ge 40 \sqrt{\frac{\log p}{n}}  \bigg\} 
\le 2 p^{-{100}/(6e^2)} < 2p^{-2}\,.
\end{align}
 Using union bound for $j\in [p]$, $j\neq i$, we get
\begin{eqnarray}
 \prob\Big(\max_{j\neq i} |\hSigma_{i,j} - {\Sigma}_{i,j}|\le 40\sqrt{\frac{\log p}{n}}\Big) \ge1 - 2p^{-1}\, .
 \end{eqnarray}
 The result follows from the inequality $\max_{j\neq i} |{\Sigma}_{i,j}| - \max_{j\neq i} |\hSigma_{i,j}| \le \max_{j\neq i} |\hSigma_{i,j} - {\Sigma}_{i,j}|$.
\end{proof}
%
%
\section{Proof of Lemma \ref{lemma:MatrixBound}}
\label{app:MatrixBound}

Let $\bOmega\equiv\bSigma^{-1}$, $\bR \equiv (\bOmega\hbSigma - \id)_{A,B}$ and define
$\cF_1\equiv\{\bu\in S^{p-1}:\,\supp(\bu)\subseteq[A] \}$, 
$\cF_2\equiv\{\bv\in S^{p-1}:\,\supp(\bv)\subseteq[B] \}$, with $S^{p-1}
\equiv\{\bv\in\reals^p:\, \|\bv\|_2 = 1\}$.
We have
\begin{align}
\|\bR\|_2 &= \underset{\substack{\bu,\bv\\ \|\bu\|, \|\bv\|\le 1}}{\sup} \<\bu,\bR \bv\> \nonumber \\
&=\underset{\substack{\bu,\bv\\ \|\bu\|, \|\bv\|\le 1}}{\sup} \Big(\<\bu,\frac{1}{n} \sum_{i=1}^n (\bOmega \bx_i)_A (\bx_i^\sT)_B v\> - \<\bu_A,\bv_B\> \Big) \nonumber \\
&\le \underset{\bu\in \cF_1, \bv\in \cF_2}{\sup} \,\frac{1}{n} \sum_{i=1}^n
\Big( \<\bu, \bOmega \bx_i\> \<\bx_i,\bv\> - \<\bu,\bv\>\Big)\,.\label{eq:R-1}
\end{align}

Fix $\bu\in \cF_1$ and $\bv \in \cF_2$. 
Let $\xi_i \equiv \<u, \bOmega \bx_i\> \<\bx_i,\bv\> - \<\bu,\bv\>$.
The variables $\xi_{i}$ are independent and it is easy to see that $\E(\xi_i) = 0$. 
Throughout, let $\|\cdot\|_{\psi_1}$ and $\|\cdot\|_{\psi_2}$ respectively denote
the sub-exponential and sub-gaussian norms.
By~\cite[Remark 5.18]{Vershynin-CS}, we have
$$ \|\xi_i\|_{\psi_1} \le 2\|\<\bu, \bOmega \bx_i\> \<\bx_i,\bv\>\|_{\psi_1}\,.$$
Moreover, recalling that for any two random variables $X, Y$,
$\|XY\|_{\psi_1}\le 2\|X\|_{\psi_2}\|Y\|_{\psi_2}$
\cite{Vershynin-CS}, we have
\begin{align*}
\|\<\bu, \bOmega \bx_i\> \<\bx_i,\bv\>\|_{\psi_1} &\le 2\|\<\bu, \bOmega\bx_i\>\|_{\psi_2} \|\<\bx_i,\bv\>\|_{\psi_2}\\
 &= 2\|\bOmega^{1/2}\bu\|_2 \|\bOmega^{-1/2}\bv\|_2 \|\bOmega^{1/2}\bx_i\|_{\psi_2}^2\\
 &\le 2\sqrt{\sigma_{\max}(\bSigma)/\sigma_{\min}(\bSigma)} \|\bOmega^{1/2}\bx_i\|_{\psi_2}^2\,.
\end{align*}
Since $\bOmega^{1/2}\bx_i \sim \normal(0,\id)$, we have $ \|\bOmega^{1/2}\bx_i\|_{\psi_2} =1$, and thus $\max_{i\in[n]}\|\xi_i\|_{\psi_1} \le C$, for some constant $C = C(c_{\rm min}, c_{\rm max})$. Now, by applying Bernstein inequality for centered sub-exponential random
variables~\cite{Vershynin-CS}, for every $t\ge 0$, we have
\[
\prob\bigg(\frac{1}{n} \sum_{i=1}^n \xi_i  \ge t \Big) \le 2 \exp\bigg[-cn\min\Big(\frac{t^2}{C^2}, \frac{t}{C}\Big) \bigg]\,,
\]
where $c>0$ is an absolute constant. Therefore, for any constant
$c_1>0$, since $n = \omega(s_0\log p)$, we have
\begin{eqnarray}\label{eq:R-2}
\prob\bigg(\frac{1}{n} \sum_{i=1}^n \xi_i \ge C\sqrt{\frac{c_1 s_0\log p}{cn}} \bigg) \le  p^{-c_1s_0}\,.
\end{eqnarray}
In order to bound the right hand side of Eq.~\eqref{eq:R-1}, we use a \emph{$\eps$-net argument}.
Clearly, $\cF_1\cong S^{|A|-1}$ and $\cF_2 \cong S^{|B|-1}$ where $\cong$ denotes that the two objects are isometric.
By~\cite[Lemma 5.2]{Vershynin-CS}, there exists a $\frac{1}{2}$-net $\cN_1$ of $S^{|A|-1}$
(and hence of $\cF_1$) with size at most $5^{|A|}$. Similarly there exists a $\frac{1}{2}$-net $\cN_2$ of $\cF_2$
of size at most $5^{|B|}$. Hence, using Eq.~\eqref{eq:R-2} and taking union bound over all vectors in ${\cal N}_1$ and ${\cal N}_2$ , we obtain
\begin{align}
\underset{{\bu\in {\cal N}_1}, {\bv \in {\cal N}_2}}{\sup}  \frac{1}{n} \sum_{i=1}^n \<\bu, (\bOmega \bx_i \bx_i^\sT - \id)\bv\> 
 \le C\sqrt{\frac{c_1 s_0\log p}{cn}}\,, \label{eq:R-3}
\end{align}
with probability at least $1- 5^{|A|+|B|} p^{-c_1s_0}$.

The last part of the argument is based on the following lemma, whose
proof is standard
(see e.g. \cite{Vershynin-CS} or \cite[Appendix D]{JavanMon-OptSample}).
\begin{lemma}\label{lem:net}
Let $\mtx{M} \in \reals^{p\times p}$. Then,
$$ \underset{\bu\in \cF_1, \bv\in \cF_2}{\sup} \<\bu,\mtx{M}\bv\> \le 4 \underset{u\in {\cal N}_1, v\in {\cal N}_2}{\sup} \<\bu,\mtx{M}\bv\>\,.$$
\end{lemma}
Employing Lemma~\ref{lem:net} and bound~\eqref{eq:R-3} in Eq.~\eqref{eq:R-1}, we arrive at
\begin{align}\label{eq:R-4}
\|\bR\|_2\le 4C\sqrt{\frac{c_1 s_0\log p}{cn}}\,,
\end{align}
with probability at least $1- 5^{|A|+|B|} p^{-c_1s_0}$.

Finally, note that there are  less than $p^{2c_0s_0}$ pairs of subsets $A,B$,
with $|A|$, $|B|\le c_0s_0$.
Taking union bound over all these sets, we obtain that with high probability,
$$\|(\bOmega\hbSigma-\id)_{A,B}\|_2 \le  K \sqrt{s_0 \log p /n}\,,$$
for all such sets $A, B$, where $K = K(c_0, c_{\min},c_{\max})$
is a constant.


\newpage

\bibliographystyle{ieeetr}
\bibliography{all-bibliography}

\end{document}